\newcommand{\U}{}
\newcommand\appendices{\appendix}
\definecolor{ForestGreen}{rgb}{0, 0.5, 0}
\upshape\color{ForestGreen},
\scriptsize\color{gray},
\theoremstyle{plain}
\newtheorem{theorem}{Theorem}
\newtheorem{lemma}{Lemma}
\newtheorem*{rethm}{\retitle}
\theoremstyle{definition}
\newtheorem{definition}{Definition}
\newcommand{\retheoremType}[2]{%
  \newenvironment{#1}[1]{%
    \def\retitle{#2 \ref{##1}}%
    \begin{rethm}%
  }{\end{rethm}}%
}
\crefname{table}{Table}{Table}
\title{Nonmalleable Information Flow Control: Technical Report}
\author{Ethan Cecchetti \\
        Department of Computer Science \\
        Cornell University \\
        \href{mailto:ethan@cs.cornell.edu}{\large\tt ethan@cs.cornell.edu}
   \and Andrey C. Myers \\
        Department of Computer Science \\
        Cornell University \\
        \href{mailto:andru@cs.cornell.edu}{\large\tt andru@cs.cornell.edu}
   \and Owen Arden \\
        Department of Computer Science \\
        University of California, Santa Cruz\thanks{Work done while at Harvard University} \\
        \href{mailto:owen@soe.ucsc.edu}{\large\tt owen@soe.ucsc.edu}}
\date{}
\begin{document}
  \ifx\case\undefined
    \newenvironment{case}[1][]{\vspace{6pt}\noindent Case #1:}{}
  \fi

  \advance\belowcaptionskip -0.5em
  \advance\abovecaptionskip -1em

  \maketitle

  \begin{abstract}
  Noninterference is a popular semantic security condition because it
offers strong end-to-end guarantees, it is inherently compositional,
and it can be enforced using a simple security type system.
Unfortunately, it is too restrictive for real systems.
Mechanisms for downgrading information are needed to capture real-world
security requirements, but downgrading eliminates the strong compositional
security guarantees of noninterference.

We introduce _nonmalleable information flow_, a new
formal security condition that generalizes noninterference to permit
controlled downgrading of both confidentiality and integrity.
While previous work on robust declassification
prevents adversaries from exploiting the downgrading of confidentiality,
our key insight is _transparent endorsement_, a mechanism
for downgrading integrity while defending against adversarial
exploitation.  Robust declassification appeared to break the duality of
confidentiality and integrity by making confidentiality depend on
integrity, but transparent endorsement makes
integrity depend on confidentiality, restoring this duality.
We show how to extend a
security-typed programming language
with transparent endorsement and
prove that this static type system enforces nonmalleable information
flow, a new security property that subsumes robust declassification and transparent
endorsement.
Finally, we describe an implementation of this type
system in the context of Flame, a flow-limited authorization plugin
for the Glasgow Haskell Compiler.

  \end{abstract}

  \section{Introduction}
\label{sec:intro}

An ongoing foundational challenge for computer security is to discover
rigorous---yet sufficiently flexible---ways to specify what it means for a
computing system to be secure. Such security conditions should be
_extensional_, meaning that they are based on the externally
observable behavior of the system rather than on unobservable details
of its implementation. To allow security enforcement mechanisms to
scale to large systems, a security condition should also be
_compositional_, so that secure subsystems remain secure when
combined into a larger system.

_Noninterference_, along with many variants~\cite{GM82,sm-jsac}, has
been
a popular security condition precisely because it is both
extensional and compositional. 
Noninterference forbids all flows of information from ``high'' to
``low'', or more generally, flows of information that violate a
lattice policy~\cite{denning-lattice}.

Unfortunately, noninterference is also known to be too restrictive for
most real systems, which need fine-grained control
over when and how information flows.
Consequently, most implementations of information flow control
introduce _downgrading_ mechanisms to allow
information to flow contrary to the lattice
policy. Downgrading confidentiality is called
_declassification_, and downgrading integrity---that is, treating
information as more trustworthy than information that has influenced
it---is known as _endorsement_~\cite{zznm02}.

Once downgrading is permitted, noninterference is lost. The natural
question is whether downgrading can nevertheless be constrained
to guarantee that systems still satisfy some meaningful,
extensional, and compositional security conditions. This paper shows
how to constrain the use of
both declassification and endorsement in a
way that ensures such a security condition holds.

Starting with the work of Biba~\cite{integrity}, integrity has
often been viewed as dual to confidentiality. Over time,
that simple duality has eroded. In particular, work on _robust
declassification_~\cite{zm01b,zznm02,msz06,cm06,am11} has shown that
in the presence of declassification, confidentiality depends on
integrity. It is dangerous to give the adversary the ability to influence
declassification, either by affecting the data that is declassified
or by affecting the decision to perform declassification. By
preventing such influence, robust
declassification stops the adversary from _laundering_ confidential
data through existing declassification operations.
Operationally, languages prevent laundering by restricting declassification to
high integrity program points.
Robust
declassification can be enforced using a modular type system and is therefore
compositional.

This paper introduces a new security condition, _transparent
endorsement_, which is dual to robust declassification: it
controls endorsement by using _confidentiality_ to limit the possible
relaxations of _integrity_.
Transparent endorsement
prevents an agent from endorsing information that the provider of the
information could not have seen.
Such endorsement is dangerous because it permits the provider 
to affect flows from the endorser's own secret information into trusted information.
This restriction on endorsement enforces an often-implicit justification for 
endorsing untrusted inputs in high-integrity, confidential computation (e.g., a password checker): 
low-integrity inputs chosen by an attacker should be chosen without knowledge of secret information.

A similar connection between the confidentiality and integrity of
information arises in cryptographic settings.
A _malleable_ encryption scheme is one where a ciphertext encrypting one value
can be transformed into a ciphertext encrypting a related value.
While sometimes malleability is intentional (e.g., _homomorphic_ encryption),
an attacker's ability to generate ciphertexts makes malleable encryption
insufficient to authenticate messages or validate integrity.
Nonmalleable encryption schemes~\cite{ddn03} prevent such attacks.
In this paper, we combine robust declassification and
transparent endorsement into a new security condition, _nonmalleable information flow_,
which prevents analogous attacks in an information flow control setting.

The contributions of this paper are as follows:
\begin{itemize}
  \item We give example programs showing the need for a
        security condition that controls endorsement of
        secret information.

  \item We generalize _robust declassification_ to programs including complex
    data structures with heterogeneously labeled data.

  \item We identify _transparent endorsement_ and _nonmalleable information
    flow_, new extensional security conditions for programs including
    declassification and endorsement.

  \item We present a core language, \nmlang, which provably enforces robust
    declassification, transparent endorsement, and nonmalleable information
    flow.

  \item We present the first formulation of robust declassification as
    a _4-safety hyperproperty_, and define two new 4-safety
    hyperproperties for transparent endorsement and nonmalleable
    information flow, the first time information security conditions have been
    characterized as $k$-safety hyperproperties with $k > 2$.

  \item We describe our implementation of \nmlang using Flame, a flow-limited
    authorization library for Haskell and adapt an example of the Servant web
    application framework, accessible online at \demourl.
\end{itemize}

We organize the paper as follows.
Section~\ref{sec:motivation} provides examples of vulnerabilities in
prior work. Section~\ref{sec:background} reviews relevant background. 
Section~\ref{sec:typesys} introduces our approach for controlling dangerous endorsements,
and Section~\ref{sec:nmlang} presents a syntax, semantics, and type system for \nmlang.
Section~\ref{sec:condition} formalizes our security conditions and
Section~\ref{sec:hyperproperty} restates them as hyperproperties.
Section~\ref{sec:impl} discusses our Haskell implementation, 
Section~\ref{sec:related} compares our approach to related work,
and Section~\ref{sec:conclusions} concludes.

\section{Motivation}
\label{sec:motivation}

To motivate the need for an additional security condition and give
some intuition about transparent endorsement, we give three short
examples. Each example shows code that type-checks under existing information-flow
type systems even though it contains insecure information flows,
which we are able to characterize in a new way.

These examples use the notation of the flow-limited authorization model
(FLAM)~\cite{flam}, which offers an expressive way to state both
information flow restrictions and authorization policies. However,
the problems observed in these examples are not specific to
FLAM; they arise in all previous
information-flow models that support downgrading (e.g.,~\cite{ml-tosem,
liopriv,Paralocks,flume,asbestos,histar,laminar}).
The approach
in this paper can be applied straightforwardly to the decentralized label model
(DLM)~\cite{ml-tosem}, and with more effort, to DIFC models that are
less similar to FLAM.
While some previous models lack a notion of integrity, from our
perspective they are even worse off, because they effectively 
allow _unrestricted_ endorsement.

In FLAM, principals and information flow labels occupy the same space.
Given a principal (or label) $p$, the notation $p^{→}$ denotes the
confidentiality projection of $p$, whereas the notation $p^{←}$
denotes its integrity projection. Intuitively, $p^{→}$ represents the 
authority to decide where $p$'s secrets may flow _to_, whereas
$p^{←}$ represents the authority to decide where information trusted by $p$ may flow _from_.  Robust declassification ensures that
the label $p^{→}$ can be removed via declassification only in code
that is trusted by $p$; that is, with integrity $p^{←}$.

Information flow policies provide a means to specify security
requirements for a program, but not an enforcement mechanism.  For
example, confidentiality policies might be implemented using
encryption and integrity policies using digital signatures.
Alternatively, hardware security mechanisms such as memory protection
might be used to prevent untrusted processes from reading confidential
data.  The following examples illustrate issues that would arise in
many information flow control systems, regardless of the enforcement
mechanism.

\subsection{Fooling a password checker}
\label{sec:foolpwd}

\begin{figure}
\begin{lstlisting}[xleftmargin=1.5em,numbers=left]
String(*$_T$*) password;

boolean(*$_{T^{←}}$*) check_password(String(*$_{T^{→}}$*) guess) {
    boolean(*$_T$*) endorsed_guess = endorse(guess, (*$T$*)); (*\label{li:pwd:endorse}*)
    boolean(*$_T$*) result = (endorsed_guess == password);
    return declassify(result, (*$T^{←}$*)); (*\label{pwd-declassify}*)
}
\end{lstlisting}
\caption{A password checker with malleable information flow}
\label{fig:insecure-pwd}
\end{figure}

Password checkers are frequently used as an example of necessary and justifiable
downgrading.
However, incorrect downgrading can allow an attacker who does not know the
password to authenticate anyway.
Suppose there are two principals, a fully trusted
principal $T$ and an untrusted principal $U$. The following
information flows are then secure: $U^{→} ⊑ T^{→}$ and
$\integ T ⊑ \integ U$. Figure \ref{fig:insecure-pwd} shows in pseudo-code how
we might erroneously implement a password checker in a security-typed language
like Jif~\cite{myers-popl99}. Because this pseudo-code would satisfy
the type system, it might appear to be secure.

The argument "guess" has no integrity because it is
supplied by an untrusted, possibly adversarial source. It is necessary to
declassify the result of the function (at line~\ref{pwd-declassify})
because the result indeed leaks a little
information about the password.
Robust declassification, as enforced in Jif, demands that the untrusted "guess"
be endorsed before it can influence information released by declassification.

Unfortunately, the "check_password" policy does not prevent faulty or malicious
(but well-typed) code from supplying "password" directly as the argument,
thereby allowing an attacker with no knowledge of the correct password to
``authenticate.''
Because "guess" is labeled as secret ($T^{→}$), a flow of information from
"password" to "guess" looks secure to the type system, so this severe
vulnerability could remain undetected.
To fix this we would need to make "guess" less secret, but no prior work has
defined rules that would require this change.
The true insecurity, however, lies on line~\ref{li:pwd:endorse}, which
erroneously treats sensitive information as if the attacker had constructed it.
We can prevent this insecurity by outlawing such endorsements.

\subsection{Cheating in a sealed-bid auction}
\label{sec:cheat-auction}

\begin{figure}
\begin{center}
\begin{tikzpicture} [auto, align=center, text width=4em,node distance=4.75em, thick, x=3.5em]
  \node (1) at (0,0.5) {A};
  \node (11) at (0,0) {"a_bid"};
  \node (12) at (0,-1.5) {};
  \node (13) at (0,-2.5) {"b_bid"};
  \node (13B) at (6,-2.5) {};
  \node (14) at (0,-3.5) {"b_bid"};

  \node (T) at (3,0.5) {T};
  \node (TA) at (3,-0.5) {"a_bid"};
  \node (TAb) at (3,-0.7) {};
  \node (TB) at (3,-2) {"b_bid"};
  \node (TBb) at (3,-2.2) {};
  \node (T4) at (3,-3.5) {Bids};

  \node (2) at (6,0.5) {B};
  \node (21) at (6,-1) {"a_bid"};
  \node (21A) at (0,-1) {};
  \node (22) at (6,-1.5) {"b_bid"};
  \node (24) at (6,-3.5) {"a_bid"};
  \node (T1) at (3,0.5) {T};

 \path[->,every node/.style={font=\sffamily\small}]
    (11) edge node[xshift=-1.25em, midway, above, sloped] {\scriptsize\begin{tabular}{c}$A^{←}∧(A∧B)^{→}$ \end{tabular}} (TA)
    (TAb) edge node[midway, above, sloped] {\scriptsize\begin{tabular}{c}$(A∧B)$ \end{tabular}} (21)
    (TAb) edge node[midway, above, sloped] {\scriptsize\begin{tabular}{c}$(A∧B)$ \end{tabular}} (21A)
    (22) edge node[xshift=-1.25em, midway, above, sloped] {\scriptsize\begin{tabular}{c}$B^{←}∧(A∧B)^{→}$ \end{tabular}} (TB)
    (TBb) edge node[midway, above, sloped] {\scriptsize\begin{tabular}{c}$(A∧B)$ \end{tabular}} (13)
    (TBb) edge node[midway, above, sloped] {\scriptsize\begin{tabular}{c}$(A∧B)$ \end{tabular}} (13B)
    (T4) edge node[xshift=-2.5em, above] {\scriptsize\begin{tabular}{c}open "b_bid"\\ $(A ∧ B)^{←} ∧ (A ∨ B)^{→} $ \end{tabular} } (14)
    (T4) edge node[xshift=-2.5em, above] {\scriptsize\begin{tabular}{c}open "a_bid"\\ $(A ∧ B)^{←} ∧ (A ∨ B)^{→} $ \end{tabular} } (24);
\end{tikzpicture}
\end{center}
\caption{Cheating in a sealed-bid auction. Without knowing Alice's bid, Bob can always win by setting \texttt{b\_bid := a\_bid + 1}}
\label{fig:auction}
\end{figure}
Imagine that two principals $A$ and $B$ (Alice and Bob) are engaging in a
two-party sealed-bid auction administered by an auctioneer $T$ whom they both
trust. Such an auction might be implemented using cryptographic commitments
and may even simulate $T$ without need of an actual third party.
However, we
abstractly specify the information security requirements that such a scheme
would aim to satisfy. Consider
the following sketch of an auction protocol, illustrated in Figure~\ref{fig:auction}:

\begin{enumerate}[leftmargin=*]
  \item \label{auc:li:A-bid}
    $A$ sends her bid "a_bid" to $T$ with label $A^{←}∧(A∧B)^{→}$.
    This label means "a_bid" is trusted only by those who trust $A$
    and can be viewed only if _both_ $A$ and $B$ agree to release it.

  \item \label{auc:li:T-accept}
    $T$ accepts "a_bid" from $A$ and uses his authority to endorse
    the bid to label $(A∧B)^{←}∧(A∧B)^{→}$ (identically, $A∧B$).
    The endorsement prevents any further unilateral modification to the bid by $A$.
    $T$ then broadcasts this endorsed "a_bid" to $A$ and $B$.
    This broadcast corresponds to an assumption that network
    messages can be seen by all parties.

  \item \label{auc:li:B-bid}
    $B$ constructs "b_bid" with label $B^{←}∧(A∧B)^{→}$ and sends it to $T$.

  \item \label{auc:li:A-accept}
    $T$ endorses "b_bid" to $A∧B$ and broadcasts the result.
  \item \label{auc:li:declassify}
    $T$ now uses its authority to declassify both bids and send them
    to all parties. Since both bids have high integrity,
    this declassification is legal according to existing typing rules
    introduced to enforce (qualified) robust
    declassification~\cite{msz06,cm06,flam}.
\end{enumerate}

Unfortunately, this protocol is subject to attacks analogous to _mauling_ in 
malleable cryptographic schemes~\cite{ddn03}:  $B$ can always win the
auction with the minimal winning bid.  In Step~\ref{auc:li:B-bid}
nothing prevents $B$ from constructing "b_bid" by adding 1 to "a_bid",
yielding a new bid with label $B^{←}∧(A∧B)^{→}$ (to modify the value,
$B$ must lower the value's integrity as $A$ did not authorize the modification).

Again an insecurity stems from erroneously endorsing overly secret
information. In step~\ref{auc:li:A-accept}, $T$ should not endorse
"b_bid" since it could be based on confidential information inaccessible
to $B$---in particular, "a_bid".
The problem can be fixed by giving $A$'s bid the label $A^{→}∧A^{←}$
(identically, just $A$), but existing information flow systems impose
no such requirement.

\subsection{Laundering secrets}
\label{sec:launder-sec}

Wittbold and Johnson~\cite{wj90} present an
interesting but insecure program:

\begin{lstlisting}[morekeywords={output},xleftmargin=1.5em,numbers=left]
while (true) do {
    x = 0 (*$[\!]$*) x = 1;  // generate secret probabilistically
    output x to (*$H$*);
    input y from (*$H$*);   // implicit endorsement
    output x (*⊕*) (y mod 2) to (*$L$*)
}
\end{lstlisting}

In this code, there are two external agents, $H$ and $L$. Agent $H$ is
intended to have access to secret information, whereas $L$ is not.
The code generates a secret by assigning to the variable "x" a
nondeterministic, secret value that is either 0 or 1. The choice of
"x" is assumed not to be affected by the
adversary. Its value is used as a one-time pad to conceal the secret
low bit of variable "y".

Wittbold and Johnson observe that this code permits an adversary to launder one bit of
another secret variable "z" by sending "z⊕x" as the value
read into "y".  The low bit of "z" is then the output to $L$.

Let us consider this classic example from the viewpoint of a modern
information-flow type system that enforces robust declassification.
In order for this code to type-check, it must declassify the value
"x⊕(y mod 2)". Since the attack depends on "y" being affected by
adversarial input from $H$, secret input from $H$ must be low-integrity (that is,
its label must be $H^{→}$). But if it is low-integrity, this input
(or the variable "y") must be endorsed to allow the declassification
it influences. As in the previous two examples, the endorsement of
high-confidentiality information enables exploits.

\section{Background}
\label{sec:background}

We explore nonmalleable information flow in the context of a simplified
version of FLAM~\cite{flam}, so we first present some background.
FLAM provides a unified model for reasoning about both information
flow and authorization. Unlike in previous models, principals and
information flow labels in FLAM are drawn from the same set $\L$. The
interpretation of a label as a principal is the least powerful
principal trusted to enforce that label. The interpretation of a
principal as a label is the strongest information security policy that
principal is trusted to enforce. We refer to elements of $\L$
as principals or labels depending on whether we are talking about
authorization or information flow.

Labels (and principals) have both confidentiality and integrity
aspects.  A label (or principal) $\ell$ can be projected to capture just
its confidentiality ($\ell^{→}$) and integrity ($\ell^{←}$) aspects. 

The information flow ordering $⊑$ on labels (and principals) describes
information flows that are secure, in the direction of increasing
confidentiality and decreasing integrity. The orthogonal trust
ordering $≽$ on principals (and labels) corresponds to increasing
trustedness and privilege: toward
increasing confidentiality and _increasing_ integrity.
We read $ℓ⊑ℓ'$ as ``$ℓ$ flows to $ℓ'$'', meaning $ℓ'$ specifies a policy at
least as restrictive as $ℓ$ does.
We read $p≽q$ as ``$p$ acts for $q$'', meaning that $q$ delegates to $p$.

The information flow and the trust orderings each define a lattice over
$\L$, and these lattices lie intuitively at right angles to one another. The least
trusted and least powerful principal is $⊥$, (that is, $p≽⊥$ for all
principals $p$), and the most trusted and powerful principal is $⊤$
(where $⊤≽p$ for all $p$). We also assume there is a set of
_atomic principals_ like "alice" and "bob" that define their own delegations.

Since the trust ordering defines a lattice, it has meet and join
operations.
Principal $p∧q$ is the least powerful principal that can
act for both $p$ and $q$; conversely, $p∨q$ can act for all
principals that both $p$ and $q$ can act for.
The least element in the information flow ordering is $⊤^{←}$,
representing maximal integrity and minimal confidentiality, whereas
the greatest element is $⊤^{→}$, representing minimal integrity and
maximal confidentiality. The join and meet operators in the
information flow lattice are the usual ⊔ and ⊓, respectively.

Any principal (label) can be expressed in a normal form $p^{→}∧q^{←}$
where $p$ and $q$ are CNF formulas over atomic principals~\cite{flam}.
This normal form allows us to decompose decisions about lattice ordering (in
either lattice) into separate questions regarding the integrity component ($p$)
and the confidentiality component ($q$).
Lattice operations can be similarly decomposed.

FLAM also introduces the concept of the _voice_ of a label (principal) $ℓ$,
written $\voice{ℓ}$.
Formally, for a normal-form label $ℓ = p^{→} ∧ q^{←}$, we define
voice as follows:
$\voice{p^{→} ∧ q^{←}} ~\triangleq~ p^{←}$.\footnote
{FLAM defines $\voice{p^{→} ∧ q^{←}} = p^{←} ∧ q^{←}$,
 but our simplified definition is sufficient for \nmlang.
 For clarity, the operator $\voice*$
 is always applied to a projected principal.}
A label's voice represents the minimum integrity needed to securely declassify
data constrained by that label, a restriction designed to enforce robust
declassification.

The Flow-Limited Authorization Calculus (FLAC)~\cite{flac} previously embedded
a simplified version of the FLAM proof system into a core language for
enforcing secure authorization and information flow.
FLAC is an extension of the Dependency Core Calculus (DCC)~\cite{ccd99,abadi06}
whose types contain FLAM labels.
A computation is additionally associated with a program-counter label $\pc$
which tracks the influences on the control flow and values that are not
explicitly labeled.

In this paper we take a similar approach: \nmlang enforces
security policies by performing computation in a monadic context.
As in FLAC, \nmlang includes a $\pc$ label.
For an ordinary value $v$, the monadic term $\return{ℓ}{v}$
signifies that value with the information flow label $ℓ$. If
value $v$ has type $τ$, the term $\return{ℓ}{v}$ has type
$\says{ℓ}{τ}$, capturing the confidentiality and integrity
of the information.

Unlike FLAC, \nmlang has no special support for dynamic
delegation of authority. Atomic principals define $\L$
by statically delegating their authority to arbitrary conjunctions and disjunctions of other
principals, and we include traditional
declassification and endorsement operations, "decl" and "endorse".  We leave to
future work the integration of nonmalleable information flow with
secure dynamic delegation.

\section{Enforcing nonmalleability}
\label{sec:typesys}

Multiple prior security-typed languages---both
functional~\cite{flac} and imperative~\cite{msz06,cm06,am11}---aim to allow
some form of secure downgrading.
These languages place no restriction whatsoever on the confidentiality of
endorsed data or the context in which an endorsement occurs.
Because of this permissiveness, all three insecure examples from
Section~\ref{sec:motivation} type-check in these languages.

\subsection{Robust declassification}

Robust declassification prevents adversaries from using
declassifications in the program to release information that was not
intended to be released.
The adversary is assumed to be able to
observe some state of the system, whose confidentiality label is
sufficiently low, and to modify some state of the system,
whose integrity label is sufficiently low. Semantically,
robust declassification says that if the attacker is unable to
learn a secret with one attack, no other attack will cause
it to be revealed~\cite{zm01b,msz06}.
The attacker has no control over information
release because all attacks are equally good.
When applied to a decentralized system, robust declassification
means that for any principal $p$, other principals that $p$ does not trust
cannot influence declassification of $p$'s secrets~\cite{cm06}.

To enforce robust declassification, prior security-typed
languages place integrity constraints on
declassification.
The original work on FLAM enforces robust declassification
using the voice operator $\voice*$. However, when declassification is
expressed as a programming-language operation, as is more typical, it
is convenient to define a new operator on labels, one that maps in the
other direction, from integrity to confidentiality.  We define the
_view_ of a principal as the upper bound on the confidentiality a label
or context can enforce to securely endorse that label:

\begin{definition}[Principal view]
  Let $ℓ = p^{→} \tjoin q^{←}$ be a FLAM label (principal) expressed in normal form.
  The _view_ of $ℓ$, written $\view{ℓ}$, is defined as
  $\view{p^{→} \tjoin q^{←}} ~\triangleq~ q^{→}$.
\end{definition}

When the confidentiality of a label $ℓ$ lies above the view of
its own integrity, a declassification of that label may give
adversaries the opportunity to subvert the declassification to
release information. Without enough integrity, an adversary might,
for example, replace the information that is intended to
be released via declassification with some other secret.

Figure~\ref{fig:robdecl} illustrates this idea graphically.  It
depicts the lattice of FLAM labels, which is a product lattice with
two axes, confidentiality and integrity. A given label $ℓ$ is a point
in this diagram, whereas the set of labels sharing the same
confidentiality $ℓ^{→}$  or integrity $ℓ^{←}$ correspond to lines
on the diagram. Given the integrity $ℓ^{←}$ of the label $ℓ$, the
view of that integrity, $\view{ℓ^{←}}$, defines a region of information
(shaded) that is too confidential to be declassified.

The view operator directly corresponds to the writers-to-readers
operator that \citet{cm06} use to enforce robust declassification in
the DLM. We generalize the same idea here to the more expressive
labels of FLAM.

\begin{figure}
\begin{center}
\includegraphics[width=20em]{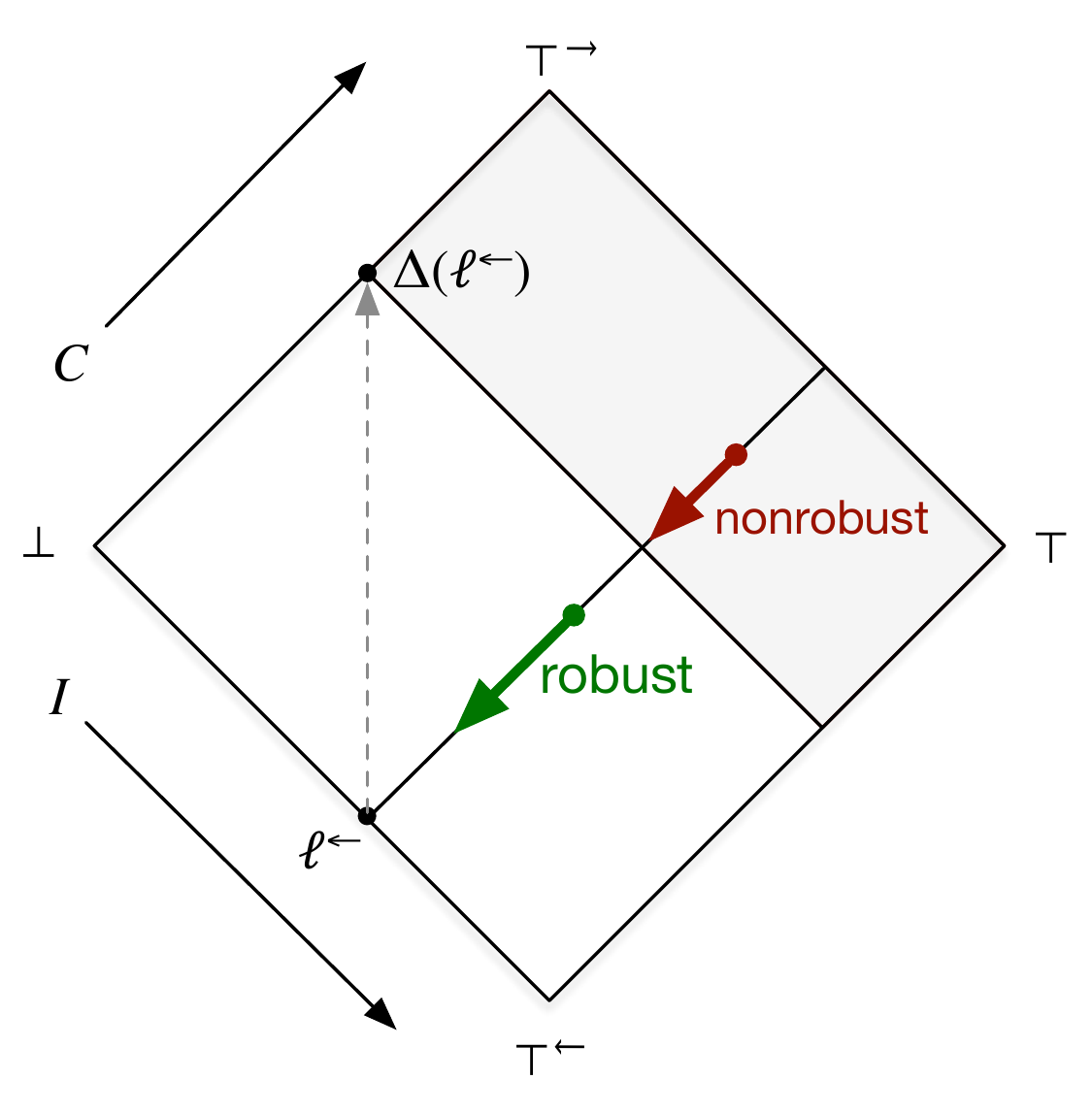}
\end{center}
\caption{Robust declassification says information at level $\ell$
can be declassified only if it has enough integrity.
The gray shaded region represents information that $\view{\integ{\ell}}$
cannot read, so it is unsafe to declassify with $\ell$'s integrity.}
\label{fig:robdecl}
\end{figure}

\subsection{Transparent endorsement}

The key insight of this work is that endorsement should be restricted
in a manner dual to robust declassification; declassification (reducing confidentiality)
requires a minimum integrity, so endorsement (raising integrity) should require
a _maximum_ confidentiality.
Intuitively, if a principal could have written data it cannot read, which we
call an ``opaque write,'' it is unsafe to endorse that data.
An endorsement is _transparent_ if it endorses only
information its authors could read.

The voice operator suffices to express this new restriction
conveniently, as depicted in Figure~\ref{fig:transparent}. In the
figure, we consider endorsing information with confidentiality
$ℓ^{→}$.  This confidentiality is mapped to a corresponding integrity
level $\voice{ℓ^{→}}$, defining a minimal integrity level that $ℓ$
must have in order to be endorsed. If $ℓ$ lies below this boundary,
its endorsement is considered transparent; if it lies above the
boundary, endorsement is _opaque_ and hence insecure.
The duality with robust declassification is clear.

\begin{figure}
\begin{center}
\includegraphics[width=20em]{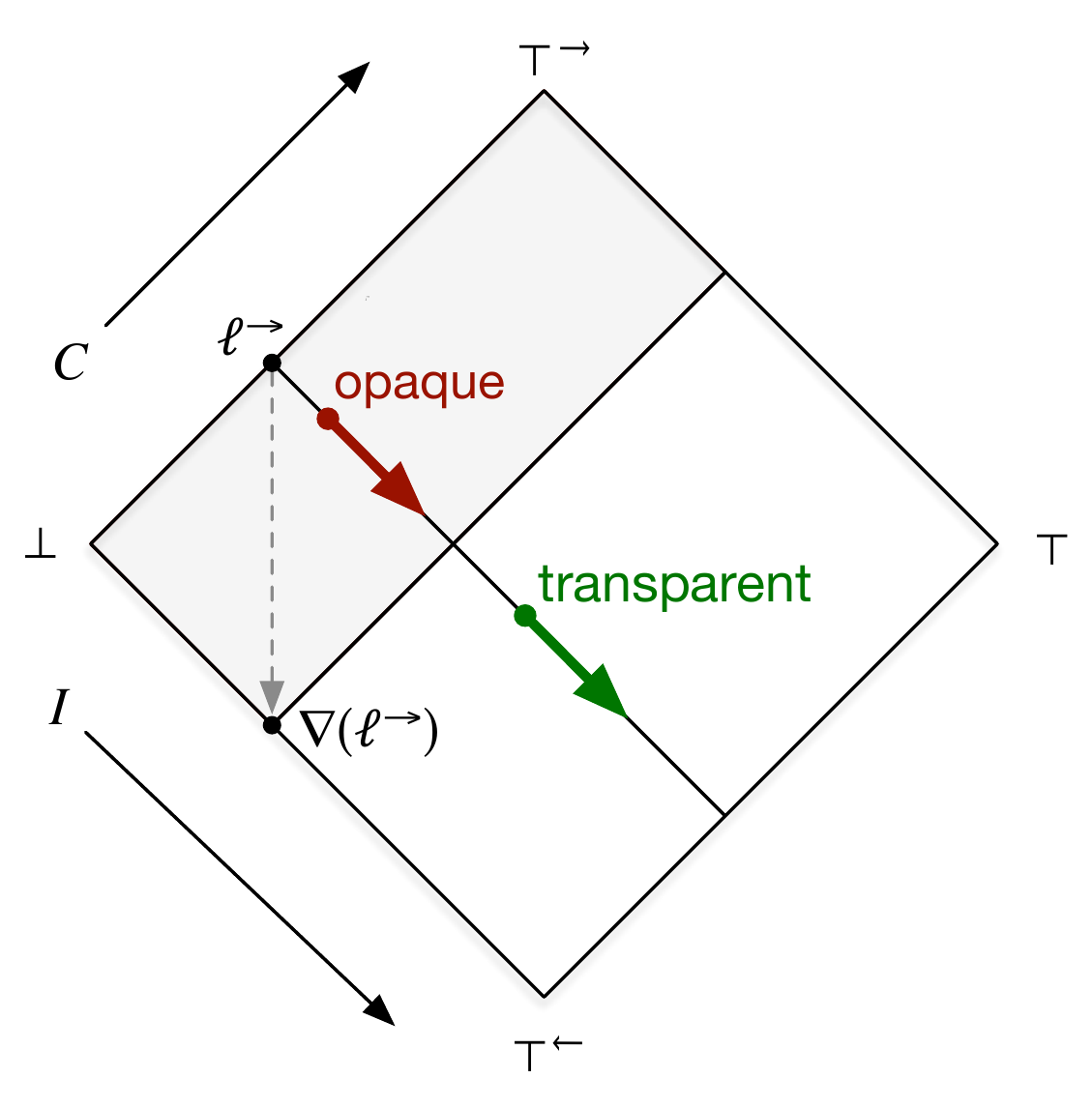}
\end{center}
\caption{Transparent endorsement in \nmlang.
The gray shaded region represents information that $\voice{\confid{\ell}}$ does
not trust and may have been created by an opaque write.
It is thus unsafe to endorse with $\ell$'s confidentiality.}
\label{fig:transparent}
\end{figure}

\section{A core language: \nmlang}
\label{sec:nmlang}

\begin{figure}
  \small
  \[
    \begin{array}{rcl}
      \multicolumn{2}{l}{ n ∈ \N}  & \text{\qquad (atomic principals)} \\
      \multicolumn{2}{l}{ x ∈ \mathcal{V}} & \text{\qquad (variable names)} \\
      \multicolumn{2}{l}{ π ∈ \{ →, ← \}} & \text{\qquad (security aspects)} \\
      \\
      p,ℓ,\pc &::=&  n \sep \top \sep \bot \sep p^{π} \sep p ∧ p\sep p ∨ p \sep p ⊔ p \sep p ⊓ p \\[0.8em]
      τ &::=& \voidtype \sep \func{τ}{\pc}{τ} \sep \says{ℓ}{τ} \\[0.8em]
      v &::=& \void \sep \lamc{x}{τ}{\pc}{e} \sep \vreturn{ℓ}{v} \\[0.8em]
      e &::=& x \sep v \sep e~e \sep \return{ℓ}{e} \sep \bind{x}{e}{e} \\[0.4em]
        & \sep & \cdowngrade{e}{ℓ} \sep \idowngrade{e}{ℓ}
    \end{array}
  \]
\caption{Core \nmlang syntax.}
\label{fig:core-syntax}
\end{figure}
We now describe the NonMalleable Information Flow Calculus (\nmlang), a new
core language, modeled on DCC and FLAC, that allows downgrading, but in a more
constrained manner than FLAC so as to provide stronger semantic guarantees.
\nmlang incorporates the program-counter label $\pc$ of FLAC, but eschews the
more powerful "assume" mechanism of FLAC in favor of more traditional
declassify and endorse operations.

The full \nmlang is a small extension of Polymorphic DCC~\cite{abadi06}.
In Figure~\ref{fig:core-syntax} we present the core syntax, leaving
other features such as sums, pairs, and polymorphism to Appendix~\ref{sec:full-nmlang}.
Unlike DCC, \nmlang supports downgrading and models it as an effect. It is
necessary to track what information influences control flow so that
these downgrading effects may be appropriately constrained. Therefore,
like FLAC, \nmlang adds $\pc$ labels to lambda terms and types.

Similarly to DCC, protected values have type $\says{ℓ}{τ}$ where $ℓ$
is the confidentiality and integrity of a value of type $τ$.  All
computation on these values occurs in the "says" monad; protected
values must be bound using the "bind" term before performing operations
on them (e.g., applying them as functions).
Results of such computations are protected with the monadic unit
operator $\return{ℓ}{e}$, which protects the result of $e$ with label
$ℓ$.

\begin{figure}
\begin{flushleft}
  \rulefiguresize
  \boxed{e \stepsone e'}
  \begin{mathpar}
    \erule{E-App}{}{(\lamc{x}{τ}{\pc}{e})~v}{\subst{e}{x}{v}}

    \erule{E-BindM}{}{\bind{x}{\vreturn{ℓ}{v}}{e}}{\subst{e}{x}{v}}
    \hfill
  \end{mathpar}

  {\small
  \begin{align*}
  \text{(event)} \quad c & ::=   •\sep\vreturn{ℓ}{v} \sep \downemit*{π}{v} \\
  \text{(trace)} \quad t & ::= ε \sep c \sep t ; t
  \end{align*}
  }

  \boxed{\evalctx*{e} \stepsonectx \evalctx'{e'}}
  \begin{mathpar}
    \erulectx{E-Step}{e \stepsone e'}{\evalctx*{e}}{\evalctx{e'}{t;\bullet}}

    \erulectx{E-UnitM}{}{\evalctx*{\return{ℓ}{v}}}{\evalctx{\vreturn{ℓ}{v}}{t;\vreturn{ℓ}{v}}}

    \erulectx{E-Decl}{}{\evalctx*{\cdowngrade{\vreturn{ℓ'}{v}}{ℓ}}}{\evalctx{\vreturn{ℓ}{v}}{t;\downemit*{→}{v}}}

    \erulectx{E-Endorse}{}{\evalctx*{\idowngrade{\vreturn{ℓ'}{v}}{ℓ}}}{\evalctx{\vreturn{ℓ}{v}}{t;\downemit*{←}{v}}}

    \erulectx{E-Eval}{\evalctx*{e} \stepsonectx \evalctx'{e'}}{\evalctx*{E[e]}}{\evalctx'{E[e']}}
    \hfill
  \end{mathpar}

  \vspace{1em}
  \underline{\small Evaluation context}
  \[
    \small
    \begin{array}{rcl}
      E & ::= & [\cdot] \sep E~e \sep v~E \sep \return{ℓ}{E} \sep \bind{x}{E}{e} \\[0.4em]
        & \sep & \cdowngrade{E}{ℓ} \sep \idowngrade{E}{ℓ}
    \end{array}
  \]
\end{flushleft}
\caption{Core \nmlang operational semantics.}
\label{fig:core-semantics}
\end{figure}

\subsection{\nmlang operational semantics}

The core semantics of \nmlang are mostly standard, but to obtain our
theoretical results we need additional information
about evaluation.
This information is necessary because we want to identify, for
instance, whether information is ever available to an attacker
during evaluation, even if it is discarded and does not influence the
final result.  This approach gives an attacker more power; an attacker 
can see information at its level even if it is not output by the program.

The \nmlang semantics, presented in Figure~\ref{fig:core-semantics},
maintain a trace $t$ of events.  An event is emitted into the trace whenever a new
protected value is created and whenever a declassification or
endorsement occurs.  These events track
the observations or influence an attacker may have during
a run of an \nmlang program.
Formally, a trace can be an empty trace $ε$, a single event $c$, or
the concatenation of two traces with the associative operator ``;''
with identity $ε$.

When a source-level unit term $\return{ℓ}{v}$ is evaluated (rule
\ruleref{E-UnitM}), an event $\vreturn{ℓ}{v}$ is added to the trace
indicating that the value $v$ became protected at $ℓ$.
When a protected value is declassified, a declassification event
$\downemit*{→}{v}$ is emitted, indicating that $v$ was declassified
from $ℓ'$ to $ℓ$.  Likewise, an endorsement event $\downemit*{←}{v}$
is emitted for an endorsement.  Other evaluation steps (rule
\ruleref{E-Step}) emit $•$, for ``no event.'' Rule \ruleref{E-Eval} steps under the evaluation 
contexts~\cite{wf94} defined at the bottom of Figure~\ref{fig:core-semantics}.

Rather than being literal side effects of the program, these events track
how observable information is as it is accessed, processed, and
protected by the program.  Because our semantics emits an event
whenever information is protected (by evaluating an η term) or
downgraded (by a "decl" or "endorse" term), our traces capture all
information processed by a program, indexed by the policy protecting
that information.

By analogy, these events are similar to the typed and labeled mutable
reference cells of languages like FlowCaml~\cite{ps03} and
DynSec~\cite{zm07}.  An event $\vreturn{ℓ}{v}$ is analogous to
allocating a reference cell protected at $ℓ$, and $\downemit*{π}{v}$
is analogous to copying the contents of a cell at $ℓ'$ to a new cell
at $ℓ$.

It is important for the semantics to keep track of these events so
that our security conditions hold for programs containing data
structures and higher-order functions. Previous language-based
definitions of robust declassification have only
applied to simple "while"-languages~\cite{msz06,cm06,am11} or to primitive
types~\cite{flac}.

\subsection{\nmlang type system}
\label{sec:type-system}

\begin{figure}
\begin{flushleft}
  \rulefiguresize
  \boxed{\protjudge*{ℓ}{τ}}
  \begin{mathpar}
    \protrule{P-Unit}{}{ℓ}{\voidtype}

    \protrule{P-Lbl}{\stflowjudge*{ℓ'}{ℓ}}{ℓ'}{\says{ℓ}{τ}}
    \hfill
  \end{mathpar}
\end{flushleft}
\caption{Type protection levels.}
\label{fig:core-protect}
\end{figure}

The \nmlang protection relation, presented in
Figure~\ref{fig:core-protect}, defines how types relate to information
flow policies.  A type $τ$ protects the confidentiality and integrity
of $ℓ$ if $\protjudge*{ℓ}{τ}$.  Unlike in DCC and
FLAC, a label is protected by a type only if it flows to the
outermost "says" principal.  In FLAC and DCC, the types
$\says{ℓ'}{\says{ℓ}{τ}}$ and $\says{ℓ}{\says{ℓ'}{τ}}$ protect the same
set of principals; in other words, "says" is commutative.  By
distinguishing between these types, \nmlang does not provide the same
commutativity. 

The commutativity of "says" is a design decision, offering
a more permissive programming model at the cost of less
precise tracking of dependencies.  \nmlang takes advantage of this
extra precision in the \ruleref{UnitM} typing rule so the label
on every η term protects the information it contains, even if nested
within other η terms.
Abadi~\cite{ccd08} similarly modifies DCC's protection
relation to distinguish the protection level of terms with nested "says" types. 

\begin{figure}
\begin{flushleft}
  \rulefiguresize
  \boxed{\TValGpc{e}{τ}}
  \begin{mathpar}
    \Rule{Var}{}{\TVal{Γ,x\ty τ,Γ';\pc}{x}{τ}}

    \Rule{Unit}{}{\TValGpc{\void}{\voidtype}}

    \Rule{Lam}{%
      \TVal{Γ,x\ty τ_1;\pc'}{e}{τ_2}
    }{\TValGpc{\lamc{x}{τ_1}{\pc'}{e}}{\func{τ_1}{\pc'}{τ_2}}}

    \Rule{App}{%
      \TValGpc{e_1}{\func{τ'}{\pc'}{τ}} \\\\
      \TValGpc{e_2}{τ'} \\
      \stflowjudge*{\pc}{\pc'}
    }{\TValGpc{e_1~e_2}{τ}}

    \Rule{UnitM}{%
      \TValGpc{e}{τ} \\
      \stflowjudge*{\pc}{ℓ}
    }{\TValGpc{\return{ℓ}{e}}{\says{ℓ}{τ}}}

    \Rule{VUnitM}{
      \TValGpc{v}{τ}
    }{\TValGpc{\vreturn{ℓ}{v}}{\says{ℓ}{τ}}}

    \Rule{BindM}{%
      \TValGpc{e}{\says{ℓ}{τ'}} \\
      \protjudge*{ℓ}{τ} \\\\
      \TVal{Γ,x\ty τ';\pc ⊔ ℓ}{e'}{τ} \\
    }{\TValGpc{\bind{x}{e}{e'}}{τ}}

    \Rule{Decl}{%
      \TValGpc{e}{\says{ℓ'}{τ}} \\
      ℓ'^{←} = ℓ^{←} \\
      \stflowjudge*{\pc}{ℓ} \\\\
      \stflowjudge*{ℓ'^{→}}{ℓ^{→} ⊔ \view{(ℓ' ⊔ \pc)^{←}}} \\
    }{\TValGpc{\cdowngrade{e}{ℓ}}{\says{ℓ}{τ}}}

    \Rule{Endorse}{%
      \TValGpc{e}{\says{ℓ'}{τ}} \\
      ℓ'^{→} = ℓ^{→} \\
      \stflowjudge*{\pc}{ℓ} \\\\
      \stflowjudge*{ℓ'^{←}}{ℓ^{←} ⊔ \voice{(ℓ' ⊔ \pc)^{→}}} \\
    }{\TValGpc{\idowngrade{e}{ℓ}}{\says{ℓ}{τ}}}
    \hfill
  \end{mathpar}
\end{flushleft}
\caption{Typing rules for core \nmlang.}
\label{fig:core-types}
\end{figure}
The core type system presented in Figure~\ref{fig:core-types} enforces
nonmalleable information flow for \nmlang programs.  Most of the
typing rules are standard, and differ only superficially from DCC and
FLAC.  Like in FLAC, \nmlang typing judgments include a program
counter label, \pc, that represents an upper bound on the
confidentiality and integrity of bound information that any
computation may depend upon.  For instance, rule \ruleref{BindM}
requires the type of the body of a "bind" term to protect the
unlabeled information of type $τ'$ with at least $ℓ$, and to type-check
under a raised program counter label $\pc ⊔ ℓ$.  Rule
\ruleref{Lam} ensures that function bodies type-check with respect to
the function's $\pc$ annotation, and rule \ruleref{App} ensures
functions are only applied in contexts that flow to these annotations.

The \nmlang rule for \ruleref{UnitM} differs from FLAC and DCC
in
requiring the premise $\stflowjudge*{\pc}{ℓ}$ for well-typed $η$
terms.  This premise ensures a more precise relationship between the
$\pc$ and $η$ terms.  Intuitively this restriction makes sense. The $\pc$
is a bound on all unlabeled information in the context.  Since an expression
$e$ protected with $\return{ℓ}{e}$ may depend on any of this
information, it makes sense to require that $\pc$ flow to $ℓ$.%
\footnote{The premise is not required in FLAC because protection is commutative.
For example, in a FLAC term such as $\bind{x}{v}{\return{ℓ'}{\return{ℓ}{x}}}$, $x$ 
may be protected by $ℓ$ or $ℓ'$.}

By itself, this restrictive premise would prevent public data from
flowing through secret contexts and trusted data from flowing through untrusted contexts. 
To allow such flows, we distinguish source-level $\return{ℓ}{e}$ terms from
run-time values $\vreturn{ℓ}{v}$, which have been fully evaluated.
These terms are only created by the operational semantics during evaluation and
no longer depend on the context in which they appear; they are closed terms.
Thus it is appropriate to omit the new premise in \ruleref{VUnitM}.
This approach allows us to require more precise flow tracking for the
explicit dependencies of protected expressions without restricting
where these values flow once they are fully evaluated.

Rule \ruleref{Decl} ensures a declassification from label $ℓ'$ to $ℓ$
is robust.
We first require $ℓ'^{←} = ℓ^{←}$ to ensure that this does not perform
endorsement. A more permissive premise $\stflowjudge*{ℓ'^{←}}{ℓ^{←}}$ is admissible, 
but requiring equality simplifies our proofs and does not reduce expressiveness since the declassification
can be followed by a subsequent relabeling.
The premise $\stflowjudge*{\pc}{ℓ}$ requires that
declassifications occur in high-integrity contexts, and prevents
declassification events from creating implicit flows.
The premise $\stflowjudge{ℓ'^{→}}{ℓ^{→} ⊔ \view{(ℓ' ⊔ \pc)^{←}}}$ ensures that
the confidentiality of the information declassified does not exceed the view of
the integrity of the principals that may have influenced it.
These influences can be either explicit ($ℓ'^{←}$) or implicit ($\pc^{←}$), so
we compare against the join of the two.\footnote{
  The first two premises---$ℓ'^{←} = ℓ^{←}$ and $\stflowjudge*{\pc}{ℓ}$---make
  this join redundant. It would, however, be necessary if we replaced the
  equality premise with the more permissive $\stflowjudge*{ℓ'^{←}}{ℓ^{←}}$
  version, so we include it for clarity.}
This last premise effectively combines the two conditions identified by
\citet{cm06} for enforcing robust declassification in an imperative "while"-language.

Rule \ruleref{Endorse} enforces transparent endorsement.
All but the last premise are straightforward: the expression does not
declassify and $\stflowjudge*{\pc}{ℓ}$ requires a high-integrity context to
endorse and prevents implicit flows.
Interestingly, the last premise is dual to that in \ruleref{Decl}.
An endorsement cannot raise integrity above the voice of the confidentiality of
the data being endorsed ($ℓ'^{→}$) or the context performing the endorsement
($\pc^{→}$).
For the same reasons as in \ruleref{Decl}, we compare against their join.

\subsection{Examples revisited}

We now reexamine the examples presented in Section~\ref{sec:motivation}
to see that the \nmlang type system prevents the vulnerabilities seen above.

\subsubsection{Password checker}

We saw above that when the password checker labels "guess" at $T^{→}$,
well-typed code can improperly set "guess" to the actual password.
We noted that the endorsement enabled an insecure flow of information.
Looking at \ruleref{Endorse} in \nmlang, we can attempt to type the equivalent
expression: $\idowngrade{\it guess}{T}$.
However, if _guess_ has type $\says{T^{→}}{"bool"}$, the "endorse" does not type-check;
it fails to satisfy the final premise of \ruleref{Endorse}:
\[ \notstflowjudge*{⊥^{←} = (T^{→})^{←}}{T^{←} ⊔ \voice{T^{→}} = T^{←}}. \]
If we instead give "guess" the label $U^{←}$, the endorsement
type-checks, assuming a sufficiently trusted $\pc$.

This is as it should be.
With the label $U^{←}$, the guesser must be able to read their own guess,
guaranteeing that they cannot guess the correct password unless they in fact
know the correct password.
Figure~\ref{fig:sec-pwd} shows this secure version of the password checker.

\begin{figure}
\small
\begin{align*}
  & \texttt{checkpwd} = \lamcs{g \ty \says{U^{←}}{"String"}, p \ty \says{T}{"String"}}{T^{←}}{} \\
  & \quad \bind{\it guess}{(\idowngrade{g}{T^{←}})}{} \\
  & \qquad \cdowngrade{(\bind{\it pwd}{p}{{\return{T}{\mathit{pwd} == \mathit{guess}}}})}{T^{←}}
\end{align*}
\caption{A secure version of a password checker.}
\label{fig:sec-pwd}
\end{figure}

\subsubsection{Sealed-bid auction}

In the insecure auction described in Section~\ref{sec:cheat-auction}, we argued
that an insecure flow was created when $T$ endorsed "b_bid" from
$B^{←}∧(A∧B)^{→}$ to $A∧B$.
This endorsement requires a term of the form $\idowngrade{v}{A∧B}$ where $v$
types to $\says{B^{←}∧(A∧B)^{→}}{"int"}$.
Despite the trusted context, the last premise of \ruleref{Endorse} again fails:
\[ \notstflowjudge*{B^{←}}{(A∧B)^{←} ⊔ \voice{(A∧B)^{→}} = (A∧B)^{←}}. \]
If we instead label $"a\_bid" : \says{A}{"int"}$ and $"b\_bid" :
\says{B}{"int"}$, then the corresponding "endorse" statements
type-check, assuming that $T$ is trusted: $T ⊑ (A∧B)^{←}$.

\subsubsection{Laundering secrets}

For the secret-laundering example in Section~\ref{sec:launder-sec}, we assume
that neither $H$ nor $L$ is trusted, but the output from the program is.
This forces an implicit endorsement of $y$, the input received from $H$.
But the condition needed to endorse from $H^{→} ∧ ⊥^{←}$ to $H^{→} ∧
⊤^{←}$ is false:
\[ \stflowjudge*{⊥^{←}}{⊤^{←} ⊔ \voice{H^{→}} = \voice{H^{→}}} \]
We have $\notstflowjudge*{\voice{L^{→}}}{\voice{H^{→}}}$ and all integrity flows to
$⊥^{←}$, so by transitivity the above condition cannot hold.

\section{Security conditions}
\label{sec:condition}

The \nmlang typing rules enforce several strong security conditions:
multiple forms of conditional noninterference, robust
declassification, and our new transparent endorsement and nonmalleable
information flow conditions.  We define these conditions formally
but relegate proof details to Appendix~\ref{sec:proofs}.

\subsection{Attackers}
\label{sec:attacker}

Noninterference is frequently stated with respect to a specific but arbitrary label.
Anything below that label in the lattice is ``low'' (public or trusted)
and everything else is ``high''.
We broaden this definition slightly and designate high information
using a set of labels $\H$ that is upward closed.
That is, if $ℓ ∈ \H$ and $\stflowjudge*{ℓ}{ℓ'}$, then $ℓ' ∈ \H$.
We refer to such upward closed sets as _high sets_.

We say that a type $τ$ is a _high type_, written ``$\hightype{\H}{τ}$'',
if all of the information in a value of type $τ$ is above some label
in the high set $\H$. The following rule defines high types:
\begin{mathpar}
  \Rule[$\H$ is upward closed]{P-Set}{%
    H ∈ \H \\
    \protjudge*{H}{τ}
  }{\hightype{\H}{τ}}
  \hfill
\end{mathpar}

This formulation of adversarial power is adequate to express noninterference, in which
confidentiality and integrity do not interact. However,
our more complex conditions relate confidentiality to integrity and
therefore require a way to relate the attacker's power in the two domains.

Intuitively, an attacker is an arbitrary set of colluding atomic principals.
Specifically, if $n_1, \dotsc, n_k ∈ \N$ are those atomic principals, then the
set $\A = \{ ℓ ∈ \L \mid \stafjudge*{n_1 ∧ \dotsb ∧ n_k}{ℓ} \}$ represents this
attacker's power. These principals may include principals mentioned in
the program, and there may be delegations between attacker
principals and program principals.
While this definition of an attacker is intuitive, the results in
this paper actually hold for a more general notion of attacker defined in
\cref{sec:attacker-properties}.

Attackers correspond to two high sets: an _untrusted_ high set
$\U = \{ ℓ ∈ \L \mid ℓ^{←} ∈ \A \}$ and a _secret_ high set
$\secret = \{ ℓ ∈ \L \mid ℓ^{→} \notin \A \}$.
We say that $\A$ _induces_ $\U$ and $\secret$.

\subsection{Equivalences}

All of our security conditions involve relations on traces.  As is
typically the case for information-flow security conditions, we define
a notion of ``low equivalence'' on traces, which ignores effects with
high labels.  We proceed by defining low-equivalent expressions and
then extending low-equivalence to traces.

For expression equivalence, we examine precisely the values which are visible
to a low observer defined by a set of labels $\low$: $\vreturn{ℓ}{v}$ and
$\downemit*{π}{v}$ where $ℓ ∈ \low$.
We formalize this idea in Figure~\ref{fig:equivalence}, using $\bullet$ to represent
values that are not visible.
Beyond ignoring values unable to affect the output, we use a standard
structural congruence (i.e., syntactic equivalence).
This strict notion of equivalence is not entirely necessary;
observational equivalence or any refinement thereof would be sufficient if
augmented with the $\bullet$-equivalences in Figure~\ref{fig:equivalence}.

\begin{figure}
\begin{flushleft}
  \rulefiguresize
  \boxed{\seteq*{c}{c'}} \boxed{\seteq*{v}{v'}} \\[0.5em]
  {\small
  These equivalence relations are the smallest
  congruences over $c$ and over $v$ extended with •,
  containing the equivalences defined by these rules:
  }
  \begin{mathpar}
    \Rule{Eq-UnitM}{ℓ \notin \low}{\seteq*{\vreturn{ℓ}{v}}{\bullet}}

    \Rule{Eq-Down}{ℓ \notin \low}{\seteq*{\downemit*{π}{v}}{\bullet}}
    \hfill
  \end{mathpar}

  \boxed{\traceeq*{t}{t'}}\\[0.5em]
  {\small
  The equivalence relation $\traceeq*{t}{t'}$ is the smallest
  congruence over $t$ containing the equivalences defined by these
  rules:
  }
  \begin{mathpar}
    \Rule{T-Lift}{\seteq*{c}{c'}}{\traceeq*{c}{c'}}

    \Rule{T-BulletR}{}{\traceeq*{t;\bullet}{t}}

    \Rule{T-BulletL}{}{\traceeq*{\bullet;t}{t}}
  \end{mathpar}
\end{flushleft}
\caption{Low equivalence and low trace equivalence.}
\label{fig:equivalence}
\end{figure}

Figure~\ref{fig:equivalence} also extends the equivalence on emitted values to
equivalence on entire traces of emitted values. Essentially, two
traces are equivalent if there is a way to match up equivalent events
in each trace, while ignoring high events equivalent to •.

\subsection{Noninterference and downgrading}
\label{sec:ni}

An immediate consideration when formalizing information flow is how
to express interactions between an adversary and the system.
One possibility is to limit interaction to inputs and outputs of the program.
This is a common approach for functional
languages. We take a stronger approach in which security
is expressed in terms of execution traces. Note that traces
contain all information necessary to ensure the security of
input and output values.

We begin with a statement of noninterference in the presence of
downgrading.
Theorem~\ref{thm:no-downgrade-ni} states that, given two high inputs, a
well-typed program produces two traces that are either low-equivalent or
contain a downgrade event that distinguishes them.
This implies that
differences in traces distinguishable by an attacker are all
attributable to downgrades of information derived from the high inputs.
Furthermore, any program that performs no downgrades on secret or untrusted values
(i.e., contain no "decl" or "endorse" terms on $\H$ data) must be noninterfering.

\begin{theorem}[Noninterference modulo downgrading]
  \label{thm:no-downgrade-ni}
  Let $\H$ be a high set and let $\low = \L \setminus \H$.
  Given an expression $e$ such that $\TVal{Γ,x\ty τ_1;\pc}{e}{τ_2}$ where
  $\hightype{\H}{τ_1}$, for all $v_1, v_2$ with $\TValGpc{v_i}{τ_1}$, if
  \[ \evalctx{\subst{e}{x}{v_i}}{v_i} \stepstoctx \evalctx{v'_i}{t^i} \]
  then either there is some event $\downemit*{π}{w} ∈ t^i$ where $ℓ' ∈ \H$ and
  $ℓ \notin \H$, or $\traceeq*{t^1}{t^2}$.
\end{theorem}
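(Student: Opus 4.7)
The plan is to prove this by induction on the typing derivation via a type-indexed logical relation. I would define a binary value relation $\approx^{\H}_{\tau}$ that is total at any high type ($\hightype{\H}{\tau}$) and structural at non-high types (for $\func{\tau'}{\pc'}{\tau''}$, related functions produce related outputs on related inputs). Lift this to a configuration relation where $\evalctx{e_1}{t^1} \approx \evalctx{e_2}{t^2}$ holds when the two configurations step in lockstep producing $\seteq*{c_1}{c_2}$ events and evolve into related successor configurations, with the disjunctive escape hatch that one side has already emitted a distinguishing downgrade event $\downemit*{\pi}{w}$ with $\ell' \in \H$ and $\ell \notin \H$. Since $\tau_1$ is a high type, the arbitrary values $v_1, v_2 : \tau_1$ are trivially related, so by substitutivity $\subst{e}{x}{v_1}$ and $\subst{e}{x}{v_2}$ will be related once the fundamental lemma is proved.

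The fundamental lemma states: if $\TValGpc{e}{\tau}$ and $\gamma_1, \gamma_2$ are pointwise related substitutions, then $\gamma_1(e) \approx^{\H}_{\tau} \gamma_2(e)$. The $\ruleref{Var}$, $\ruleref{Unit}$, $\ruleref{Lam}$, and $\ruleref{App}$ cases are standard. For $\ruleref{UnitM}$ and $\ruleref{VUnitM}$, the emitted event is $\bullet$-equivalent whenever $\ell \in \H$ by $\ruleref{Eq-UnitM}$, and otherwise the inductive hypothesis on the subterm produces literally matching events. For $\ruleref{Decl}$ and $\ruleref{Endorse}$, either the target label $\ell$ is in $\H$ (giving a $\bullet$-equivalent downgrade event) or the source $\ell'$ is in $\H$ while $\ell \notin \H$, which triggers precisely the distinguishing-downgrade escape hatch permitted by the theorem's conclusion.

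The principal obstacle is $\ruleref{BindM}$ on a value at a high label: the body can then take genuinely different paths because the bound variable carries different hidden payloads, breaking any naive lockstep simulation. I would resolve this with a companion lemma: if $\TValGpc{e}{\tau}$ with $\pc \in \H$, then every reduction step of $e$ either emits a high (hence $\bullet$-equivalent) event or a distinguishing downgrade event. The key observation is that $\H$ is upward closed and each of $\ruleref{UnitM}$, $\ruleref{Decl}$, $\ruleref{Endorse}$ carries the premise $\stflowjudge*{\pc}{\ell}$, forcing $\ell \in \H$; combined with the fact that $\ruleref{BindM}$ only ever raises $\pc$ (via standard subject reduction), this confines all divergent behavior to high events until an explicit downgrade exits the high region. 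Applying the fundamental lemma to $e$ under the substitution $x \mapsto v_i$, and noting that the initial trace fragments $v_i = \vreturn{\ell_1}{w_i}$ are themselves $\bullet$-equivalent by $\ruleref{Eq-UnitM}$ since $\ell_1 \in \H$, yields exactly the dichotomy stated in the theorem.
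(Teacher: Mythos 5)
Your proposal is sound, but it takes a genuinely different route from the paper. The paper does not use a logical relation at all: it proves this theorem with a Pottier--Simonet-style \emph{bracket} construction. High inputs are wrapped as $\lab*{v_i}$ (well-typed by \ruleref{Bracket} since $\hightype{\H}{τ_1}$), two soundness/completeness lemmas show that bracketed and unbracketed executions produce the same traces up to brackets and bullets, and the heart of the argument is a ``Release on downgrade'' lemma proved by induction on the number of relevant downgrade events in low-equivalent trace prefixes: differences can only propagate out of brackets via \ruleref{B-DeclL}/\ruleref{B-EndorseL}, which necessarily emit an event $\downemit*{π}{w}$ with $ℓ' ∈ \H$ and $ℓ \notin \H$, and anything emitted from inside a bracket is $\bullet$-equivalent because \ruleref{Bracket} forces $\pc ∈ \H$ and \ruleref{UnitM}/\ruleref{Decl}/\ruleref{Endorse} all carry $\stflowjudge*{\pc}{ℓ}$. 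Your confinement lemma for high-$\pc$ subcomputations is exactly the semantic content of that last observation, and your escape hatch plays the role of the paper's relevant-downgrade disjunct, so the two proofs rest on the same invariants; the difference is that you track high data semantically through a type-indexed relation where the paper tracks it syntactically through brackets. Your approach is more compositional and avoids re-proving subject reduction for an extended language, but the paper's choice is deliberate: the bracket machinery and the prefix-indexed downgrade lemma are reused wholesale in the proofs of the 4-trace theorems (robust declassification, transparent endorsement, NMIF), which a relation tailored to two traces would not directly support. Two small points to tidy up: in the \ruleref{Decl}/\ruleref{Endorse} case you omit the subcase $ℓ' \notin \H$ and $ℓ \notin \H$, where the induction hypothesis at the non-high type $\says{ℓ'}{τ}$ must supply equivalence of the downgraded payloads so the two visible events match; and the initial trace events are $v_i$ themselves, which are $\bullet$-equivalent only when $τ_1$ is an $\says{ℓ_1}{τ'}$ type with $ℓ_1 ∈ \H$ (the remaining high-type shapes are degenerate and make $v_1 = v_2$ or reduce to components).
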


The restrictions placed on downgrading operations mean that we can
characterize the conditions under which no downgrading can occur.
We add two further noninterference theorems that restrict downgrading in different ways.
Theorem~\ref{thm:high-pc-ni} states that if a program types without a public–trusted $\pc$
it must be noninterfering (with respect to that definition of ``public–trusted'').

\begin{theorem}[Noninterference of high-$\pc$ programs]
  \label{thm:high-pc-ni}
  Let $\A$ be an attacker inducing high sets $\U$ and $\secret$.
  Let $\H$ be one of those high sets and $\low = \L \setminus \H$.
  Given some $e$ such that $\TVal{Γ,x\ty τ_1;\pc}{e}{τ_2}$ where
  $\hightype{\H}{τ_1}$, for all $v_1, v_2$ with $\TValGpc{v_i}{τ_1}$, if
  $\evalctx{\subst{e}{x}{v_i}}{v_i} \stepstoctx \evalctx{v'_i}{t^i}$
  and $\pc ∈ \U ∪ \secret$, then $\traceeq*{t^1}{t^2}$.
\end{theorem}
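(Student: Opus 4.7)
The plan is to reduce to Theorem~\ref{thm:no-downgrade-ni}. That theorem already guarantees that either $\traceeq*{t^1}{t^2}$ or one of the traces contains an ``escaping'' downgrade event $\downemit*{π}{w}$ whose source label $ℓ'$ lies in $\H$ while its target label $ℓ$ lies in $\low$. It therefore suffices to show that, under the extra hypothesis $\pc \in \U \cup \secret$, no such escaping event can ever be emitted; the desired low equivalence then follows immediately.

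The first ingredient is the upward closure of both $\U$ and $\secret$ under the flow order. This follows from the alignment between the information-flow and acts-for orderings on the two projections: $\stflowjudge*{ℓ}{ℓ'}$ entails both $\stafjudge*{ℓ^{←}}{ℓ'^{←}}$ (integrity weakens in the flow direction) and $\stafjudge*{ℓ'^{→}}{ℓ^{→}}$ (confidentiality strengthens in the flow direction), so the attacker's acts-for delegations transit by transitivity in either case. Consequently $\pc \in \H$ implies $\pc' \in \H$ whenever $\stflowjudge*{\pc}{\pc'}$.

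The second and main ingredient is a $\pc$-monotonicity lemma: whenever $\TValGpc{e}{τ}$ reduces and emits a downgrade event with target label $ℓ$, we have $\stflowjudge*{\pc}{ℓ}$. I would prove this by induction on the evaluation context $E$ leading to the redex, combined with the standard substitution and preservation results. The only frames that alter the effective $\pc$ are \ruleref{BindM}, which raises it to $\pc \sqcup ℓ_b$, and application frames that transit into a lambda body typed at some $\pc'$ with $\stflowjudge*{\pc}{\pc'}$ via \ruleref{App} and \ruleref{Lam}; all other frames leave $\pc$ unchanged. Hence the effective $\pc_{\mathrm{eff}}$ at the redex satisfies $\stflowjudge*{\pc}{\pc_{\mathrm{eff}}}$, and the last premise of \ruleref{Decl} or \ruleref{Endorse} then supplies $\stflowjudge*{\pc_{\mathrm{eff}}}{ℓ}$, yielding $\stflowjudge*{\pc}{ℓ}$ by transitivity. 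Combined with the upward closure of $\H$, this gives $ℓ \in \H$ for every emitted downgrade target, ruling out any escaping event and forcing $\traceeq*{t^1}{t^2}$ via Theorem~\ref{thm:no-downgrade-ni}.

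The main obstacle is the preservation step for beta-reduction, where the function body is typed at $\pc'$ rather than the outer $\pc \sqsubseteq \pc'$. A careful substitution lemma is needed to show that the residual expression remains typable in a way compatible with both the body's original annotation and the constraints of the enclosing evaluation context; this is precisely what allows the effective-$\pc$ bookkeeping in the monotonicity induction to go through. Once that lemma is in place, the $\pc$-monotonicity lemma is a routine case analysis on $E$, and the theorem reduces to combining the two ingredients with Theorem~\ref{thm:no-downgrade-ni}.
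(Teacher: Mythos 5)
There is a genuine gap. Your argument establishes that every emitted downgrade target $\ell$ satisfies $\stflowjudge*{\pc}{\ell}$ and then invokes upward closure of $\H$ to conclude $\ell \in \H$ --- but that inference is only valid when $\pc \in \H$. The theorem's hypothesis is the weaker $\pc \in \U \cup \secret$, while $\H$ is only \emph{one} of the two high sets, so $\pc$ may lie in the other one and outside $\H$: for instance $\H = \secret$ while $\pc$ is untrusted but perfectly public. In that situation $\stflowjudge*{\pc}{\ell}$ tells you nothing about whether $\ell$ is secret, and your two ingredients cannot rule out an escaping event with $\ell' \in \H$ and $\ell \notin \H$. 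These cross cases are the substantive content of the theorem (the claim is that a program must type at a \emph{public--trusted} $\pc$ to interfere, not merely at a $\pc$ outside $\H$).

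The paper's proof splits into three cases. The case $\pc \in \H$ matches your argument. For $\pc \in \U$ with $\H = \secret$, it uses the final premise of \ruleref{Decl}, namely $\stflowjudge*{ℓ'^{→}}{ℓ^{→} ⊔ \view{(ℓ' ⊔ \pc)^{←}}}$, together with the attacker properties of Definition~\ref{def:attacker} (in particular that $\view{\cdot}$ carries the attacker's integrity power over to confidentiality and that $\A^{→}$ is closed under the relevant joins) to show that a declassification in an untrusted context can lower confidentiality below $\A$'s view only if the source was already public; the \ruleref{Endorse} rule cannot escape $\secret$ at all since it fixes $ℓ'^{→} = ℓ^{→}$. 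The case $\pc \in \secret$ with $\H = \U$ is dual, using $\voice{\cdot}$ and the last premise of \ruleref{Endorse}. None of this machinery --- the view/voice premises or the attacker properties --- appears in your proposal, and without it the reduction to Theorem~\ref{thm:no-downgrade-ni} does not go through.
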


Rather than restrict the $\pc$, Theorem~\ref{thm:secret-untrust-ni} states that secret–untrusted
information is _always_ noninterfering.  Previous work (e.g., \cite{msz06,am11}) does not
restrict endorsement of confidential information, allowing any label to be downgraded to
public–trusted (given a public–trusted $\pc$).  In \nmlang, however,
secret–untrusted data must remain secret and untrusted.

\begin{theorem}[Noninterference of secret–untrusted data]
  \label{thm:secret-untrust-ni}
  Let $\A$ be an attacker inducing high sets $\U$ and $\secret$.
  Let $\H = \U ∩ \secret$ and $\low = \L \setminus \H$.
  Given some $e$ such that $\TVal{Γ,x\ty τ_1;\pc}{e}{τ_2}$ where
  $\hightype{\H}{τ_1}$, for all $v_1, v_2$ with $\TValGpc{v_i}{τ_1}$, if
  $\evalctx{\subst{e}{x}{v_i}}{v_i} \stepstoctx \evalctx{v'_i}{t^i}$
  then $\traceeq*{t^1}{t^2}$.
\end{theorem}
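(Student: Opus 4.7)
My plan is to reduce the claim to Theorem~\ref{thm:no-downgrade-ni} and then rule out the one escape hatch it leaves open: a downgrade event whose source label $\ell'$ lies in $\H = \U \cap \secret$ and whose target label $\ell$ sits outside. Invoking Theorem~\ref{thm:no-downgrade-ni} with this $\H$ gives either $\traceeq*{t^1}{t^2}$ directly, or such a witness event $\downemit*{\pi}{w}$ in one of the traces. In the latter case, subject reduction for \nmlang guarantees that the emitting redex is well-typed under some runtime program-counter $\pc_0$, so the premises of the corresponding \ruleref{Decl} or \ruleref{Endorse} are available with $\ell', \ell$, and $\pc_0$.

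The declassification case ($\pi = \rightarrow$) is the direct analog of the standard robust-declassification argument: integrity preservation $\ell'^{\leftarrow} = \ell^{\leftarrow}$ propagates $\U$-membership from $\ell'$ to $\ell$, and then $\ell \notin \H$ forces $\ell^{\rightarrow} \in \A$. Unfolding the view operator in the final premise of \ruleref{Decl} rearranges, in the trust order, to an inequality $\ell'^{\rightarrow} \preceq \ell^{\rightarrow} \vee (\cdots)$ whose right-hand side lies in $\A$ by downward closure of $\A$ under $\preceq$ starting from $\ell^{\rightarrow} \in \A$. A further application of downward closure places $\ell'^{\rightarrow} \in \A$, contradicting $\ell' \in \secret$.

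The endorsement case ($\pi = \leftarrow$) is dual in form and is the step where I expect the bulk of the technical work to sit. Confidentiality preservation keeps $\ell \in \secret$, so $\ell \notin \H$ forces $\ell^{\leftarrow} \notin \A$. Unfolding voice in the final premise of \ruleref{Endorse} rearranges to a disjunctive trust inequality $\ell^{\leftarrow} \vee v_y \preceq \ell'^{\leftarrow}$, with $v_y$ built from $\ell'^{\rightarrow}$ and $\pc_0^{\rightarrow}$; downward closure then gives $\ell^{\leftarrow} \vee v_y \in \A$. The main obstacle is extracting $\ell^{\leftarrow} \in \A$ from this disjunctive membership: unlike the conjunctive right-hand side in \ruleref{Decl}, this does not follow from downward closure alone. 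The completing step must invoke the sharper characterization of the attacker in \cref{sec:attacker-properties}---morally, that the complement of $\A$ is closed under the trust-disjunction of principals of the shape arising here---together with the observation that $v_y \notin \A$, which follows from $\ell' \in \secret$ via the FLAM identity $a \wedge b \succeq a$ and transitivity. Lining up this closure with the specific form of $v_y$ is where the delicate part of the proof lives.
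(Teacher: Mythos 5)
Your overall strategy is the paper's: reduce to Theorem~\ref{thm:no-downgrade-ni} and show that no well-typed downgrade can move a label $ℓ' ∈ \H = \U ∩ \secret$ to a label $ℓ \notin \H$, splitting on declassification versus endorsement and closing each case with the attacker properties of \cref{sec:attacker-properties}. Your endorsement case is essentially the paper's argument: the trust inequality $ℓ'^{←} ≽ ℓ^{←} ∨ \voice{(ℓ'∧\pc)^{→}}$, downward closure of $\A$ under $≽$, the fact that $\voice{(ℓ'∧\pc)^{→}} \notin \A^{←}$, and Condition~\ref{li:atk:not-a-closed-disj} of Definition~\ref{def:attacker} to extract $ℓ^{←} ∈ \A^{←}$. (One small point there: getting from $(ℓ'∧\pc)^{→} \notin \A^{→}$ to $\voice{(ℓ'∧\pc)^{→}} \notin \A^{←}$ is not a lattice identity; it is the voice/view symmetry property, Condition~\ref{li:atk:voice-and-view}, which is the ``sharper characterization'' you correctly anticipate needing.)

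The gap is in the declassification case, which you present as following from ``downward closure alone.'' The final premise of \ruleref{Decl} unfolds in the trust order to $ℓ^{→} ∧ \view{(ℓ'∨\pc)^{←}} ≽ ℓ'^{→}$ --- a trust-\emph{conjunction} on the left, not the disjunction $ℓ^{→} ∨ (\cdots)$ you wrote. Since $ℓ^{→} ∧ c ≽ ℓ^{→}$, that conjunction sits \emph{above} $ℓ^{→}$ in the trust order, so downward closure from $ℓ^{→} ∈ \A^{→}$ says nothing about it, and the step as justified fails. To conclude $ℓ'^{→} ∈ \A^{→}$ you must separately establish $\view{(ℓ'∨\pc)^{←}} ∈ \A^{→}$ --- which holds because $ℓ' ∈ \U$ gives $ℓ'^{←} ∈ \A^{←}$, then attenuation (Condition~\ref{li:atk:disj-with-a}) gives $(ℓ'∨\pc)^{←} ∈ \A^{←}$ and voice/view symmetry (Condition~\ref{li:atk:voice-and-view}) transfers this to confidentiality --- and then apply closure of $\A^{→}$ under conjunction (Condition~\ref{li:atk:a-closed-conj}) before invoking transitivity. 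In short, declassification is not the easy half: it needs exactly the same attacker machinery as endorsement, of which it is the precise dual. With that repair your argument coincides with the paper's proof.
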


\subsection{Robust declassification and irrelevant~inputs}
\label{sec:irrel-inputs}
\label{sec:rob-decl-thm}

We now move to security conditions for programs that do not satisfy noninterference.
Recall that robust declassification informally means the attacker has
no influence on what information is released by declassification.
Traditionally, it is stated in terms of attacker-provided
code that is inserted into low-integrity 
holes in programs which
differ only in their secret inputs. In \nmlang,
the same attacker power can be obtained by substituting
exactly two input values into the program, one secret and one
untrusted. This simplification is possible because \nmlang has
first-class functions that can model the substitution of low-integrity code.
Appendix~\ref{sec:nmif-generalize} shows that this simpler two-input definition
is equivalent to the traditional hole-based approach in the full
version of \nmlang (Appendix~\ref{sec:full-nmlang}).

Prior work on "while"-based languages~\cite{msz06,cm06} define robust
declassification in terms of four traces generated by the combination of two variations: 
a secret input and some attacker-supplied code.
For terminating traces, these definitions require any pair of secrets to 
produce public-equivalent traces under all attacks or otherwise to produce distinguishable traces
regardless of the attacks chosen.  This implies that an attacker cannot control the disclosure of secrets.

We can attempt to capture this notion of robust declassifcation using the notation of
\nmlang.
For a program $e$ with a secret input $x$ and untrusted input $y$, we wish to
say $e$ robustly declassifies if, for all secret values $v_1, v_2$ and for all
untrusted values $w_1, w_2$, where
\[ \evalctx{\subst{\subst{e}{x}{v_i}}{y}{w_j}}{v_i;w_j} \stepstoctx \evalctx{v_{ij}}{t^{ij}}, \]
then $\traceeq{\public}{t^{11}}{t^{21}} ~ \Longleftrightarrow ~ \traceeq{\public}{t^{12}}{t^{22}}$.

This condition is intuitive but, unfortunately, overly restrictive.
It does not account for the possibility of an _inept attack_, in which an
attacker causes a program to reveal less information than intended.

Inept attacks are harder to characterize than in previous work because,
unlike the previously used "while"-languages, \nmlang supports
data structures with heterogeneous labels.
Using such data structures, we can build a program that implicitly
declassifies data by using a secret to influence the selection of an
attacker-provided value and then declassifying that selection.
Figure~\ref{fig:inept-attack} provides an example of such a program,
which uses sums and products from the full \nmlang language.

\begin{figure}
  \small
  \addtolength\jot{-2pt}
  \begin{align*}
     & \big(λ(x:(\says{P^{→}∧U^{←}}{τ})\times(\says{P^{→}∧U^{←}}{τ}))[P^{→}∧T^{←}].  \\
     & \quad "decl"~\big("bind"~b = \return{S^{→}∧T^{←}}{\mathit{sec}}~"in" \\
     & \qquad\qquad "case"~b~"of"~\ione{_}.\return{S^{→}∧T^{←}}{(\pone{x})} \\
     & \qquad\qquad\hspace{3.57em} \mid \itwo{_}.\return{S^{→}∧T^{←}}{(\ptwo{x})}\big) \\
     & \qquad\quad "to"~P^{→}∧T^{←} \big) ~ \big\langle\mathit{atk}_1, \mathit{atk}_2\big\rangle
  \end{align*}
  \caption{A program that admits inept attacks.
    Here $\stflowjudge*{P}{S}$ and $\stflowjudge*{T}{U}$, but not vice versa,
    so $\mathit{sec}$ is a secret boolean and
    $\pair{\mathit{atk}_1}{\mathit{atk}_2}$ form an untrusted pair of values.
    If $\mathit{atk}_1 \neq \mathit{atk}_2$, then the attacker will learn the
    value of $\mathit{sec}$.
    If $\mathit{atk}_1 = \mathit{atk}_2$, however, then the attacker learns
    nothing due to its own ineptness.}
  \label{fig:inept-attack}
\end{figure}

While this program appears secure---the attacker has no control over what
information is declassified or when a declassification occurs---it violates
the above condition.
One attack can contain the same value twice---causing any two secrets to
produce indistinguishable traces---while the other can contain different
values.
Intuitively, no vulnerability in the program is thereby revealed;
the program was _intended_ to release information, but the attacker
failed to infer it due to a poor choice of attack.
Such inputs result in less information leakage entirely due to the attacker's
ineptness, not an insecurity of the program.
As a result, we consider inputs from inept attackers to be _irrelevant_ to our
security conditions.

Dually to inept attackers, we can define uninteresting secret inputs.
For example, if a program endorses an attacker's selection of a secret value,
an input where all secret options contain the same data is uninteresting, so we
also consider it irrelevant.

Which inputs are irrelevant is specific to the program and to the choice of attacker.
In Figure~\ref{fig:inept-attack}, if both execution paths used $(\pone{x})$,
there would be no way for an attacker to learn any information, so all attacks
are equally relevant.
Similarly, if $S^{→}$ is already considered public, then there is no secret
information in the first place, so again, all attacks are equally relevant.

For an input to be irrelevant,
it must have no
influence over the outermost layer of the data structure---the label that is
explicitly downgraded.
If the input could influence that outer layer in any way, the internal data
could be an integral part of an insecure execution.
Conversely, when the selection of nested values is independent of any
untrusted/secret information (though the content of the values may not be), it
is reasonable to assume that the inputs will be selected so that different
choices yield different results.
An input which does not is either an inept attack---an attacker gaining less
information than it could have---or an uninteresting secret---a choice between
secrets that are actually the same.
In either case, the input is irrelevant.

To ensure that we only consider data structures with nested values that were
selected independently of the values themselves, we leverage the
noninterference theorems in Section~\ref{sec:ni}.
In particular, if the outermost label is trusted before a declassification (or
public prior to an endorsement), then any influence from untrusted (secret) data
must be the result of a prior explicit downgrade.
Thus we can identify irrelevant inputs by finding inputs that result in traces
that are public-trusted equivalent, but can be made both public (trusted)
equivalent and non-equivalent at the point of declassification (endorsement).

To define this formally, we begin by partitioning the principal lattice into
four quadrants using the definition of an attacker from
Section~\ref{sec:attacker}.
We consider only flows between quadrants and, as with noninterference,
downgrades must result in public or trusted values.
We additionally need to refer to individual elements and prefixes of traces.
For a trace $t$, let $\trelt{t}{n}$ denote the $n$th element of $t$, and let
$\trpref{t}{n}$ denote the prefix of $t$ containing its first $n$ elements.

\begin{definition}[Irrelevant inputs]
  \label{def:irrel-input}
  Consider attacker $\A$ inducing high sets $\H_{←}$ and $\H_{→}$. Let
  $\low_{π} = \L \setminus \H_{π}$ and $\low = \low_{←} ∩ \low_{→}$.
  Given opposite projections $π$ and $π'$ a program $e$, and types $τ_x$ and
  $τ_y$ such that $\hightype{\H_{π}}{τ_x}$ and $\hightype{\H_{π'}}{τ_y}$,
  we say an input $v_1$ is an \emph{irrelevant $π'$-input} with respect to $\A$ and $e$ if
  $\TValGpc{v_1}{τ_x}$ and there exist values $v_2$, $w_1$, and $w_2$ and
  four trace indices $n_{ij}$ (for $i, j ∈ \{1, 2\}$) such that the following conditions hold:
  \begin{enumerate}
    \item $\TValGpc{v_2}{τ_x}$, $\TValGpc{w_1}{τ_y}$, and $\TValGpc{w_2}{τ_y}$
    \item $\evalctx{\subst{\subst{e}{x}{v_i}}{y}{w_j}}{v_i;w_j} \stepstoctx \evalctx{v_{ij}}{t^{ij}}$
    \item $\nseteq*{\trelt{t^{ij}}{n_{ij}}}{\bullet}$ for all $i,j ∈ \{1,2\}$
    \item \label{ii:li:pt-equiv} $\traceeq*{\trpref{t^{ij}}{n_{ij}}}{\trpref{t^{kl}}{n_{kl}}}$ for all $i,j,k,l ∈ \{1,2\}$
    \item \label{ii:li:pair1-equiv} $\traceeq{\low_{π}}{\trpref{t^{11}}{n_{11}}}{\trpref{t^{12}}{n_{12}}}$
    \item \label{ii:li:pair2-nonequiv} $\ntraceeq{\low_{π}}{\trpref{t^{21}}{n_{21}}}{\trpref{t^{22}}{n_{22}}}$
  \end{enumerate}
  Otherwise we say $v_1$ is a \emph{relevant $π'$-input} with respect to $\A$ and $e$,
  denoted $\rel*{π'}{v_1}$. Note that the four indices $n_{ij}$ identify
  corresponding prefixes of the four traces.
\end{definition}

As mentioned above, prior downgrades can allow secret/untrusted information
to directly influence the outer later of the data structure, but
Condition~\ref{ii:li:pt-equiv} requires that all four trace prefixes be
public-trusted equivalent, so any such downgrades must have the same influence
across all executions.
Condition~\ref{ii:li:pair1-equiv} requires that some inputs result in prefixes
that are public equivalent (or trusted equivalent for endorsement), while
Condition~\ref{ii:li:pair2-nonequiv} requires that other inputs result in
prefixes that are distinguishable.
Since all prefixes are public-trusted equivalent, this means there is an
implicit downgrade inside a data structure, so the equivalent prefixes form an
irrelevant input.

We can now relax our definition of robust declassification to only
restrict the behavior of _relevant_ inputs.

\begin{definition}[Robust declassification]
  \label{def:rob-decl}
  Let $e$ be a program and let $x$ and $y$ be variables representing secret and
  untrusted inputs, respectively.
  We say that $e$ _robustly declassifies_ if, for all attackers $\A$ inducing
  high sets $\U$ and $\secret$ (and $\public = \L \setminus \secret$)
  and all values $v_1, v_2, w_1, w_2$, if
  \[ \evalctx{\subst{\subst{e}{x}{v_i}}{y}{w_j}}{v_i;w_j} \stepstoctx \evalctx{v_{ij}}{t^{ij}}, \]
  then $\left(\rel*{←}{w_1} \text{ and } \traceeq{\public}{t^{11}}{t^{21}}\right) ~ \Longrightarrow ~ \traceeq{\public}{t^{12}}{t^{22}}$.
\end{definition}

As \nmlang only restricts declassification of low-integrity data, endorsed data is
free to influence future declassifications.
As a result, we can only guarantee robust declassification in the absence of
endorsements.

\begin{theorem}[Robust declassification]
  \label{thm:rob-decl}
  Given a program $e$, if $\TVal{Γ,x\ty τ_x,y\ty τ_y;\pc}{e}{τ}$ and $e$
  contains no "endorse" expressions, then $e$ robustly declassifies as
  defined in Definition~\ref{def:rob-decl}.
\end{theorem}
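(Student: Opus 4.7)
The plan is to prove the theorem by contradiction. Suppose $e$ is well-typed without any endorse, and for some attacker $\A$ and inputs $v_1,v_2,w_1,w_2$ we have $\rel*{\leftarrow}{w_1}$ together with $\traceeq{\public}{t^{11}}{t^{21}}$, yet $\ntraceeq{\public}{t^{12}}{t^{22}}$. I will construct indices $n_{ij}$ that satisfy every condition of Definition~\ref{def:irrel-input} for $w_1$, directly contradicting $\rel*{\leftarrow}{w_1}$.

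The central preliminary is an integrity-style noninterference fact for no-endorse programs, obtained by invoking Theorem~\ref{thm:no-downgrade-ni} with $\H = \U$. Because $e$ contains no endorse and the \ruleref{Decl} rule preserves integrity ($\ell'^{\leftarrow}=\ell^{\leftarrow}$), no emitted trace event is a downgrade from a source in $\U$ to a target outside $\U$. Theorem~\ref{thm:no-downgrade-ni} therefore yields a trusted-trace equivalence $\traceeq{\L \setminus \U}{t^{i1}}{t^{i2}}$ for each fixed secret $v_i$: changing the attacker's input cannot alter the trusted portion of the trace. Combined with the hypothesis $\traceeq{\public}{t^{11}}{t^{21}}$, all four traces then agree on every event in $\public \cap (\L \setminus \U)$, the joint low set $\low$ of Definition~\ref{def:irrel-input}.

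To fix the indices, I locate the earliest non-$\bullet$ public event on which $t^{12}$ and $t^{22}$ diverge (it exists by the contradiction assumption) and set $n_{12}$ and $n_{22}$ to those positions. I then lift them through the trusted-trace equivalences to matching positions $n_{11}, n_{21}$ in the other two traces. The $\low$-equivalence condition (Condition~\ref{ii:li:pt-equiv}) follows from the $\low$-equivalence of the full traces restricted to the chosen prefixes; the public-equivalence condition for the $w_1$-attacker pair is inherited from the hypothesis; and the public-inequivalence condition for the $w_2$-attacker pair is immediate from the choice of divergence point. Thus $w_1$ is irrelevant, contradicting $\rel*{\leftarrow}{w_1}$.

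The main obstacle will be formalizing the alignment. Because the trace equivalences are defined modulo a structural congruence with $\bullet$-elimination, I need an auxiliary lockstep-simulation lemma that, from $\traceeq{\L \setminus \U}{t^{i1}}{t^{i2}}$, extracts a canonical injection between non-$\bullet$ positions in the two traces, so that the four prefixes can be chosen consistently across all four traces. I expect this lemma to follow by induction on reduction length, leveraging subject reduction and case analysis over the evaluation contexts of Figure~\ref{fig:core-semantics}. With the alignment in hand, each of the four conditions of Definition~\ref{def:irrel-input} drops out of the previously established equivalences.
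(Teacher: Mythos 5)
Your overall strategy is essentially the paper's own argument with the factoring removed: the paper obtains Theorem~\ref{thm:rob-decl} as a corollary of Theorem~\ref{thm:nmifc}, observing exactly what you observe---that the absence of \texttt{endorse}, together with the integrity-preserving premise $\ell'^{\leftarrow}=\ell^{\leftarrow}$ of \ruleref{Decl} and Theorem~\ref{thm:no-downgrade-ni} instantiated with $\H=\U$, forces $\traceeq{\trusted}{t^{i1}}{t^{i2}}$ for the entire traces---after which the contrapositive, build-an-irrelevance-witness argument you sketch is precisely the body of the proof of Theorem~\ref{thm:nmifc}, case~\ref{li:nmifc:rd}. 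So the route is not genuinely different; you are inlining the NMIF proof rather than invoking it.

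The genuine gap is in the construction of the witness for Definition~\ref{def:irrel-input}. You place $n_{12},n_{22}$ at the earliest public divergence of $t^{12}$ and $t^{22}$ and then ``lift'' them through the trusted equivalences, but nothing in your argument establishes what that divergence event \emph{is}. Conditions~3 and~4 of Definition~\ref{def:irrel-input} require the four selected elements to be non-$\bullet$ with respect to $\low=\public\cap\trusted$ and the four prefixes including them to be pairwise $\low$-equivalent, and your lifting to $n_{11},n_{21}$ only makes sense if the divergence event is visible under $\trusted$. A priori the first public divergence could sit at a public--untrusted label, invisible both to $\low$ and to the trusted equivalence, in which case your index choice violates condition~3 and cannot be transported to the $w_1$-traces. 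What rescues the argument in the paper is Lemma~\ref{lem:down-release} (proved with the bracketing machinery of Appendix~A, not by the positional induction you propose): given $\low$-equivalent prefixes, the first visible difference must be a declassification event present in corresponding positions of the traces, and the $\view*$ premise of \ruleref{Decl} together with the attacker properties (as in the proof of Theorem~\ref{thm:secret-untrust-ni}) forces its target label to be public \emph{and} trusted. Your proposed lockstep-simulation lemma addresses only the bookkeeping of matching positions modulo $\bullet$; it does not supply this semantic characterization of the divergence point, and without it the chosen indices need not satisfy the definition. You also need to treat separately the case where one of $t^{12},t^{22}$ is a public prefix of the other: there the divergence yields no irrelevance witness, and the paper instead propagates it to a direct contradiction with $\traceeq{\public}{t^{11}}{t^{21}}$.
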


Note that prior definitions of robust declassification~\cite{msz06,cm06}
similarly prohibit endorsement and ignore pathological inputs, specifically
nonterminating traces.
Our irrelevant inputs are very different since \nmlang is strongly
normalizing but admits complex data structures, but the need for some
restriction is not new.

\subsection{Transparent endorsement}
\label{sec:transp-endorse}

We described in Section~\ref{sec:motivation} how endorsing opaque
writes can create security vulnerabilities.
To formalize this intuition, we present _transparent endorsement_, a security
condition that is dual to robust declassification.
Instead of ensuring that untrusted information cannot meaningfully influence
declassification, transparent endorsement guarantees that
secret information cannot meaningfully influence endorsement.
This guarantee ensures that secrets cannot influence the endorsement of an attacker's
value---neither the value endorsed nor the existence of the endorsement itself.

As it is completely dual to robust declassification, we again appeal to
the notion of irrelevant inputs, this time to rule out uninteresting secrets.
The condition looks nearly identical, merely switching the roles of confidentiality
and integrity.
It therefore ensures that any choice of interesting secret provides an attacker
with the maximum possible ability to influence endorsed values; no
interesting secrets provide more power to attackers than others.

\begin{definition}[Transparent endorsement]
  \label{def:transp-endorse}
  Let $e$ be a program and let $x$ and $y$ be variables representing secret and
  untrusted inputs, respectively.
  We say that $e$ _transparently endorses_ if, for all attackers $\A$ inducing
  high sets $\U$ and $\secret$ (and $\trusted = \L \setminus \U$)
  and all values $v_1, v_2, w_1, w_2$, if
  \[ \evalctx{\subst{\subst{e}{x}{v_i}}{y}{w_j}}{v_i;w_j} \stepstoctx \evalctx{v_{ij}}{t^{ij}}, \]
  then $\left(\rel*{→}{v_1} \text{ and } \traceeq{\trusted}{t^{11}}{t^{12}}\right) ~ \Longrightarrow ~ \traceeq{\trusted}{t^{21}}{t^{22}}$.
\end{definition}

As in robust declassification, we can only guarantee transparent
endorsement in the absence of declassification.

\begin{theorem}[Transparent endorsement]
  \label{thm:transp-endorse}
  Given a program $e$, if $\TVal{Γ,x\ty τ_x,y\ty τ_y;\pc}{e}{τ}$ and $e$
  contains no "decl" expressions, then $e$ transparently endorses.
\end{theorem}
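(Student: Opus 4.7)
My plan is to prove Theorem~\ref{thm:transp-endorse} by adapting the proof of Theorem~\ref{thm:rob-decl} via the confidentiality–integrity duality that pervades \nmlang. The syntax, operational semantics, and downgrading rules are symmetric under the involution that swaps the projection $→$ with $←$, interchanges $\voice*$ with $\view*$, swaps the attacker-induced sets $\U$ with $\secret$ (and therefore $\trusted$ with $\public$), and exchanges the rules \ruleref{Decl} and \ruleref{Endorse} along with their corresponding downgrade events. Definitions~\ref{def:irrel-input} and~\ref{def:rob-decl} map directly to Definition~\ref{def:transp-endorse} under this swap, and the ``no \texttt{decl}'' hypothesis mirrors the ``no \texttt{endorse}'' hypothesis of Theorem~\ref{thm:rob-decl}, so the inductive argument dualizes essentially line-for-line.

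Concretely, I would proceed in three steps. First, I would prove a lemma that in any well-typed, \texttt{decl}-free reduction every endorsement event whose target integrity lies in $\trusted$ must be applied to data whose confidentiality lies in $\public$. This reads directly off of \ruleref{Endorse}: an endorsement of $ℓ'$ to $ℓ$ requires $\stflowjudge*{ℓ'^{←}}{ℓ^{←} ⊔ \voice{(ℓ' ⊔ \pc)^{→}}}$, and if $ℓ^{←}$ lies in $\trusted$, then the $\voice*$ term must also be trusted, which forces $(ℓ' ⊔ \pc)^{→}$ to lie in $\public$. Consequently, secret data can steer neither the occurrence nor the content of any trusted-visible endorsement event.

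Second, I would set up the four-way coupling among the executions indexed by $(v_i, w_j)$. Given the hypotheses $\rel*{→}{v_1}$ and $\traceeq{\trusted}{t^{11}}{t^{12}}$, I would argue by induction on reduction that $\traceeq{\trusted}{t^{21}}{t^{22}}$ must also hold. The non-downgrading steps are handled by Theorem~\ref{thm:no-downgrade-ni} instantiated with $\H = \U$: no secret-tagged value ever reaches a trusted-visible position without first passing through a downgrade event. For endorsement steps, the lemma above shows that both the endorsed value and the very decision to endorse are determined by the public projection of state, so the secret inputs $v_1$ and $v_2$ cannot produce divergent trusted-visible behavior along corresponding reductions. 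Any hypothetical divergence at an endorsement event in the $(v_2,\cdot)$ executions, combined with the matching equivalence in the $(v_1,\cdot)$ executions, would supply exactly the four trace indices required by Definition~\ref{def:irrel-input}, contradicting $\rel*{→}{v_1}$.

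The main obstacle I anticipate is that the program counter $\pc$ is not itself dualized in \nmlang: the premise $\stflowjudge*{\pc}{ℓ}$ in \ruleref{Endorse} demands that $\pc$ be simultaneously trusted and public, a strictly stronger constraint than a perfect dual would require. Carrying the coupling through higher-order control flow---nested \texttt{bind}s and $η$-protected closures whose bodies re-raise $\pc$ by the bound label---requires careful bookkeeping to rule out secret-dependent branches that selectively enable trusted-visible endorsements. The \ruleref{VUnitM}/\ruleref{UnitM} distinction, which lets runtime values float free of the enclosing $\pc$ while source-level $η$ terms do not, is the technical lever I would lean on when formalizing that preservation across \texttt{bind}-reduction.
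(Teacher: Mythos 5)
Your proposal is correct and is in substance the paper's own argument: the paper obtains Theorem~\ref{thm:transp-endorse} as the exact confidentiality--integrity dual of robust declassification by deriving it from case~2 of Theorem~\ref{thm:nmifc} (whose proof is written out only for case~1, with case~2 declared ``precisely dual''), noting that the absence of \texttt{decl} forces the public-equivalence side condition of that case to hold for entire traces, and your three steps---the \ruleref{Endorse}-premise lemma, the bracketed four-way coupling, and the irrelevant-input contradiction against $\rel*{→}{v_1}$---mirror Lemma~\ref{lem:down-release} and the final step of that proof. Two small corrections: the obstacle you anticipate is illusory, since the premise $\stflowjudge*{\pc}{ℓ}$ appears identically in \ruleref{Decl} and \ruleref{Endorse} and decomposes componentwise over the two projections, so the $→/←$ involution really is a symmetry of the whole system; and in your second step the instantiation $\H = \U$ tracks \emph{untrusted}-tagged (not secret-tagged) values, which is what you need to couple each pair $t^{i1}$, $t^{i2}$, while the claim that endorsed values and endorsement decisions are secret-independent uses the complementary $\H = \secret$ instance together with the no-\texttt{decl} hypothesis.
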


\subsection{Nonmalleable information flow}
\label{sec:nmif}

Robust declassification and transparent endorsement each restrict one type of
downgrading, but as structured above, cannot be enforced in the presence of
both declassification and endorsement.
The key difficulty stems from the fact that previously
declassified and endorsed data should be able to influence future
declassifications and endorsements.
However, any endorsement allows an attack to influence declassification, so
varying the secret input can cause the traces to deviate for one attack and not
another.
Similarly, once a declassification has occurred, we can say little about
the relation between trace pairs that fix a secret and vary an attack.

There is one condition that allows us to safely relate trace pairs even after a
downgrade event: if the downgraded values are identical in both trace pairs.
Even if a declassify or endorse could have caused the traces to deviate, if it
did not, then this program is essentially the same as one that started with
that value already downgraded and performed no downgrade.
To capture this intuition, we define nonmalleable information flow in terms of
trace prefixes that either do not deviate in public values when varying only
the secret input or do not deviate in trusted values when varying only the
untrusted input.
This assumption may seem strong at first, but it exactly captures the
intuition that downgraded data---but not secret/untrusted data---should be able
to influence future downgrades.
While two different endorsed attacks could influence a future declassification,
if the attacks are similar enough to result in the same value being endorsed,
they must influence the declassification _in the same way_.

\begin{definition}[Nonmalleable information flow]
  \label{def:nmifc}
  Let $e$ be a program and let $x$ and $y$ be variables representing secret and
  untrusted inputs, respectively.
  We say that $e$ enforces _nonmalleable information flow_ (NMIF) if the
  following holds for all attackers $\A$ inducing high sets $\U$ and $\secret$.
  Let $\trusted = \L \setminus \U$, $\public = \L \setminus \secret$ and
  $\low = \trusted ∩ \secret$.
  For all values $v_1$, $v_2$, $w_1$, and $w_2$, let
  \[ \evalctx{\subst{\subst{e}{x}{v_i}}{y}{w_j}}{v_i;w_j} \stepstoctx \evalctx{v_{ij}}{t^{ij}}. \]
  For all indices $n_{ij}$ such that $\nseteq{\low}{\trelt{t^{ij}}{n_{ij}}}{\bullet}$
  \begin{enumerate}[leftmargin=*]
    \item \label{li:nmifc:rd}
      If $\traceeq{\trusted}{\trpref{t^{i1}}{n_{i1} - 1}}{\trpref{t^{i2}}{n_{i2} - 1}}$ for
      $i = 1,2$, then
      \[ \left(\rel*{←}{w_1} \text{ and } \traceeq{\public}{\trpref{t^{11}}{n_{11}}}{\trpref{t^{21}}{n_{21}}}\right) ~ \Longrightarrow ~ \traceeq{\public}{\trpref{t^{12}}{n_{12}}}{\trpref{t^{22}}{n_{22}}}. \]
    \item \label{li:nmifc:te}
      Similarly, if $\traceeq{\public}{\trpref{t^{1j}}{n_{1j} - 1}}{\trpref{t^{2j}}{n_{2j} - 1}}$
      for $j = 1,2$, then
      \[ \left(\rel*{→}{v_1} \text{ and } \traceeq{\trusted}{\trpref{t^{11}}{n_{11}}}{\trpref{t^{12}}{n_{12}}}\right) ~ \Longrightarrow ~ \traceeq{\trusted}{\trpref{t^{21}}{n_{21}}}{\trpref{t^{22}}{n_{22}}}. \]
  \end{enumerate}
\end{definition}

Unlike the previous conditions, \nmlang enforces NMIF with no syntactic restrictions.

\begin{theorem}[Nonmalleable information flow]
  \label{thm:nmifc}
  For any program $e$ such that $\TVal{Γ,x\ty τ_x,y\ty τ_y;\pc}{e}{τ}$, $e$ enforces NMIF.
\end{theorem}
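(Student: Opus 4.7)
The theorem has two dual clauses, and I focus on clause~1 (the robust-declassification-like part); clause~2 follows by an entirely symmetric argument swapping the roles of confidentiality/integrity, $\view$/$\voice$, and secret/untrusted inputs. Fix an attacker $\A$ with induced $\U$ and $\secret$, the four inputs $v_1, v_2, w_1, w_2$, and the indices $n_{ij}$; let $c^{ij} = \trelt{t^{ij}}{n_{ij}}$.

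The plan is to step the four executions together and do strong induction on the common trace position, maintaining a three-part invariant: (i)~modulo substitution of the four input pairs, the four states share the same term skeleton; (ii)~every subterm on which the column-$1$ and column-$2$ states (fixed $v_i$, varying $w_j$) disagree is guarded by a $\says{ℓ}{\tau}$ with $ℓ \in \U$, and every subterm on which row-$1$ and row-$2$ states (fixed $w_j$, varying $v_i$) disagree is guarded by a $\says{ℓ}{\tau}$ with $ℓ \in \secret$; and (iii)~all previously emitted events are consistent with the clause-$1$ statement at earlier distinguishing positions. Pure reductions preserve the invariant mechanically, and $\eta$-emissions at labels outside $\low$ are absorbed into the invariant via \ruleref{UnitM}'s premise $\stflowjudge*{\pc}{ℓ}$, which forces the protection label to dominate whichever side of the four-way variation the subterm belongs to.

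The heart of the argument is a case split on $c^{ij}$, which by hypothesis carries a label $ℓ \in \low$. If $c^{ij} = \vreturn{ℓ}{v}$ with $ℓ \in \public$, then \ruleref{UnitM} gives $\stflowjudge*{\pc}{ℓ}$, hence $\pc^{→} \notin \A$; the invariant implies that $v$ is produced without unwrapping any $\U$-guarded subterm at this step, so $v$ agrees in traces $(v_i, w_1)$ and $(v_i, w_2)$ for each $i$, and composing with the row-$1$ $\public$-equivalence hypothesis yields the conclusion. If $c^{ij} = \downemit*{←}{v}$, then $ℓ'^{→} = ℓ^{→}$, so $v$ was already visible at $ℓ^{→}$, and the desired public agreement is supplied by the induction hypothesis applied to the $\eta$-event that originally protected $v$. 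The critical case is $c^{ij} = \downemit*{→}{v}$ with $ℓ \in \public$: \ruleref{Decl} gives $\stflowjudge*{ℓ'^{→}}{ℓ^{→} ⊔ \view{(ℓ' ⊔ \pc)^{←}}}$, and since $ℓ^{→} \notin \A$, the confidentiality of $v$ must lie under $\view{(ℓ' ⊔ \pc)^{←}}$; unfolding $\view$ together with the induced-set correspondence from \cref{sec:attacker} forces both $ℓ'^{←}$ and $\pc^{←}$ to lie in $\trusted$, so $v$ was computed entirely from trusted data. The column-$i$ trusted-equivalence hypothesis then forces $v$ to agree across $w_1$ and $w_2$, and composing with the row-$1$ $\public$-equivalence delivers the row-$2$ $\public$-equivalence. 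The hypothesis $\rel*{←}{w_1}$ is invoked precisely here to exclude the degenerate configurations ruled out by \cref{def:irrel-input}: without it, the row-$1$ $\public$-equivalence could arise from an implicit declassification buried inside a data structure rather than from a genuinely flow-controlled release, and the argument above would become vacuous.

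The main obstacle I expect is keeping this four-way invariant stable across interleaved declassifications and endorsements in the presence of higher-order functions and nested $\says$-types. Endorsements let untrusted-guarded data escape into the trusted-equivalent fragment, and declassifications let secret-guarded data escape into the public-equivalent fragment; these can chain in either order. Managing that chaining requires showing, at each downgrade boundary, that the downgraded value itself is common to all four traces --- which is exactly what the length-$(n_{ij}-1)$ stronger equivalences in \cref{def:nmifc} guarantee, and why the definition is phrased in terms of prefixes of that precise length rather than the weaker length $n_{ij}$.
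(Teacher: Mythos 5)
Your plan is, at its core, the approach the paper itself takes: a four-execution simulation in which disagreements stay confined to subterms protected at high labels (the paper formalizes exactly your invariant~(ii) with the bracket construction of \cref{sec:full-nmlang} and re-establishes it across downgrade boundaries via the induction in Lemma~\ref{lem:down-release}), the \ruleref{Decl} premise combined with the attacker-closure properties of \cref{sec:attacker-properties} to show that a secret-to-public declassification can only release data of trusted provenance, and the relevance hypothesis to dispose of attacker-supplied values nested inside declassified data. The paper proves clause~1 in contrapositive form rather than directly, but that difference is not merely stylistic --- it is forced by the point below. (Also, two sign slips: for $ℓ ∈ \public$ you have $ℓ^{→} ∈ \A$, not $ℓ^{→} \notin \A$, and correspondingly for $\pc$; the conclusions you draw from \ruleref{Decl} are nonetheless the right ones.)

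The genuine gap is the last step of your critical declassification case. From the column hypotheses you correctly get $\seteq{\trusted}{w_{i1}}{w_{i2}}$ for the declassified values, and from the row-1 hypothesis $\seteq{\public}{w_{11}}{w_{21}}$; but these do \emph{not} compose to give $\seteq{\public}{w_{12}}{w_{22}}$. Trusted-equivalence says nothing about the public--\emph{untrusted} components nested inside $w_{11}$ and $w_{12}$, and those are precisely the attacker-controlled values of the program in Figure~\ref{fig:inept-attack}, which can coincide under attack $w_1$ yet differ under $w_2$. So relevance of $w_1$ is not a side condition excluding a ``degenerate configuration''; it is the entire content of this step, and it can only be deployed by contradiction: suppose $\nseteq{\public}{w_{12}}{w_{22}}$ while $\seteq{\public}{w_{11}}{w_{21}}$, verify that the four prefixes up to the declassification events are public-trusted equivalent (condition~\ref{ii:li:pt-equiv} of \cref{def:irrel-input}), and exhibit $v_1$, $v_2$, $w_2$ together with those four indices as witnesses that $w_1$ is an irrelevant $←$-input, contradicting $\rel*{←}{w_1}$. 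This is exactly why the paper states and proves the clause as a contrapositive; as written, your direct composition fails on Figure~\ref{fig:inept-attack}. Relatedly, your repair of invariant~(ii) after a downgrade (``the downgraded value is common to all four traces'') is only justified for values that are both public and trusted; for declassified values with untrusted components the common-value claim is again precisely what relevance, not the prefix equivalences, must supply.
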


We note that both Theorems~\ref{thm:rob-decl} and \ref{thm:transp-endorse} are
directly implied by Theorem~\ref{thm:nmifc}.
For robust declassification, the syntactic prohibition on "endorse" directly
enforces $\traceeq{\trusted}{t^{i1}}{t^{i2}}$ (for the entire trace), and the
rest of case~\ref{li:nmifc:rd} is exactly that of Theorem~\ref{thm:rob-decl}.
Similarly, the syntactic prohibition on "decl" enforces
$\traceeq{\public}{t^{1j}}{t^{2j}}$, while the rest of case~\ref{li:nmifc:te}
is exactly Theorem~\ref{thm:transp-endorse}.

\section{NMIF as 4-safety}
\label{sec:hyperproperty}

\citet{cs08} define a _hyperproperty_ as ``a set of sets of infinite traces,''
and _hypersafety_ to be a hyperproperty that can be
characterized by a finite set of finite trace prefixes defining some ``bad thing.''
That is, given any of these finite sets of trace prefixes it is impossible to
extend those traces to satisfy the hyperproperty.
It is therefore possible to show that a program satisfies a hypersafety
property by proving that no set of finite trace prefixes emitted by the program
fall into this set of ``bad things.''
They further define a _$k$-safety hyperproperty_ (or _$k$-safety_) as a
hypersafety property that limits the set of traces needed to identify a
violation to size $k$.

Clarkson and Schneider note that noninterference provides an example of 2-safety.
We demonstrate here that robust declassification, transparent endorsement, and
nonmalleable information flow are all 4-safety properties.\footnote{While \nmlang produces finite traces 
and hyperproperties are defined for
infinite traces, we can easily extend \nmlang traces by stuttering $\bullet$
infinitely after termination.}

For a condition to be 2-safety, it must be possible to demonstrate a
violation using only two finite traces.
With noninterference, this demonstration is simple: if two traces with low-equivalent inputs
are distinguishable by a low observer, the program is interfering.

Robust declassification, however, cannot be represented this way.
It says that the program's confidentiality release events cannot be influence
by untrusted inputs.
If we could precisely identify the release events, this would allow us to
specify robust declassification as a 2-safety property on those release
events.
If every pair of untrusted inputs results in the same trace of confidentiality
release events, the program satisfies robust declassification.
However, to identify confidentiality release events requires comparing
traces with different secret inputs.
A trace consists of a set of observable states, not a set of release
events. Release events are identified by varying secrets; the robustness of releases is
identified by varying untrusted input.
Thus we need 4 traces to properly characterize robust
declassification.

Both prior work~\cite{cm06} and our definition in \cref{sec:rob-decl-thm} state
robust declassification in terms of four traces, making it easy to convert to a
4-hyperproperty.
That formulation cannot, however, be directly translated to 4-safety.
It instead requires a statement about trace prefixes, which cannot be
invalidated by extending traces.

Instead of simply reformulating Definition~\ref{def:rob-decl} with trace
prefixes, we modify it using insights gained from the definition of NMIF.
In particular, instead of a strict requirement that if a relevant attack
results in public-equivalent trace prefixes then other attacks must as well,
we relax this requirement to apply only when the trace prefixes are
trusted-equivalent.
As noted in Section~\ref{sec:nmif}, if we syntactically prohibit "endorse"---%
the only case in which we could enforce the previous definition---this
trivially reduces to that definition.
Without the syntactic restriction, however, the new condition is still enforceable.

For a given attacker $\A$ we can define a 4-safety property with respect to $\A$ (let
$\U$, $\secret$, $\trusted$, $\public$, and $\low$ be as in
Definition~\ref{def:nmifc}).
{\small
\begin{align*}
  \mathbf{RD}_{\A} ~\triangleq~ \Big\{ \mathbf{T} & ⊆ \mathbb{T} \mid \mathbf{T} = \left\{t^{11}, t^{12}, t^{21}, t^{22} \right\} \\
  & \land \trelt{t^{ij}}{1} ≠ \bullet ~\land~ \trelt{t^{ij}}{2} ≠ \bullet ~\land~ \trelt{t^{i1}}{1} = \trelt{t^{i2}}{1} ~\land~ \trelt{t^{1j}}{2} = \trelt{t^{2j}}{2} \\
  & \Longrightarrow \Big(\forall \{n_{ij}\} ⊆ \mathbb{N} : \big(\nseteq*{\trelt{t^{ij}}{n_{ij}}}{\bullet} ~\land~ \traceeq{\trusted}{\trpref{t^{i1}}{n_{i1}-1}}{\trpref{t^{i2}}{n_{i2}-1}} \\
  & \qquad\qquad \land~\traceeq{\public}{\trpref{t^{11}}{n_{11}}}{\trpref{t^{21}}{n_{21}}}~\land~\ntraceeq{\public}{\trpref{t^{12}}{n_{12}}}{\trpref{t^{22}}{n_{22}}}\big) \\
  & \qquad \Longrightarrow \seteq*{\trpref{t^{12}}{n_{12}}}{\trpref{t^{22}}{n_{22}}} \Big)\Big\}
\end{align*}
}
We then define robustness against all attackers as the
intersection over all attackers: $\mathbf{RD} = \bigcap_{\A} \mathbf{RD}_{\A}$.

The above definition structurally combines Definition~\ref{def:irrel-input}
with the first clause of Definition~\ref{def:nmifc} to capture both the
equivalence and the relevant-input statements of the original theorem.
In the nested implication, if the first two clauses hold
($\nseteq*{\trelt{t^{ij}}{n_{ij}}}{\bullet}$ and
$\traceeq{\trusted}{\trpref{t^{i1}}{n_{i1}-1}}{\trpref{t^{i2}}{n_{i2}-1}}$), then one of three
things must happen when fixing the attack and varying the secret: both trace
pairs are equivalent, both trace pairs are non-equivalent, or the postcondition
of the implication holds ($\seteq*{\trpref{t^{12}}{n_{12}}}{\trpref{t^{22}}{n_{22}}}$).
The first two satisfy the equivalency implication in Definition~\ref{def:nmifc}
while the third is exactly a demonstration that the first input is irrelevant.

Next we note that, while this does not strictly conform to the definition of
robust declassification in Definition~\ref{def:rob-decl} which cannot be
stated as a hypersafety property, $\mathbf{RD}$ is equivalent to Definition~\ref{def:rob-decl} 
for programs that do not perform endorsement.
This endorse-free condition means that the equivalence clause $\traceeq{\trusted}{\trpref{t^{i1}}{n_{i1}-1}}{\trpref{t^{i2}}{n_{i2}-1}}$ will be
true whenever the trace prefixes refer to the same point in execution.
In particular, they can refer to the end of execution, which gives exactly the
condition specified in the theorem.

As with every other result so far, the dual construction results in a 4-safety
property $\mathbf{TE}$ representing transparent endorsement.
Since $\mathbf{RD}$ captures the first clause of Definition~\ref{def:nmifc},
$\mathbf{TE}$ thus captures the second.
This allows us to represent nonmalleable information flow as a 4-safety
property very simply: $\mathbf{NMIF} = \mathbf{RD} ∩ \mathbf{TE}$.

Figure~\ref{fig:hyper} illustrates the relation between these
hyperproperty definitions.  Observe that the 2-safety hyperproperty
$\mathbf{NI}$ for noninterference is contained in all three
4-safety hyperproperties. The insecure example programs of
\cref{sec:motivation} are found in the left crescent, satisfying $\mathbf{RD}$
but not $\mathbf{NMIF}$.

\begin{figure}
\begin{center}
\includegraphics[height=9em]{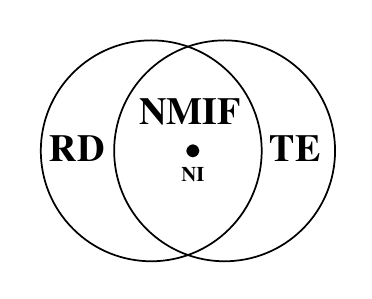}
\end{center}
\vspace{-1em}
\caption{Relating 4-safety hyperproperties and noninterference.}
\label{fig:hyper}
\end{figure}

\section{Implementing NMIF} 
\label{sec:impl}

\lstset{
  frame=none,
  xleftmargin=2pt,
  stepnumber=1,
  numbers=none,
  numbersep=5pt,
  numberstyle=\ttfamily\scriptsize\color[gray]{0.3},
  belowcaptionskip=\bigskipamount,
  captionpos=b,
  escapeinside={(*}{*)},
  language=haskell,
  tabsize=2,
  emphstyle={\bf},
  commentstyle=\it\color[gray]{0.3},
  stringstyle=\mdseries\ttfamily,
  showspaces=false,
  keywordstyle=\bfseries\ttfamily,
  columns=flexible,
  basicstyle=\small\linespread{1}\ttfamily, 
  keepspaces=true,
  showstringspaces=false,
}
\label{sec:enforce}

We have implemented the rules for nonmalleable information flow in
context of Flame, a Haskell library and GHC~\cite{ghc}
plugin.  Flame provides data structures and compile-time checking of
type-level acts-for constraints that are checked using a
custom type-checker plugin.
These constraints are used as the basis for encoding \nmlang
as a shallowly-embedded domain-specific language (DSL).
We have demonstrated that programs
enforcing nonmalleable information flow can be built using this new DSL.

\subsection{Information-flow monads in Flame}

The DSL works by wrapping sensitive information in an abstract data
type---a monad---that includes a principal type parameter representing the
confidentiality and integrity of the information.  

The Flame library tracks computation on protected information as a
monadic effect and provides operations that constrain such
computations to enforce information security. This effect is modeled
using the "IFC" type class defined in Figure~\ref{fig:flameIFC}.
The type class "IFC" is parameterized by two additional types, "n" in the "Labeled"
type class and "e" in "Monad".  Instances of the "Labeled" type class
enforce noninterference on pure computation—no downgrading or effects.
The "e" parameter represents an effect we want to control.  For
instance, many Flame libraries control effects in the "IO" monad,
which is used for input, output, and mutable references.

The type "m e n pc l a" in Figure~\ref{fig:flameIFC} associates a
label "l" with the result of a computation of type "a", as well as a
program counter label "pc" that bounds the confidentiality and
integrity of side effects for some effect "e".  Confidentiality and integrity projections 
are represented by type constructors "C" and "I". The "protect" operator
corresponds to monadic unit $η$ (rule \ruleref{UnitM}).  Given any term,
"protect" labels the term and lifts it into an "IFC" type where "pc ⊑ l".

The "use" operation corresponds to a "bind" term in \nmlang.  Its
constraints implement the \ruleref{BindM} typing rule.  Given a
protected value of type "m e n pc l a" and a function on a value of type "a" with
return type "m e n pc' l' b", "use" returns the result of applying the
function, provided that "l ⊑ l'" and "(pc ⊔ l) ⊑ pc'".  Finally,
"runIFC" executes a protected computation, which results in a labeled
value of type "(n l a)" in the desired effect "e".

\newcommand\findef{\vspace{0.6em}}
\begin{figure}
\begin{lstlisting}
class (Monad e, Labeled n) => IFC m e n where
  protect :: (pc ⊑ l) => a -> m e n pc l a (*\findef*)
  use :: (l ⊑ l', pc ⊑ pc', l ⊑ pc', pc ⊑ pc'') =>
      m e n pc l a -> (a -> m e n pc' l' b)
                   -> m e n pc'' l' b (*\findef*)
  runIFC :: m e n pc l a -> e (n l a)
\end{lstlisting}
\caption{Core information flow control operations in Flame.}
\label{fig:flameIFC}
\end{figure}

We provide "NMIF", which extends the "IFC" type
class with "endorse" and "declassify" operations. The constraints on
 these operations implement the typing rules
\ruleref{Endorse} and \ruleref{Decl} respectively.

\begin{figure}
\begin{lstlisting}
class IFC m e n => NMIF m e n where
  declassify :: ( (C pc) ⊑ (C l)
                , (C l') ⊑ (C l) ⊔ Δ(I (l' ⊔ pc))
                , (I l') === (I l)) =>
             m e n pc l' a -> m e n pc l a (*\findef*) 
     endorse :: ( (I pc) ⊑ (I l)
                , (I l') ⊑ (I l) ⊔ ∇(C (l' ⊔ pc))
                , (C l') === (C l)) =>
             m e n pc l' a -> m e n pc l a 
\end{lstlisting}
\caption{Nonmalleable information flow control in Flame.}
\label{fig:flaops}
\end{figure}

We implemented the secure and insecure sealed-bid auction examples from
Section~\ref{sec:cheat-auction} using "NMIF" operations, shown in
Figure~\ref{fig:flameauc}.
As expected, the insecure "badrecv" is rejected by the compiler while the
secure "recv" type checks.

\begin{figure}
\begin{lstlisting}
recv :: (NMIF m e n, (I p) ⊑ ∇(C p)) =>
     n p a
     -> m e n (I (p ∧ q)) (p ∧ (I q)) a
recv v = endorse $ lift v (*\findef*)
badrecv :: (NMIF m e n, (I p) ⊑ ∇(C p)) =>
        n (p ∧ C q) a
        -> m e n (I (p ∧ q)) (p ∧ q) a
badrecv v = endorse $ lift v (*\textit{\textcolor{red}{\{-REJECTED-\}}}*)
\end{lstlisting}
\caption{Receive operations in \texttt{NMIF}.
  The secure \texttt{recv} is accepted, but the insecure \texttt{badrecv} is
  rejected.}
\label{fig:flameauc}
\end{figure}

\subsection{Nonmalleable HTTP Basic Authentication}

To show the utility of \nmlang, we adapt a simple existing Haskell web
application~\cite{memodb} based on the Servant~\cite{servant} framework to run
in Flame.  
The application allows users to create, fetch, and delete shared memos.
Our version uses HTTP Basic Authentication and Flame's security mechanisms to
restrict access to authorized users.
We have deployed this application online at \demourl.

\begin{figure}
\begin{lstlisting}[basicstyle=\linespread{0.9}\ttfamily\small]
authCheck :: Lbl MemoClient BasicAuthData 
          -> NM IO (I MemoServer) (I MemoServer) 
                   (BasicAuthResult Prin) 
authCheck lauth =
 let lauth' = endorse $ lift lauth
     res = use lauth' $ \(BasicAuthData user guess) ->
           ebind user_db $ \db ->
           case Map.lookup user db of
            Nothing  -> protect Unauthorized
            Just pwd -> 
              if guess == pwd then
               protect $ Authorized (Name user)
              else
               protect Unauthorized 
  in declassify res
\end{lstlisting}
\caption{A nonmalleable password checker in Servant.}
\label{fig:flamebasic}
\end{figure}

Figure~\ref{fig:flamebasic} contains the function "authCheck", which
checks passwords in this application using the "NM" data type, which is an 
instance of the "NMIF" type class.  The function takes a
value containing the username and password guess of the authentication
attempt, labeled with the confidentiality and integrity of an
unauthenticated client, "MemoClient".  This value is endorsed to have
the integrity of the server, "MemoServer".  This operation is safe
since it only endorses information visible to the client.  Next, the
username is used to look up the correct password and compare it to the
client's guess. If they match, then the user is authorized.  The
result of this comparison is secret, so before returning the result,
it must be declassified.

Enforcing any form of information flow control on authentication
mechanisms like "authCheck" provides more information security
guarantees than traditional approaches.
Unlike other approaches, nonmalleable information flow offers strong guarantees
even when a computation endorses untrusted information.
This example shows it is possible to construct applications
that offer these guarantees.

\section{Related work}
\label{sec:related}

Our efforts belong both within a significant body of work attempting
to develop semantic security conditions that are more nuanced than
noninterference, and within an overlapping body of work aiming to
create expressive practical enforcement mechanisms for
information flow control. Most prior work focuses on
relaxing confidentiality restrictions; work
permitting downgrading of integrity imposes
relatively simple controls and lacks
semantic security conditions that capture the concerns exemplified
in \cref{sec:motivation}.

\emph{Intransitive noninterference}~\cite{Rushby92, Pinsky95,
Roscoe99,vanderMeyden2007} is an information flow condition that
permits information to flow only between security levels (or
\emph{domains}) according to some (possibly intransitive) relation.
It does not address the concerns of nonmalleability.

Decentralized information flow control (DIFC)~\cite{ml-tosem} introduces the
idea of mediating downgrading using access control~\cite{pottier00}.
However, the lack of robustness and transparency means downgrading
can still be exploited in these systems
(e.g.,~\cite{myers-popl99,asbestos,histar,flume}).

Robust declassification and qualified robustness have been explored in
DIFC systems as a way to constrain the adversary's influence on
declassification~\cite{zm01b,zznm02,msz06,cm08,am11,flam,flac}.
While transparent endorsement can be viewed as an integrity
counterpart to robust declassification, this idea is not
present in prior work.

Sabelfeld and Sands provide a clarifying taxonomy
for much prior work on declassification~\cite{ss05},
introducing various dimensions along which declassification
mechanisms operate. They categorize robust declassification as lying on
the ``who'' dimension. However, they do not explicitly consider
endorsement mechanisms. Regardless of the taxonomic category,
transparent endorsement and nonmalleable information flow also seem to lie on
the same dimension as robust declassification, since they take into account
influences on the information that is downgraded.

Label algebras~\cite{lblalgebra} provide an abstract characterization
of several DIFC systems. However, they do not address the restrictions
on downgrading imposed by nonmalleable information flow.

The Aura language~\cite{aura} uses information flow policies to
constrain authorization and endorsement. However, it does not address
the malleability of endorsement. Rx~\cite{shtz06} represents
information flow control policies in terms of dynamic
_roles_~\cite{RBAC}.  Adding new principals to these roles corresponds
to declassification and endorsement since new flows may occur.  Rx
constrains updates to roles similarly to previous
type systems that enforce robust declassification and qualified
robustness but does not prevent opaque endorsements.

Relational Hoare Type Theory~\cite{rhtt} (RHTT) offers a powerful and precise
way to specify security conditions that are 2-hyperproperties,
such as noninterference.
Cartesian Hoare logic~\cite{sousa2016cartesian} (CHL) extends standard Hoare logic to
reason about $k$-safety properties of relational traces (the input/output pairs of a program).
Since nonmalleable information flow, robust declassification, and transparent endorsement
are all 4-safety properties that cannot be fully expressed with relational traces, neither RHTT nor CHL can 
characterize them properly.

Haskell's type system has been attractive
target for embedding information flow checking~\cite{lz06,
stefan:2014:building-haskell, hlio}.  Much prior work has focused on
dynamic information flow control.
LIO~\cite{stefan:2014:building-haskell}
requires computation on protected information to occur
in the "LIO" monad, which tracks the
confidentiality and integrity of information accessed (``unlabeled'') by the computation. HLIO~\cite{hlio} explores hybrid
static and dynamic enforcement.  Flame enforces
information flow control statically, and the "NMIF" type class
enforces nonmalleable IFC statically as well.
The static component of HLIO enforces solely via the Haskell
type system (and existing general-purpose extensions), but Flame---and by
extension, "NMIF"---uses custom constraints based on the FLAM algebra
which are processed by a GHC type checker plugin. Extending the
type checker to reason about FLAM constraints significantly improves
programmability over pure-Haskell approaches like HLIO.

\section{Conclusion}
\label{sec:conclusions}

Downgrading mechanisms like declassification and endorsement 
make information flow mechanisms sufficiently flexible and
expressive for real programs. However, we have shown that previous
notions of information-flow security missed the dangers endorsing
confidential information.
We therefore
introduced transparent endorsement as a security property that rules
out such influences and showed that it is dual to robust
declassification. Robust declassification and transparent endorsement
are both consequences of a stronger property, nonmalleable information
flow, and we have formulated all three as 4-safety properties.
We have shown how to provably enforce these security
properties in an efficient, compositional way using a security type system.
 Based on our Haskell implementation, these security conditions and
enforcement mechanism appear to be practical, supporting the secure
construction of programs with complex security requirements.

While security-typed languages are not yet mainstream,
information flow control, sometimes in the guise of taint tracking,
has become popular as a way to detect and control
real-world vulnerabilities~(e.g., \cite{ifappstore}). Just as the
program analyses used are approximations of previous security type
systems targeting noninterference, it is reasonable to expect the
\nmlang type system to be a useful guide for other analyses and enforcement
mechanisms.

\section*{Acknowledgments}

Many people helped us with this work.  Martín Abadi posed a
provocative question about dualities.  Pablo Buiras helped develop the
memo example. David Naumann pointed out work on $k$-safety.  Tom
Magrino, Yizhou Zhang, and the anonymous reviewers gave us useful
feedback on the paper.

Funding for this work was provided by NSF grants 1513797 and 1524052,
and by a gift from Google.
Any opinions, findings, conclusions, or recommendations expressed
here are those of the author(s) and do not necessarily reflect those
of these sponsors.

\bibliographystyle{ACM-Reference-Format}
\bibliography{../bibtex/pm-master}

\appendices

\section{Full \nmlang}
\label{sec:full-nmlang}

We present the full syntax, semantics, and typing rules for \nmlang in
Figures~\ref{fig:full-syntax}, \ref{fig:full-semantics}, and
\ref{fig:full-types}, respectively.
This is a straightforward extension of the core language presented in
Section~\ref{sec:nmlang}.
We note that polymorphic terms specify a $\pc$ just as $λ$ terms.
This is because they contain arbitrary expressions which could produce
arbitrary effects, so we must constrain the context that can execute those
effects.

\begin{figure}
  \small
  \[
    \begin{array}{rcl}
      \multicolumn{3}{l}{ n ∈ \N \text{  (atomic principals)}} \\
      \multicolumn{3}{l}{ x ∈ \mathcal{V} \text{  (variable names)}} \\
      \\
      p,ℓ,\pc &::=&  n \sep \top \sep \bot \sep p^{π} \sep p ∧ p\sep p ∨ p \sep p ⊔ p \sep p ⊓ p \\[0.8em]
      τ &::=& \voidtype \sep X \sep \sumtype{τ}{τ} \sep \prodtype{τ}{τ} \\[0.4em]
        & \sep & \func{τ}{\pc}{τ} \sep \tfunc{X}{\pc}{τ} \sep \says{ℓ}{τ} \\[0.8em]
      v &::=& \void \sep \inji{v} \sep \pair{v}{v} \sep \vreturn{ℓ}{v} \\[0.4em]
        & \sep & \lamc{x}{τ}{\pc}{e} \sep \tlam{X}{\pc}{e} \\[0.8em]
      e &::=& x \sep v \sep e~e \sep e~τ \sep \pair{e}{e} \sep \return{ℓ}{e} \\[0.4em]
        & \sep & \proji{e} \sep \inji{e} \sep \bind{x}{e}{e} \\[0.4em]
        & \sep & \casexp{e}{x}{e}{e} \\[0.4em]
        & \sep & \cdowngrade{e}{ℓ} \sep \idowngrade{e}{ℓ}
    \end{array}
  \]
  \caption{Full \nmlang syntax.}
  \label{fig:full-syntax}
\end{figure}

\begin{figure}
\begin{flushleft}
  \small
  \boxed{e \stepsone e'} \\
  \begin{mathpar}
    \erule{E-App}{}{(\lamc{x}{τ}{\pc}{e})~v}{\subst{e}{x}{v}}

    \erule{E-TApp}{}{(\tlam{X}{\pc}{e})~τ}{\subst{e}{X}{τ}}

    \erule{E-UnPair}{}{\proji{\pair{v_1}{v_2}}}{v_i}

    \erule{E-Case}{}{(\casexp{(\inji{v})}{x}{e_1}{e_2})}{\subst{e_i}{x}{v}}

    \erule{E-BindM}{}{\bind{x}{\vreturn{ℓ}{v}}{e}}{\subst{e}{x}{v}}
    \hfill
  \end{mathpar}

  \boxed{\evalctx*{e} \stepsonectx \evalctx'{e'}} \\
  \begin{mathpar}
    \erulectx{E-Step}{e \stepsone e'}{\evalctx*{e}}{\evalctx{e'}{t;\bullet}}

    \erulectx{E-UnitM}{}{\evalctx*{\return{ℓ}{v}}}{\evalctx{\vreturn{ℓ}{v}}{t;\vreturn{ℓ}{v}}}

    \erulectx{E-Decl}{}{\evalctx*{\cdowngrade{\vreturn{ℓ'}{v}}{ℓ}}}{\evalctx{\vreturn{ℓ}{v}}{t;\downemit*{→}{v}}}

    \erulectx{E-Endorse}{}{\evalctx*{\idowngrade{\vreturn{ℓ'}{v}}{ℓ}}}{\evalctx{\vreturn{ℓ}{v}}{t;\downemit*{←}{v}}}

    \erulectx{E-Eval}{\evalctx*{e} \stepsonectx \evalctx'{e'}}{\evalctx*{E[e]}}{\evalctx'{E[e']}}
    \hfill
  \end{mathpar}

  \vspace{1em}
  \underline{Evaluation context}
  \[
    \begin{array}{rcl}
      E & ::= & [\cdot] \sep E~e \sep v~E \sep E~τ \sep \pair{E}{e} \sep \pair{v}{E} \sep \return{ℓ}{E} \\[0.4em]
        & \sep & \proji{E} \sep \inji{E} \sep \bind{x}{E}{e}  \\[0.4em]
        & \sep & \casexp{E}{x}{e}{e} \\[0.4em]
        & \sep & \cdowngrade{E}{ℓ} \sep \idowngrade{E}{ℓ} \\[0.8em]
    \end{array}
  \]
\end{flushleft}
\caption{Full \nmlang operational semantics.}
\label{fig:full-semantics}
\end{figure}

\begin{figure}
\begin{flushleft}
  \rulefiguresize
  \boxed{\protjudge*{ℓ}{τ}}
  \begin{mathpar}
    \protrule{P-Unit}{}{ℓ}{\voidtype}

    \protrule{P-Lbl}{\stflowjudge*{ℓ'}{ℓ}}{ℓ'}{\says{ℓ}{τ}}

    \protrule{P-Pair}{
      \protjudge*{ℓ}{τ_1} \\
      \protjudge*{ℓ}{τ_2}
    }{ℓ}{\prodtype{τ_1}{τ_2}}
    \hfill
  \end{mathpar}

  \boxed{\protjudge*{τ}{\H}}
  \begin{mathpar}
    \Rule[$\H$ is upward closed]{P-Set}{%
      H ∈ \H \\
      \protjudge*{H}{τ}
    }{\hightype{\H}{τ}}
    \hfill
  \end{mathpar}
\end{flushleft}
\caption{Type protection levels.}
\label{fig:full-protect}
\end{figure}

\begin{figure}
\begin{flushleft}
  \rulefiguresize
  \boxed{\TValGpc{e}{τ}} \\
  \begin{mathpar}
    \Rule{Var}{}{\TVal{Γ,x\ty τ,Γ';\pc}{x}{τ}}

    \Rule{Unit}{}{\TValGpc{\void}{\voidtype}}

    \Rule{Lam}{%
      \TVal{Γ,x\ty τ_1;\pc'}{e}{τ_2}
    }{\TValGpc{\lamc{x}{τ_1}{\pc'}{e}}{\func{τ_1}{\pc'}{τ_2}}}

    \Rule{App}{%
      \TValGpc{e_1}{\func{τ'}{\pc'}{τ}} \\\\
      \TValGpc{e_2}{τ'} \\
      \stflowjudge*{\pc}{\pc'}
    }{\TValGpc{e_1~e_2}{τ}}

    \Rule{TLam}{%
      \TVal{Γ,X;\pc'}{e}{τ}
    }{\TValGpc{\tlam{X}{\pc'}{e}}{\tfunc{X}{\pc'}{τ}}}

    \Rule[$τ'$ is well-formed in $Γ$]{%
      TApp}{\TValGpc{e}{\tfunc{X}{\pc'}{τ}} \\\\
      \stflowjudge*{\pc}{\pc'}
    }{\TValGpc{(e~τ')}{\subst{τ}{X}{τ'}}}

    \Rule{Pair}{%
      \TValGpc{e_1}{τ_1} \\
      \TValGpc{e_2}{τ_2}
    }{\TValGpc{\pair{e_1}{e_2}}{\prodtype{τ_1}{τ_2}}}

    \Rule{UnPair}{%
      \TValGpc{e}{\prodtype{τ_1}{τ_2}}
    }{\TValGpc{\proji{e}}{τ_i}}

    \Rule{Inj}{%
      \TValGpc{e}{τ_i}
    }{\TValGpc{\inji{e}}{\sumtype{τ_1}{τ_2}}}

    \Rule{Case}{%
      \TValGpc{e}{\sumtype{τ_1}{τ_2}} \\
      \protjudge*{\pc}{τ} \\\\
      \TVal{Γ,x\ty τ_1;\pc}{e_1}{τ} \\
      \TVal{Γ,x\ty τ_2;\pc}{e_2}{τ} \\
    }{\TValGpc{\casexp{e}{x}{e_1}{e_2}}{τ}}

    \Rule{UnitM}{%
      \TValGpc{e}{τ} \\
      \stflowjudge*{\pc}{ℓ}
    }{\TValGpc{\return{ℓ}{e}}{\says{ℓ}{τ}}}

    \Rule{VUnitM}{
      \TValGpc{v}{τ}
    }{\TValGpc{\vreturn{ℓ}{v}}{\says{ℓ}{τ}}}

    \Rule{BindM}{%
      \TValGpc{e}{\says{ℓ}{τ'}} \\
      \protjudge*{ℓ}{τ} \\\\
      \TVal{Γ,x\ty τ';\pc ⊔ ℓ}{e'}{τ} \\
    }{\TValGpc{\bind{x}{e}{e'}}{τ}}

    \Rule{Decl}{%
      \TValGpc{e}{\says{ℓ'}{τ}} \\
      ℓ'^{←} = ℓ^{←} \\
      \stflowjudge*{\pc}{ℓ} \\\\
      \stflowjudge*{ℓ'^{→}}{ℓ^{→} ⊔ \view{(ℓ' ⊔ \pc)^{←}}} \\
    }{\TValGpc{\cdowngrade{e}{ℓ}}{\says{ℓ}{τ}}}

    \Rule{Endorse}{%
      \TValGpc{e}{\says{ℓ'}{τ}} \\
      ℓ'^{→} = ℓ^{→} \\
      \stflowjudge*{\pc}{ℓ} \\\\
      \stflowjudge*{ℓ'^{←}}{ℓ^{←} ⊔ \voice{(ℓ' ⊔ \pc)^{→}}} \\
    }{\TValGpc{\idowngrade{e}{ℓ}}{\says{ℓ}{τ}}}
    \hfill
  \end{mathpar}
\end{flushleft}
\caption{Typing rules for full \nmlang language.}
\label{fig:full-types}
\end{figure}

Figure~\ref{fig:af-rules} presents the full set of derivation rules for the
acts-for (delegation) relation $p ≽ q$.

\begin{figure}
\begin{flushleft}
  \rulefiguresize
  \boxed{\stafjudge*{p}{q}} \\
  \begin{mathpar}
    \Rule{Bot}{}{\stafjudge*{p}{⊥}}
    \and
    \Rule{Top}{}{\stafjudge*{⊤}{p}}
    \and
    \Rule{Refl}{}{\stafjudge*{p}{p}}
    \and
    \Rule{Proj}{\stafjudge*{p}{q}}{\stafjudge*{p^{π}}{q^{π}}}
    \and
    \Rule{ProjR}{}{\stafjudge*{p}{p^{π}}}
    \and
    \Rule{ConjL}{
      \stafjudge*{p_i}{q} \\\\
      i ∈ \{1,2\}
    }{\stafjudge*{p_1 ∧ p_2}{q}}
    \and
    \Rule{ConjR}{
      \stafjudge*{p}{q_1} \\\\
      \stafjudge*{p}{q_2}
    }{\stafjudge*{p}{q_1 ∧ q_2}}
    \and
    \Rule{DisL}{
      \stafjudge*{p_1}{q} \\\\
      \stafjudge*{p_2}{q}
    }{\stafjudge*{p_1 ∨ p_2}{q}}
    \and
    \Rule{DisR}{
      \stafjudge*{p}{q_i} \\\\
      i ∈ \{1,2\}
    }{\stafjudge*{p}{q_1 ∨ q_2}}
    \and
    \Rule{Trans}{
      \stafjudge*{p}{q} \\
      \stafjudge*{q}{r}
    }{\stafjudge*{p}{r}}
    \hfill
  \end{mathpar}
\end{flushleft}
\caption{Principal lattice rules}
\label{fig:af-rules}
\end{figure}

\subsection{Label tracking with brackets}

In order to simply proofs of hyperproperties requiring 2 and 4 traces, we
introduce a new bracket syntax to track secret and untrusted data.
These brackets are inspired by those used by Pottier and Simonet~\cite{ps03} to
prove their FlowCaml type system enforced noninterference.
Their brackets served two purposes simultaneously.
First they allow a single execution of a bracketed program to faithfully model
two executions of a non-bracketed program.
Second, the brackets track secret/untrusted information through execution of
the program, thereby making it easy to verify that it did not interfere with
public/trusted information simply by proving that brackets could not be
syntactically present in such values.
Since noninterference only requires examining pairs of traces, these purposes
complement each other well; if the two executions vary only on high inputs,
then low outputs cannot contain brackets.

While this technique is very effective to prove noninterference, nonmalleable
information flow provides security guarantees even in the presence of both
declassification and endorsement.
As a result, we need to track secret/untrusted information even through
downgrading events that can cause traces to differ arbitrarily.
To accomplish this goal, we use brackets that serve only the second purpose:
they track restricted information but not multiple executions.

As in previous formalizations, \nmlang's brackets are defined with respect to a
notion of ``high'' labels, in this case a high set.
The high set restricts the type of the expression inside the bracket as well as
the $\pc$ at which it must type, thereby restricting the effects it can create.
For the more complex theorems we must track data with multiple different high
labels within the same program execution, so we parameterize the brackets
themselves with the high set.
We present the extended syntax, semantics, and typing rules in Figure~\ref{fig:extensions}.

\begin{figure}
\begin{flushleft}
  \small
  \underline{Syntax extensions}
  \begin{align*}
    v \quad & ::= \quad \cdots \sep \lab*{v} \\
    e \quad & ::= \quad \cdots \sep \lab*{e}
  \end{align*}

  \underline{New contexts}
  \begin{align*}
    E \quad & ::= \quad \cdots \sep \lab*{E} \\
    \bctx \quad & ::= \quad \proji{[\cdot]} \sep \bind{x}{[\cdot]}{e}
  \end{align*}

  \underline{Evaluation extensions}
  \begin{mathpar}
    \erule{B-Expand}{}{\bctx[\lab*{v}]}{\lab*{\bctx[v]}}

    \erule{B-DeclL}{ℓ \notin \H}{\cdowngrade{\lab*{v}}{ℓ}}{\cdowngrade{v}{ℓ}}

    \erule{B-DeclH}{ℓ ∈ \H}{\cdowngrade{\lab*{v}}{ℓ}}{\lab*{\cdowngrade{v}{ℓ}}}

    \erule{B-EndorseL}{ℓ \notin \H}{\idowngrade{\lab*{v}}{ℓ}}{\idowngrade{v}{ℓ}}

    \erule{B-EndorseH}{ℓ ∈ \H}{\idowngrade{\lab*{v}}{ℓ}}{\lab*{\idowngrade{v}{ℓ}}}
    \hfill
  \end{mathpar}

  \underline{Typing extensions}
  \begin{mathpar}
    \Rule[$\H$ is upward closed]{Bracket}{%
      \TVal{Γ;\pc'}{e}{τ} \\
      \stflowjudge*{\pc}{\pc'} \\\\
      \pc' ∈ \H \\
      \hightype{\H}{τ}
    }{\TValGpc{\lab*{e}}{τ}}
    \hfill
  \end{mathpar}

  \underline{Bracket projection}
  \[
    \bget{}{e} = \begin{cases}
      \bget{}{e'} \quad \text{if } e = \lab*{e'} \\
      \text{recursively project all sub-expressions otherwise}
    \end{cases}
    \]
\end{flushleft}
\caption{\nmlang language extensions.}
\label{fig:extensions}
\end{figure}

\section{Attacker properties}
\label{sec:attacker-properties}

Recall that we defined an attacker as a set of principals
$\A = \{ ℓ ∈ \L \mid \stafjudge*{n_1 ∧ \dotsb ∧ n_k}{ℓ} \}$
for some non-empty finite set of atomic principals $\{n_1, \dotsc, n_k\} ⊆ \N$.

\begin{definition}[Attacker properties]
  \label{def:attacker}
  Let $\A$ be an attacker and let $\A^{π} = \{ p ∈ \L \mid \exists q ∈ \L \text{ such that } p^{π} ∧ q^{π'} ∈ \A \}$.
  The following properties hold:
  \begin{enumerate}
    \item \label{li:atk:a-closed-conj} for all $a_1, a_2 ∈ \A^{π}$, $a_1 ∧ a_2 ∈ \A^{π}$
          (Attacking principals may collude)
    \item \label{li:atk:disj-with-a} for all $a ∈ \A^{π}$ and $b ∈ \L$, $a ∨ b ∈ \A^{π}$
          (Attackers may attenuate their power)
    \item \label{li:atk:not-a-closed-disj} for all $b_1, b_2 \notin \A^{π}$, $b_1 ∨ b_2 \notin \A^{π}$
          (Combining public information in a public context yields public information and
           combining trusted information in a trusted context yields trusted information)
    \item \label{li:atk:conj-with-not-a} for all $a ∈ \L$ and $b \notin \A^{π}$, $a ∧ b \notin \A^{π}$
          (Attackers cannot compromise policies requiring permission from non-attacking principals)
    \item \label{li:atk:voice-and-view} for all $a ∈ \A$, $\voice{a^{→}} ∧ \view{a^{←}} ∈ \A$.
          (Attackers have the same power in confidentiality and integrity)
  \end{enumerate}
\end{definition}

The theorems proved in this paper hold for any attacker satisfying
these properties, so for generality we can take the properties
in Definition~\ref{def:attacker} as defining an attacker.

We now prove that our original definition of an attacker satisfies
Definition~\ref{def:attacker}.

\begin{proof}
  Conditions~\ref{li:atk:a-closed-conj} and \ref{li:atk:disj-with-a} of
  Definition~\ref{def:attacker} follow directly from the definition of $\A$ and
  \ruleref{ConjR} and \ruleref{DisR}, respectively.
  Condition~\ref{li:atk:voice-and-view} holds by the symmetry of the lattice.

  Since we are only examining one of confidentiality and integrity at a time,
  for the following conditions we assume without loss of generality that all
  principals in each expression have only the $π$ projection and the other
  component is $⊥$.
  In particular, this means we can assume \ruleref{Proj} and \ruleref{ProjR}
  are not used in any derivation, and any application of the conjunction or
  disjunction derivation rules split in a meaningful way with respect to the
  $π$ projection (i.e., neither principal in the side being divided is $⊤$ or
  $⊥$).

  We now show Condition~\ref{li:atk:conj-with-not-a} holds by contradiction.
  Assume $a ∈ \L$ and $b \notin \A^{π}$, but $a ∧ b ∈ \A^{π}$.
  This means $\stafjudge*{(n_1 ∧ \dotsb ∧ n_k)^{π}}{a ∧ b}$.
  We prove by induction on $k$ that $a, b ∈ \A^{π}$.
  If $k = 1$, then the only possible rule to derive this result is
  $\ruleref{ConjL}$ and we are finished.
  If $k > 1$, then the derivation of this relation must be due to either
  \ruleref{ConjL} or \ruleref{ConjR}.
  If it is due to \ruleref{ConjR}, then this again achieves the desired
  contradiction.
  If it is due to \ruleref{ConjL}, then the same statement holds for a subset
  of the atomic principals $n'_1, \dotsc, n'_{k'}$, where $k' < k$, so by
  induction, $\stafjudge*{(n'_1 ∧ \dotsb ∧ n'_{k'})^{π}}{b^{π}}$, and by
  \ruleref{Trans}, $\stafjudge*{(n_1 ∧ \dotsc ∧ n_k)^{π}}{b^{π}}$ which also
  contradicts our assumption.

  Finally, we also show Condition~\ref{li:atk:not-a-closed-disj} holds by
  contradiction.
  We assume $b_1, b_2 \notin \A^{π}$ but $b_1 ∨ b_2 ∈ \A^{π}$ and again prove a
  contradiction by induction on $k$.
  If $k = 1$, then the derivation showing $\stafjudge*{n_1^{π}}{(b_1∨b_2)^{π}}$
  must end with \ruleref{DisR} which contradicts the assumption that
  $b_1, b_2 \notin \A^{π}$.
  If $k > 1$, the derivation either ends with \ruleref{DisR}, resulting in the
  same contradiction, or with \ruleref{ConjL}.
  In this second case, the same argument as above holds: there is a strict
  subset of the principals $n_1, \dotsc, n_k$ that act for either $b_1$ or
  $b_2$ and thus by \ruleref{Trans} we acquire the desired contradiction.
\end{proof}

\section{Generalization}
\label{sec:nmif-generalize}

Definition~\ref{def:nmifc} (and correspondingly Theorem~\ref{thm:nmifc}) might
appear relatively narrow;
they only speak directly to programs with a single untrusted value and a single secret
value. However, because the language has first-class functions and pair types, the theorem
as stated is equivalent to one that allows insertion of secret and untrusted
code into multiple points in the program, as long as that code types in an appropriately restrictive $\pc$.

To define this formally, we first need a means to allow for insertion of
arbitrary code.
We follow previous work~\cite{msz06} by extending the
language to include _holes_.
A program expression may contain an ordered set of holes. These holes
may be replaced with arbitrary expressions, under restrictions requiring that the
holes be treated as sufficiently secret or untrusted.
Specifically, the type system is extended with the following rule:
\begin{center}
  \small
  \begin{mathpar}
    \Rule[$\H$ is a high set]{Hole}{%
      \pc ∈ \H \\
      \hightype{\H}{τ}
    }{\TValGpc{[\bullet]_{\H}}{τ}}
    \hfill
  \end{mathpar}
\end{center}

Using this definition, we can state NMIF in a more traditional form.

\begin{definition}[General NMIF]
  \label{def:general-nmif}
  We say that a program $e[\vec{\bullet}]_{\H}$ enforces _general NMIF_ if the
  following holds for all attackers $\A$ inducing high sets $\U$ and $\secret$.
  Let $\trusted = \L \setminus \U$, $\public = \L \setminus \secret$ and
  $\low = \trusted ∩ \secret$.
  If $\H ⊆ \U$, then for all values $v_1$, $v_2$ and all attacks $\vec{a}_1$
  and $\vec{a}_2$, let
  \[ \evalctx{\subst{e[\vec{a}_i]_{\H}}{\vec{x}}{\vec{v}_i}}{\vec{v}_i} \stepstoctx \evalctx{v_{ij}}{t^{ij}}. \]
  For all indices $n_{ij}$ such that $\nseteq{\low}{\trelt{t^{ij}}{n_{ij}}}{\bullet}$
  \begin{enumerate}[leftmargin=*]
    \item \label{li:nmifc:rd}
      If $\traceeq{\trusted}{\trpref{t^{i1}}{n_{i1} - 1}}{\trpref{t^{i2}}{n_{i2} - 1}}$ for
      $i = 1,2$, then
      \[ \left(\rel*{←}{w_1} \text{ and } \traceeq{\public}{\trpref{t^{11}}{n_{11}}}{\trpref{t^{21}}{n_{21}}}\right) ~ \Longrightarrow ~ \traceeq{\public}{\trpref{t^{12}}{n_{12}}}{\trpref{t^{22}}{n_{22}}}. \]
    \item \label{li:nmifc:te}
      Similarly, if $\traceeq{\public}{\trpref{t^{1j}}{n_{1j} - 1}}{\trpref{t^{2j}}{n_{2j} - 1}}$
      for $j = 1,2$, then
      \[ \left(\rel*{→}{v_1} \text{ and } \traceeq{\trusted}{\trpref{t^{11}}{n_{11}}}{\trpref{t^{12}}{n_{12}}}\right) ~ \Longrightarrow ~ \traceeq{\trusted}{\trpref{t^{21}}{n_{21}}}{\trpref{t^{22}}{n_{22}}}. \]
  \end{enumerate}
\end{definition}

For \nmlang, this definition is equivalent to Definition~\ref{def:nmifc}.
We prove this fact to prove the following theorem.

\begin{theorem}[General NMIF]
  \label{thm:general-nmif}
  Given a program $e[\vec{\bullet}]_{\H}$ such that
  $\TVal{Γ,\vec{x}\ty \vec{τ};\pc}{e[\vec{\bullet}]_{\H}}{τ'}$,
  then $e[\vec{\bullet}]_{\H}$ enforces general NMIF.
\end{theorem}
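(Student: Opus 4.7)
The plan is to reduce Theorem~\ref{thm:general-nmif} to Theorem~\ref{thm:nmifc} by a syntactic encoding that collapses the multiple input variables $\vec{x}$ and the multiple attacker-filled holes into a single secret input $x$ and a single untrusted input $y$, leveraging \nmlang's first-class functions and product types. Since Definitions~\ref{def:nmifc} and \ref{def:general-nmif} have conclusions of identical shape, the argument then reduces to exhibiting the encoding plus a trace-correspondence lemma; the relevance side-conditions $\rel*{←}{w_1}$ and $\rel*{→}{v_1}$ transfer automatically since they are defined purely in terms of traces.

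First I would bundle the program's existing input variables: given $\TVal{Γ,\vec{x}\ty\vec{τ};\pc}{e[\vec\bullet]_{\H}}{τ'}$, introduce a fresh $x \ty \tau_x$ where $\tau_x$ is the nested product $\prodtype{τ_1}{\prodtype{\cdots}{τ_k}}$, replace every $x_i$ in $e$ by the appropriate chain of projections from $x$, and combine the input values $\vec{v}_i$ into a single nested pair value $v_i$. Next, for each hole $[\bullet]_{\H}$ occurring at type $τ_i'$ and program counter $\pc_i ∈ \H$, introduce a fresh $y \ty \tau_y$ where $\tau_y$ is the nested product of function types $\func{\voidtype}{\pc_i}{τ_i'}$, and replace the $i$-th hole by $(\proji{\cdots y})~\void$, which type-checks at $\pc_i$ since \ruleref{App} only requires $\stflowjudge*{\pc_i}{\pc_i}$. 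Given any attack $\vec{a}$ with $\TVal{Γ;\pc_i}{a_i}{τ_i'}$, define the encoded untrusted input as $w = \pair{\lamc{z}{\voidtype}{\pc_1}{a_1}}{\cdots}$; thunking is essential since each $a_i$ may itself contain $η$-terms and downgradings whose events must be emitted only when control reaches the hole's position. Call the resulting hole-free program $\hat e$; by construction $\TVal{Γ,x\ty\tau_x,y\ty\tau_y;\pc}{\hat e}{τ'}$.

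The main technical content is then a trace-correspondence lemma: for every pair $(\vec{v}_i,\vec{a}_j)$ with encoded inputs $(v_i,w_j)$, the trace $\hat t^{ij}$ of $\subst{\subst{\hat e}{x}{v_i}}{y}{w_j}$ is equivalent under $\traceeq*{}{}$ to the trace $t^{ij}$ of $\subst{e[\vec{a}_j]_{\H}}{\vec x}{\vec v_i}$, at every high set simultaneously. The only events present in $\hat t^{ij}$ but not in $t^{ij}$ are the $\bullet$ steps emitted by the administrative projection and $\beta$-reductions introduced by the encoding, and these are absorbed by \ruleref{T-BulletR} and \ruleref{T-BulletL}. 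Moreover, the $\bullet$ padding preserves correspondence of prefix indices, so every choice of $n_{ij}$ in Definition~\ref{def:general-nmif} picking out a non-$\bullet$ event in $\hat t^{ij}$ maps to a corresponding index in $t^{ij}$, and vice versa. Applying Theorem~\ref{thm:nmifc} to $\hat e$ and transporting both hypotheses and conclusions across this correspondence then yields exactly Definition~\ref{def:general-nmif}.

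The main obstacle will be handling the case where a hole lies under a binder of $e$, so the hole's typing context contains free local variables that the attack expression $a_i$ may reference. A thunk $\lamc{z}{\voidtype}{\pc_i}{a_i}$ lifted to the top level of $y$ cannot mention those local names unless they are passed in explicitly. This can be addressed either by adopting the standard convention (as in prior work on robust declassification) that attacker holes are positioned at top-level scope, or by a more careful translation that tuples any locals in scope at each hole into the argument type of the corresponding thunk. Once that wrinkle is resolved, the induction establishing the trace-correspondence lemma is routine, and the final step of transporting NMIF across the encoding is immediate.
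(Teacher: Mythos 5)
Your overall strategy is the same as the paper's: collapse the attacker holes into a single untrusted, function-valued input and the multiple value inputs into a single nested pair, establish a trace correspondence, and invoke Theorem~\ref{thm:nmifc} on the resulting two-variable program. The binder problem you flag at the end is real, and your option (b) is exactly the paper's resolution: a hole typed in context $\Gamma'$ at some $\pc' \in \H$ is replaced by an application of the substituted function to every variable in $\Gamma' \setminus \Gamma$, so the attack code receives the locals in scope as explicit arguments rather than capturing them. (Your option (a), restricting holes to top-level scope, would weaken the theorem, since \ruleref{Hole} places no such restriction.)

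There is, however, one step of your encoding that fails as written. You give the attack thunks the bare type $\func{\voidtype}{\pc_i}{\tau_i'}$ and substitute an unprotected pair of lambdas for $y$. But in Definition~\ref{def:nmifc} the substituted value $w_j$ is itself seeded into the trace as $v_i;w_j$, and the equivalence of Figure~\ref{fig:equivalence} erases only values of the form $\vreturn{\ell}{v}$ with $\ell$ outside the low set; a bare lambda is compared by structural congruence. Two distinct attacks therefore give $\nseteq{\trusted}{w_1}{w_2}$, so the hypothesis $\traceeq{\trusted}{\trpref{t^{i1}}{n_{i1}-1}}{\trpref{t^{i2}}{n_{i2}-1}}$ of clause~\ref{li:nmifc:rd} (and the inner antecedent of clause~\ref{li:nmifc:te}) is false for the encoded program at every index past the seeds. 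Theorem~\ref{thm:nmifc} then yields only a vacuous conclusion, which cannot be transported back to $e[\vec{a}_j]_{\H}$, whose traces do not contain the raw attack code. The fix is what the paper does: give $y$ the protected type $\says{\pc'}{(\tau_{z_1} \xrightarrow{\pc'} \dotsb \xrightarrow{\pc'} \tau'')}$ with $\pc' \in \H$, substitute $\vreturnp{\pc'}{(\lambda\dots.\,a)}$, and translate the hole as a monadic bind followed by the application. Then both attack values are equivalent to $\bullet$ at the relevant low sets, the extra seeded trace element is absorbed, and your trace-correspondence lemma and the transport of the relevance conditions go through as you describe.
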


\begin{proof}
  We prove this by reducing Definition~\ref{def:general-nmif} to
  Definition~\ref{def:nmifc} in two steps.
  We assume that no two variables in the original expression
  $e[\vec{\bullet}]_{\H}$ have the same name as this can be enforced by
  $α$-renaming.

  The first step handles expressions that only substitute values (and have no
  holes), but allow any number of both secret and untrusted values.
  An expression of the form in this corollary is easily rewritten
  as such a substitution as follows.
  For each hole $[\bullet]_{\H}$, we note that
  $\TVal{Γ';\pc'}{[\bullet]_{\H}}{τ''}$ where $Γ,\vec{x}\ty\vec{τ} ⊆ Γ'$
  and $\pc' ∈ \H$.
  We replace the hole with a function application inside a "bind".
  Specifically, the hole becomes
  \[ \bind{y'}{y}{\left(y'~z_1~\dotsb~z_k\right)} \]
  where $y$ and $y'$ are fresh variables and the $z_i$s are every variable in
  $Γ' \setminus Γ$ (including every element of $\vec{x}$).
  Let
  \[ τ_y = \says{\pc'}{\left(τ_{z_1} \xrightarrow{\pc'} \dotsb \xrightarrow{\pc'} τ_{z_k} \xrightarrow{\pc'} τ''\right)} \]
  and include $y\ty τ_y$ as the type of an untrusted value to substitute in.

  Instead of inserting the expression $a$ into that hole, we substitute in for $y$ the value
  \[ w = \vreturnp{\pc'}{\left(\lamc{z_1}{τ_{z_1}}{\pc'}~\dotsb~\lamc{z_k}{τ_{z_k}}{\pc'}{a}\right)}. \]
  By \ruleref{Hole} we know that $\pc' ∈ \H$ and $\hightype{\H}{τ''}$, so
  the type has the proper protection, and by construction $\TValGpc{w}{τ_y}$.
  Moreover, while it has an extra value at the beginning of the trace (the
  function), the rest of the traces are necessarily the same.

  As a second step, we reduce the rest of the way to the expressions used in
  Definition~\ref{def:nmifc}.
  To get from our intermediate step to these single-value expressions, if we
  wish to substitute $k_s$ secret values and $k_u$ untrusted values, we instead
  substitute a single list of $k_s$ secret values and a single
  list of $k_u$ untrusted values.
  These lists are constructed in the usual way out of pairs, meaning the
  protection relations continue to hold as required.
  Finally, whenever a variable is referenced in the unsubstituted expression,
  we instead select the appropriate element out of the substituted list using
  nested projections.
\end{proof}

We also note that the same result holds if we allow for insertion of secret
code and untrusted values, as the argument is exactly dual.
Such a situation, however, makes less sense, so we do not present it explicitly.

\section{Proofs}
\label{sec:proofs}

We now prove a variety of properties about \nmlang.

\subsection{Language results}

We begin with the core results about the language itself.

\begin{lemma}[Values]
  \label{lem:values}
  For any value $v$ such that $\TValGpc{v}{τ}$, $\TValG{\pc'}{v}{τ}$ for any $\pc'$.
\end{lemma}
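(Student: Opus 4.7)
The plan is a straightforward induction on the typing derivation of $v$. The key structural observation is that no typing rule introducing a value form constrains the ambient $\pc$ against anything else: \ruleref{Unit}, \ruleref{Lam}, and \ruleref{TLam} ignore $\pc$ entirely (the two $\lambda$-forms introduce their own annotated $\pc'$ that governs the body), while \ruleref{VUnitM}, \ruleref{Inj}, and \ruleref{Pair} simply thread the same $\pc$ into recursive premises without imposing any side condition on it. Consequently the derivation can be rebuilt verbatim at any $\pc'$.

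First I would dispatch the base cases. For $\void$, \ruleref{Unit} has no side conditions, so $\TValG{\pc'}{\void}{\voidtype}$ holds immediately. For $\lamc{x}{τ_1}{\pc''}{e}$, the premise of \ruleref{Lam} is $\TVal{Γ,x\ty τ_1;\pc''}{e}{τ_2}$ and does not mention the ambient $\pc$, so the same premise re-derives the conclusion under any $\pc'$. The case $\tlam{X}{\pc''}{e}$ is identical using \ruleref{TLam}.

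Next I would handle the inductive cases by peeling off the outermost constructor, applying the induction hypothesis to each subvalue, and reassembling under $\pc'$. For $\vreturn{ℓ}{v'}$ typed by \ruleref{VUnitM}, the sole premise $\TValGpc{v'}{τ}$ gives $\TValG{\pc'}{v'}{τ}$ by induction, and \ruleref{VUnitM} reapplied yields $\TValG{\pc'}{\vreturn{ℓ}{v'}}{\says{ℓ}{τ}}$. The cases $\inji{v'}$ and $\pair{v_1}{v_2}$ are completely analogous, using \ruleref{Inj} and \ruleref{Pair} respectively.

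The main obstacle: there really is none; this is a pure weakening lemma that works precisely because \nmlang isolates the $\pc$ at value-introduction sites (via the $\pc''$ annotations on $\lambda$- and $\Lambda$-terms) from the surrounding $\pc$, and because \ruleref{VUnitM}—unlike the source-level rule \ruleref{UnitM}—deliberately drops the $\stflowjudge*{\pc}{ℓ}$ premise. The only subtle point worth emphasizing in the write-up is this asymmetry between \ruleref{UnitM} and \ruleref{VUnitM}, since it is exactly what makes the lemma hold for values but fail for general expressions; this reflects the design decision, explained in Section~\ref{sec:type-system}, that fully evaluated $\vreturn{ℓ}{v}$ terms are closed and hence no longer depend on the context in which they appear.
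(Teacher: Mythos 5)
Your proposal is correct and takes essentially the same approach as the paper, whose entire proof is the one-line ``by induction on the typing derivation for values''; you have simply made the case analysis explicit, and your cases for $\void$, the two $\lambda$-forms, $\vreturn{ℓ}{v'}$, injections, and pairs are all right, as is your observation that the asymmetry between \ruleref{UnitM} and \ruleref{VUnitM} is what makes the lemma work. One small caveat: your blanket claim that \emph{no} value-forming rule constrains the ambient $\pc$ overlooks the bracketed values $\lab*{v}$ of Appendix~A.1, whose typing rule \ruleref{Bracket} does carry the premise $\stflowjudge*{\pc}{\pc'}$; since the lemma is invoked inside the subject-reduction argument for the bracketed language, this case should be covered. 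It still goes through, but needs a word: given a new ambient label $\pc''$, retype the inner value at $\pc' ⊔ \pc''$ (possible by the induction hypothesis), note that $\pc' ⊔ \pc'' ∈ \H$ by upward closure of $\H$, and reapply \ruleref{Bracket}.
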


\begin{proof}
  This follows by induction on the typing derivation for values.
\end{proof}

\begin{lemma}[Substitution]
  \label{lem:subst}
  If $\TVal{Γ,x\ty τ';\pc}{e}{τ}$ and $\TValGpc{v}{τ'}$, then $\TValGpc{\subst{e}{x}{v}}{τ}$.
\end{lemma}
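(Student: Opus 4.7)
The plan is to prove the Substitution Lemma by structural induction on the typing derivation of $e$. Most cases follow the standard pattern for a simply-typed calculus, so the real work lies in isolating the few rules that manipulate the $\pc$ or bind new variables.

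First I would handle the base cases. For \ruleref{Var}, split on whether the variable is $x$: if $e = x$, then $τ = τ'$ and $\subst{x}{x}{v} = v$, so the conclusion is exactly the hypothesis $\TValGpc{v}{τ'}$; otherwise $e$ is some $y \neq x$, substitution is a no-op, and the derivation still goes through after weakening $x$ out of the context. The \ruleref{Unit} case is immediate. For \ruleref{VUnitM} (and values generally), note that substitution into a value yields a value, and \cref{lem:values} lets us re-type the substituted value under whatever $\pc$ we need.

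Next I would dispatch the congruence cases. \ruleref{App}, \ruleref{Pair}, \ruleref{UnPair}, \ruleref{Inj}, \ruleref{Case}, \ruleref{TApp}, \ruleref{UnitM}, \ruleref{Decl}, and \ruleref{Endorse} are all routine: apply the induction hypothesis to each subterm (using the \emph{same} $\pc$), and reuse the unchanged side-conditions on $\pc$, $ℓ$, $\voice*$, and $\view*$ verbatim. The Bracket rule in the extensions works the same way after re-typing $v$ at the bracket's inner $\pc'$ via \cref{lem:values}. The binding cases \ruleref{Lam} and \ruleref{TLam} require the usual Barendregt convention so the bound variable differs from $x$ and is not free in $v$; then the induction hypothesis applies to the body, which types at $\pc'$, and \cref{lem:values} again supplies the needed retyping of $v$ at $\pc'$.

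The main obstacle is \ruleref{BindM}, because it is the one rule whose body type-checks under a strictly larger $\pc$, namely $\pc ⊔ ℓ$. The induction hypothesis for the body requires $v$ to type at $\pc ⊔ ℓ$, not at $\pc$. The key step is to invoke \cref{lem:values} on $v$ to obtain $\TValG{\pc ⊔ ℓ}{v}{τ'}$, after which the IH on the continuation $e'$ applies directly, the side condition $\protjudge*{ℓ}{τ}$ is preserved, and the IH on the scrutinee at $\pc$ completes the case. Once the pc-changing rules are absorbed through \cref{lem:values}, every remaining case collapses to a mechanical congruence argument, so this reliance on the Values lemma is really the only non-routine ingredient in the proof.
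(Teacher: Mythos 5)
Your proposal is correct and follows essentially the same route as the paper, whose entire proof is the one-line statement that the result follows by induction on the typing derivation using Lemma~\ref{lem:values}; your elaboration correctly identifies that the Values lemma is needed precisely for the rules that re-type subterms under a different $\pc$ (\ruleref{Lam}, \ruleref{TLam}, \ruleref{BindM}, \ruleref{Bracket}). No gaps.
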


\begin{proof}
  By induction on the derivation of $\TVal{Γ,x\ty τ';\pc}{e}{τ}$ using Lemma~\ref{lem:values}.
\end{proof}

\begin{lemma}[$\pc$ reduction]
  \label{lem:pcred}
  If $\TValGpc{e}{τ}$ and $\stflowjudge*{\pc'}{\pc}$, then $\TValG{\pc'}{e}{τ}$.
\end{lemma}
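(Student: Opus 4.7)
The plan is to prove the lemma by structural induction on the typing derivation of $\TValGpc{e}{\tau}$, exhibiting in each case a derivation of $\TValG{\pc'}{e}{\tau}$. Most cases are routine. The rules \ruleref{Var}, \ruleref{Unit}, \ruleref{VUnitM}, \ruleref{Lam}, \ruleref{TLam}, \ruleref{Pair}, \ruleref{UnPair}, and \ruleref{Inj} either place no constraint on the ambient $\pc$ or (in the lambda cases) discard it in favor of a fresh annotation on the body, so the conclusion is immediate. The rules \ruleref{App}, \ruleref{TApp}, \ruleref{UnitM}, and \ruleref{Bracket} have a single $\pc$-premise of the form $\stflowjudge*{\pc}{X}$, and $\stflowjudge*{\pc'}{\pc}$ together with transitivity yields $\stflowjudge*{\pc'}{X}$. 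For \ruleref{BindM}, the body is typed at $\pc \sqcup \ell$, and monotonicity of $\sqcup$ gives $\pc' \sqcup \ell \sqsubseteq \pc \sqcup \ell$, so the induction hypothesis retypes the body at $\pc' \sqcup \ell$. The \ruleref{Case} rule additionally requires $\protjudge*{\pc}{\tau}$, but by \ruleref{P-Lbl} (and \ruleref{P-Pair}, \ruleref{P-Unit}) this reduces to a collection of flow judgments into the outer labels of $\tau$, each of which is preserved when $\pc$ is replaced by the smaller $\pc'$ by transitivity.

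The interesting cases are \ruleref{Decl} and \ruleref{Endorse}, which contain the compound premise $\stflowjudge*{\ell'^{\rightarrow}}{\ell^{\rightarrow} \sqcup \view{(\ell' \sqcup \pc)^{\leftarrow}}}$ (resp.\ its dual with $\voice$). I must show that substituting $\pc'$ for $\pc$ does not shrink the right-hand side. From $\pc' \sqsubseteq \pc$, projection yields $\pc'^{\leftarrow} \sqsubseteq \pc^{\leftarrow}$, and monotonicity of $\sqcup$ gives $(\ell' \sqcup \pc')^{\leftarrow} \sqsubseteq (\ell' \sqcup \pc)^{\leftarrow}$. The crux is then antitonicity of $\view$ with respect to $\sqsubseteq$ restricted to integrity labels: if $q_1^{\leftarrow} \sqsubseteq q_2^{\leftarrow}$ then $\view{q_2^{\leftarrow}} \sqsubseteq \view{q_1^{\leftarrow}}$. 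This is a fact about the FLAM label algebra: on integrity components the information-flow ordering is the dual of the acts-for ordering on the underlying principals, and that acts-for ordering corresponds directly to the information-flow ordering on the confidentiality projection produced by $\view$. Combining these yields $\view{(\ell' \sqcup \pc)^{\leftarrow}} \sqsubseteq \view{(\ell' \sqcup \pc')^{\leftarrow}}$, so the original bound still dominates $\ell'^{\rightarrow}$ via transitivity. The \ruleref{Endorse} case is exactly dual, using antitonicity of $\voice$ on confidentiality labels.

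I expect the main obstacle to be precisely this antitonicity argument for $\view$ and $\voice$, since it is the only step that depends on the internal structure of the FLAM lattice rather than on bookkeeping about the typing rules. The cleanest approach is to factor it out as a small separate algebraic lemma ($q_1^{\leftarrow} \sqsubseteq q_2^{\leftarrow}$ implies $\view{q_2^{\leftarrow}} \sqsubseteq \view{q_1^{\leftarrow}}$, and dually for $\voice$), derived directly from the acts-for rules in Figure~\ref{fig:af-rules} and the normal-form definitions of $\view$ and $\voice$. Once that algebraic fact is in hand, the inductive cases for \ruleref{Decl} and \ruleref{Endorse} reduce to straightforward applications of transitivity and monotonicity, and the whole lemma follows.
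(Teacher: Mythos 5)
Your proof is correct and takes the same route as the paper, which disposes of this lemma with the single line ``by induction on the derivation of $\TValGpc{e}{τ}$'' and leaves every case to the reader. You have filled in those cases accurately, correctly isolating the only non-bookkeeping step—the antitonicity of the view and voice operators under $⊑$, needed to preserve the final premises of \ruleref{Decl} and \ruleref{Endorse} when the ambient $\pc$ is lowered.
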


\begin{proof}
  By induction on the derivation of $\TValGpc{e}{τ}$.
\end{proof}

\begin{theorem}[Subject reduction]
  \label{thm:subred}
  If $\TValGpc{e}{τ}$ and $\evalctx*{e} \stepsonectx \evalctx'{e'}$ then $\TValGpc{e'}{τ}$.
\end{theorem}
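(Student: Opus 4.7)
The plan is to prove subject reduction by induction on the derivation of the step $\evalctx*{e} \stepsonectx \evalctx'{e'}$, but first establishing a replacement lemma for evaluation contexts. Specifically, I would prove: if $\TValGpc{E[e]}{τ}$, then there exist $τ_0$ and $\pc_0$ such that $\TValG{\pc_0}{e}{τ_0}$, and for any $e'$ with $\TValG{\pc_0}{e'}{τ_0}$ we have $\TValGpc{E[e']}{τ}$. This goes by straightforward structural induction on $E$. The one subtlety is the case $E = \bind{x}{E'}{e_2}$: the typing rule \ruleref{BindM} types the bound position at the outer $\pc$ (not the elevated $\pc ⊔ ℓ$, which applies only to the body $e_2$), so the hole is still typed at the ambient $\pc$, and the lemma goes through cleanly.

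With the replacement lemma in hand, the \ruleref{E-Eval} case is immediate: invert it to get $\evalctx*{e_0} \stepsonectx \evalctx'{e_0'}$ with the same contextual hole, use the lemma to extract the well-typed $e_0$, apply the induction hypothesis to obtain $\TValG{\pc_0}{e_0'}{τ_0}$, and plug back in via the replacement lemma. That leaves the base cases \ruleref{E-Step} (with its own sub-cases \ruleref{E-App} and \ruleref{E-BindM}), \ruleref{E-UnitM}, \ruleref{E-Decl}, and \ruleref{E-Endorse}, each of which begins by inverting the typing derivation of the redex.

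For \ruleref{E-App} and \ruleref{E-BindM} the work is the Substitution Lemma (Lemma~\ref{lem:subst}), potentially combined with Lemma~\ref{lem:pcred} in the bind case since the body of a bind is typed at $\pc ⊔ ℓ$ while the result must be typed at $\pc$; here I would use that the value $\vreturn{ℓ}{v}$ being bound determines $ℓ$, and the substituted result types at the elevated pc but its type $τ$ satisfies $\protjudge*{ℓ}{τ}$, which can be used with appropriate structural reasoning. For \ruleref{E-UnitM}, the typing of $\return{ℓ}{v}$ via \ruleref{UnitM} gives $\TValGpc{v}{τ}$ and $\stflowjudge*{\pc}{ℓ}$; applying \ruleref{VUnitM} then gives $\TValGpc{\vreturn{ℓ}{v}}{\says{ℓ}{τ}}$ directly, without re-using the $\pc ⊑ ℓ$ constraint. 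The cases \ruleref{E-Decl} and \ruleref{E-Endorse} are similar: inverting \ruleref{Decl} or \ruleref{Endorse} on the input $\vreturn{ℓ'}{v}$ gives, after also inverting \ruleref{VUnitM}, that $\TValGpc{v}{τ}$, and then \ruleref{VUnitM} re-forms $\vreturn{ℓ}{v}$ at type $\says{ℓ}{τ}$.

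The main obstacle will be the \ruleref{E-UnitM} case and the downgrade cases, where the shift from the source-level constructor $\return{ℓ}{e}$ or a downgrade expression to the runtime value $\vreturn{ℓ}{v}$ must preserve typing even though \ruleref{VUnitM} and \ruleref{UnitM} differ: \ruleref{UnitM} demands $\stflowjudge*{\pc}{ℓ}$ while \ruleref{VUnitM} does not. The proof here essentially relies on the design decision described in Section~\ref{sec:type-system}, namely that fully evaluated $η$-values are closed and no longer depend on the ambient $\pc$, so re-typing via \ruleref{VUnitM} is sound at any $\pc$. A minor secondary concern is ensuring the replacement lemma interacts correctly with the extended syntax of brackets from Figure~\ref{fig:extensions} if those are considered part of $e$; but since brackets carry their own $\pc'$ constraint with $\stflowjudge*{\pc}{\pc'}$, the same contextual reasoning applies, using Lemma~\ref{lem:pcred} where needed.
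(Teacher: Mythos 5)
Your treatment of the standard cases is fine, and it matches what the paper does for them: \ruleref{E-App} and \ruleref{E-BindM} go through Lemma~\ref{lem:subst} (with Lemma~\ref{lem:pcred} to come back down from $\pc ⊔ ℓ$ to $\pc$ in the bind case), and the $\return{ℓ}{v} \rightarrow \vreturn{ℓ}{v}$ and downgrade cases are discharged by re-forming the value with \ruleref{VUnitM}, which deliberately drops the $\stflowjudge*{\pc}{ℓ}$ premise. The paper dispatches all of these with one sentence (``all other cases follow trivially \dots from Lemmas~\ref{lem:subst} and \ref{lem:pcred}''). The gap is that the cases you relegate to ``a minor secondary concern'' are precisely the ones the paper's proof is actually about. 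The theorem is proved over the semantics of Figure~\ref{fig:full-semantics} \emph{extended with} Figure~\ref{fig:extensions}, because downstream results (Lemmas~\ref{lem:down-release} and \ref{lem:complete}) invoke subject reduction on bracketed terms. Your proposal only considers $\lab*{E}$ as an evaluation context; it never addresses the bracket \emph{reduction} rules \ruleref{B-Expand}, \ruleref{B-DeclH}, and \ruleref{B-EndorseH}, which restructure the term rather than merely stepping inside it.

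These cases do not follow from ``the same contextual reasoning.'' For \ruleref{B-Expand} applied to $\bind{x}{\lab*{v}}{e'} \stepsone \lab*{\bind{x}{v}{e'}}$, you must manufacture a new program-counter label at which the entire bind re-types inside the bracket; the paper takes $\pc' = \pc'' ⊓ (\pc ⊔ ℓ)$ (where $\pc''$ is the bracket's own label), argues $\pc' ∈ \H$ from upward closure, and uses $\steqjudge*{\pc' ⊔ ℓ}{\pc ⊔ ℓ}$ to re-establish the \ruleref{BindM} premise for the body, plus $\protjudge*{ℓ}{τ}$ with $ℓ ∈ \H$ to satisfy \ruleref{Bracket}'s protection condition on the result type. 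For \ruleref{B-DeclH}, pushing $\cdowngrade{\lab*{v}}{ℓ}$ to $\lab*{\cdowngrade{v}{ℓ}}$ forces you to re-verify the \ruleref{Decl} premises at the elevated $\pc' = \pc ⊔ ℓ$; the delicate point is that the view premise survives because $ℓ'^{←} = ℓ^{←}$ implies $(ℓ' ⊔ \pc')^{←} = (ℓ' ⊔ \pc)^{←}$, so $\view{(ℓ' ⊔ \pc')^{←}}$ is unchanged. Neither of these is a routine application of Lemma~\ref{lem:pcred}, and without them the theorem cannot be applied where the paper needs it. Your replacement-lemma decomposition is a reasonable (and slightly more systematic) way to organize the \ruleref{E-Eval} case, but the proof is incomplete until the bracket reductions are worked out.
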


\begin{proof}
  This proof follows by an inductive case analysis on the operational semantics
  in Figures~\ref{fig:full-semantics} and \ref{fig:extensions}.
  There are a few interesting cases.

  \begin{case}[\ruleref{B-Expand}]
    This case handles a context ($\bctx$), we will do a sub-case analysis
    on each such expression type.
    \begin{itemize}[leftmargin=*]
      \item $e = \left(\proji{\lab*{v}}\right)$:
        \ruleref{UnPair} allows the expression to type-check in any $\pc$ in
        which its argument type-checks.
        Since \ruleref{Bracket} says $\TValG{\pc'}{v}{\prodtype{τ_1}{τ_2}}$ for
        some $\pc' ∈ \H$, we get that $\TValG{\pc'}{\proji{v}}{τ_i}$.
        Moreover \ruleref{Bracket} and \ruleref{P-Set} also require
        $\protjudge*{H}{\prodtype{τ_1}{τ_2}}$ which, by \ruleref{P-Pair} can
        only happen if $\protjudge*{H}{τ_i}$ for $i = 1,2$.
        Together these satisfy the requirements for \ruleref{Bracket} and give
        us $\TValG{\pc'}{\lab*{\proji{v}}}{τ_i}$.

      \item $e = \left(\bind{x}{\lab*{v}}{e'}\right)$:
        By \ruleref{BindM}, $\TValGpc{\lab*{v}}{\says{ℓ}{τ'}}$ and by
        \ruleref{Bracket} and inversion on the protection rules,
        $\stflowjudge*{H}{ℓ}$ for some $H ∈ \H$ and thus $ℓ ∈ \H$.
        Let $\pc' = \pc'' ⊓ (\pc ⊔ ℓ)$ where $\pc''$ is the higher $\pc$ used
        in \ruleref{Bracket}.
        Since $\H$ is upward closed and $\pc'', ℓ ∈ \H$, $(\pc ⊔ ℓ) ∈ \H$ and
        thus $\pc' ∈ \H$.
        Moreover, $\steqjudge*{\pc' ⊔ ℓ}{\pc ⊔ ℓ}$, so by \ruleref{BindM}
        $\TVal{Γ,x\ty τ';\pc' ⊔ ℓ}{e'}{τ}$ and by Lemma~\ref{lem:values}
        $\TValG{\pc'}{v}{\says{ℓ}{τ'}}$.
        This means that $\TValG{\pc'}{\bind{x}{v}{e'}}{τ}$.
        Also by \ruleref{BindM} we have $\protjudge*{ℓ}{τ}$ so since $ℓ ∈
        \H$ we have now satisfied the conditions on \ruleref{Bracket} and
        $\TValGpc{\lab*{\bind{x}{v}{e'}}}{τ}$.
    \end{itemize}
  \end{case}

  \begin{case}[\ruleref{B-DeclH}]
    By \ruleref{Decl} $\TValGpc{\lab*{v}}{\says{ℓ'}{τ'}}$ and by inspection on
    the protection rules and \ruleref{Bracket}, it must be the case that
    $\stflowjudge*{H}{ℓ'}$ for some $H ∈ \H$ and thus $ℓ' ∈ \H$.

    By assumption $ℓ ∈ \H$, so the final protection requirement of
    \ruleref{Bracket} on the stepped expression is satisfied.
    We now claim that if $\pc' = \pc ⊔ ℓ$ then
    $\TValG{\pc'}{\cdowngrade{v}{ℓ}}{τ}$.
    \ruleref{Decl} gives us that $\stflowjudge*{\pc}{ℓ}$, so clearly
    $\stflowjudge*{\pc' = \pc ⊔ ℓ}{ℓ}$.

    \ruleref{Decl} also gives us $\stflowjudge*{ℓ'^{→}}{ℓ^{→} ⊔ \view{(ℓ' ⊔ \pc)^{←}}}$.
    Since $ℓ'^{←} = ℓ^{←}$, we note that
    \[ (ℓ' ⊔ \pc)^{←} = (ℓ ⊔ \pc)^{←} = \pc'^{←}. \]
    Therefore $(ℓ' ⊔ \pc')^{←} = (ℓ' ⊔ \pc)^{←}$ and the premise still holds.

    Since the other premises \ruleref{Decl} are trivially still satisfied in
    a higher $\pc$ (using Lemma~\ref{lem:values} for $v$ to type), we see that
    $\TValG{\pc'}{\cdowngrade{v}{ℓ}}{τ}$.
    Thus since $\pc' ∈ \H$ and $ℓ ∈ \H$, we get
    $\TValGpc{\lab*{\cdowngrade{v}{ℓ}}}{τ}$.
  \end{case}

  \begin{case}[\ruleref{B-EndorseH}]
    We omit the details of this case as they are identical to the previous
    case, but with projection arrows reversed and $\view*$ replaced by $\voice*$.
  \end{case}

  All other cases follow trivially from inspection on the typing derivations in
  Figure~\ref{fig:core-types} and applications of Lemmas~\ref{lem:subst} and
  \ref{lem:pcred}.
  Rule \ruleref{E-Eval} also requires an inductive application.
\end{proof}

\begin{lemma}[Bracket Soundness]
  \label{lem:sound}
  If $\evalctx{e}{ϵ} \stepstoctx \evalctx*{e'}$ then
  $\evalctx{\bget{}{e}}{ϵ} \stepstoctx \evalctx{\bget{}{e'}}{\bget{}{t}}$.
\end{lemma}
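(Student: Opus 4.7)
The plan is to induct on the number of steps in the multi-step evaluation. The base case (zero steps) is trivial, so everything reduces to proving a one-step version: if $\evalctx*{e} \stepsonectx \evalctx{e'}{t}$ then $\evalctx*{\bget{}{e}} \stepstoctx \evalctx{\bget{}{e'}}{\bget{}{t}}$, using $\stepstoctx$ (not $\stepsonectx$) because some one-step reductions in the bracketed language will correspond to zero erased steps.

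Before the case analysis I would establish two small technical lemmas. First, a substitution lemma: $\bget{}{\subst{e}{x}{v}} = \subst{\bget{}{e}}{x}{\bget{}{v}}$, proved by structural induction on $e$; the only non-routine case is $e = \lab*{e_1}$, which unfolds by one layer of erasure and recurses. Second, an evaluation-context lemma: there is a natural erasure $\bget{}{E}$ on contexts (with $\bget{}{\lab*{E}} = \bget{}{E}$) such that $\bget{}{E[e]} = \bget{}{E}[\bget{}{e}]$. These two facts ensure that the erased left-hand and right-hand sides of every non-bracket rule align properly.

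The case analysis then splits into three groups. First, the pure bracket-manipulation rules \ruleref{B-Expand}, \ruleref{B-DeclL}, \ruleref{B-DeclH}, \ruleref{B-EndorseL}, \ruleref{B-EndorseH} emit no events and satisfy $\bget{}{e} = \bget{}{e'}$ by definition of the erasure (since erasure strips all brackets and, in \ruleref{B-DeclL}/\ruleref{B-EndorseL}, also strips the inner bracket before the downgrade is erased to the same underlying $\cdowngrade{\bget{}{v}}{ℓ}$ or $\idowngrade{\bget{}{v}}{ℓ}$); these contribute zero steps in the erased semantics, so $\stepstoctx$ is used reflexively with the empty erased trace. Second, the ordinary $\stepsone$ rules \ruleref{E-App}, \ruleref{E-TApp}, \ruleref{E-UnPair}, \ruleref{E-Case}, \ruleref{E-BindM} descend through the context by \ruleref{E-Step}, and the substitution lemma shows that the erased redex reduces to the erased contractum; the emitted event is $\bullet$, and $\bget{}{\bullet} = \bullet$. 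Third, the event-emitting context rules \ruleref{E-UnitM}, \ruleref{E-Decl}, \ruleref{E-Endorse} work the same way, using the context lemma to recognize $\bget{}{\evalctx*{e}} = \bget{}{E}[\bget{}{e}]$ and noting that $\bget{}{\vreturn{ℓ}{v}} = \vreturn{ℓ}{\bget{}{v}}$ and $\bget{}{\downemit*{π}{v}} = \downemit*{π}{\bget{}{v}}$, so the erased event on the trace matches. Finally, \ruleref{E-Eval} itself is handled by appealing to the inductive hypothesis on the sub-step and then rebuilding with the erased context.

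The main obstacle is the interaction between nested brackets and evaluation contexts. In particular, \ruleref{B-Expand} pushes a bracket outward through a $\bctx$ frame, and after several bracket steps one may need to reduce under several layers of brackets combined with ordinary contexts; keeping track of where the contractum lands relative to the brackets, and verifying that the erasure-of-context lemma correctly collapses all of these layers, is where the bookkeeping is most delicate. Once the context lemma is stated carefully enough to absorb arbitrary bracket nesting, the rest of the proof is a mechanical case check, and iterating the one-step result up to $\stepstoctx$ is immediate by induction on the length of the derivation, concatenating erased traces at each step.
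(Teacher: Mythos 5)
Your proposal is correct and follows essentially the same approach as the paper's proof, which simply observes that every non-bracket step is mirrored exactly by the erased program while the bracket-manipulation rules (\ruleref{B-Expand}, \ruleref{B-DeclL}, \ruleref{B-DeclH}, \ruleref{B-EndorseL}, \ruleref{B-EndorseH}) are dropped and contribute nothing to the trace. You fill in the bookkeeping (step induction, the substitution and context-erasure commutation lemmas) that the paper leaves to ``inspection,'' but the underlying argument is the same.
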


\begin{proof}
  This result follows by inspection on the operational semantics in
  Figures~\ref{fig:full-semantics} and \ref{fig:extensions}.
  Note that every step the non-projected expression takes is mirrored exactly
  by a step the projected value takes except applications of
  \ruleref{B-Expand}, \ruleref{B-DeclL}, \ruleref{B-DeclH},
  \ruleref{B-EndorseL}, and \ruleref{B-EndorseH}.
  Those applications are simply dropped and since they emit no values to the
  trace, the traces remain the same.
\end{proof}

\begin{lemma}[Bracket Completeness]
  \label{lem:complete}
  If $\evalctx{\bget{}{e}}{ϵ} \stepstoctx \evalctx*{e'}$ and $\TValGpc{e}{τ}$,
  then there is some $e''$, $t'$ such that
  $\evalctx{e}{ϵ} \stepstoctx \evalctx{e''}{t'}$.
\end{lemma}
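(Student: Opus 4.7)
The plan is to construct $e''$ by mirroring the reduction sequence $\bget{}{e} \stepstoctx e'$ in the bracketed expression $e$, proceeding by induction on the length of that reduction. The base case (zero steps) is trivial, with $e'' = e$ and $t' = \epsilon$. For the inductive step, I would first establish a single-step completeness lemma: if $\bget{}{e} \stepsonectx e_1$ with emitted trace $t_0$ and $\TValGpc{e}{τ}$, then there exist $e''$ and $t'$ such that $\evalctx{e}{\epsilon} \stepstoctx \evalctx{e''}{t'}$ with $\bget{}{e''} = e_1$ and $\bget{}{t'} = t_0$. Combined with subject reduction (Theorem~\ref{thm:subred}) to preserve well-typedness across iterations, repeated application of this single-step lemma yields the full multi-step result.

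For the single-step lemma, I would proceed by structural induction on $e$ (equivalently, by induction on the evaluation context $E$ such that $\bget{}{e} = E[r]$ for some redex $r$). The cases divide into three groups. First, when $e = \lab*{e_0}$ and the redex lies inside $e_0$: the induction hypothesis applied to $e_0$ gives a reduction $e_0 \stepstoctx e_0'$ with $\bget{}{e_0'} = e_1$, which lifts to a reduction on $e$ using the evaluation context $\lab*{E}$ via \ruleref{E-Eval}. Second, when $e$ has a compound form directly matching the evaluation context of the redex (e.g., $e = e_1'~e_2'$ when the redex lies within the function or argument): simply recurse into the matching subexpression.

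Third, and most delicate, is the case when the redex in $\bget{}{e}$ pattern-matches against a value that appears under a bracket in $e$. For instance, if $\bget{}{e}$ reduces via \ruleref{E-UnPair} applied to $\proji{\pair{v_1}{v_2}}$ but the corresponding pair in $e$ is $\lab*{\pair{v_1'}{v_2'}}$, I would first apply \ruleref{B-Expand} to push the bracket out---yielding $\lab*{\proji{\pair{v_1'}{v_2'}}}$---and then step inside the bracket. Similarly, reductions of bracketed labeled values in downgrade positions use \ruleref{B-DeclL}, \ruleref{B-DeclH}, \ruleref{B-EndorseL}, or \ruleref{B-EndorseH}. The typing assumption $\TValGpc{e}{τ}$ is essential here: the \ruleref{Bracket} rule together with the protection rules (\ruleref{P-Unit}, \ruleref{P-Lbl}, \ruleref{P-Pair}) restricts bracketed values to unit, labeled ($\says{\ell}{\tau}$), and pair types, so a bracket cannot surround a $λ$-term, a type abstraction, or a sum injection---the forms that would otherwise block application, type application, or case analysis without a corresponding bracket-manipulation rule.

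The hard part will be the exhaustive case analysis in the third group, particularly verifying that $\bget{}{e''} = e_1$ after the sequence of bracket-manipulation steps and the matching reduction. This requires care when tracking how bracket-pushing interacts with substitution during \ruleref{E-App}, \ruleref{E-BindM}, or \ruleref{E-Case} steps, but is otherwise mechanical once the case analysis is laid out. Trace correspondence is straightforward because the bracket-manipulation rules emit no events, while \ruleref{B-DeclH} and \ruleref{B-EndorseH} emit exactly the events that \ruleref{E-Decl} and \ruleref{E-Endorse} would emit on the projection, and similarly for the L-variants.
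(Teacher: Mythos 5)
Your proposal is correct and follows essentially the same route as the paper's proof: a single-step analysis in which the only steps that brackets can block are \ruleref{E-UnPair}, \ruleref{E-BindM}, \ruleref{E-Decl}, and \ruleref{E-Endorse} (because typing plus the protection rules confine brackets to unit, pair, and labeled types), and these are unblocked by first applying \ruleref{B-Expand} or the \ruleref{B-DeclL}/\ruleref{B-DeclH}/\ruleref{B-EndorseL}/\ruleref{B-EndorseH} rules. Your explicit strengthening of the induction hypothesis to track the projection of $e''$ and the trace is a useful tightening of what the paper leaves implicit, but it is not a different argument.
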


\begin{proof}
  Assume $\evalctx{\bget{}{e}}{ϵ} \stepstoctx \evalctx*{e'}$ and consider the
  operational semantics in Figures~\ref{fig:full-semantics} and \ref{fig:extensions}.
  We consider a single step in the projected expression which gives us two
  cases.

  If the step happens either entirely within a bracket or entirely outside
  brackets, then the same step must be possible in the original expression.
  If the step in the original expression is not possible because of brackets,
  then there must be brackets around a term other than an arbitrary expression
  or value in the semantic rule employed.
  Since the expression type-checks and by Theorem~\ref{thm:subred}, each
  intermediate expression also type-checks, the protection clause on
  \ruleref{Bracket} and inversion on the protection rules in
  Figure~\ref{fig:core-protect} give us four possible options for the semantic rule
  employed: \ruleref{E-UnPair}, \ruleref{E-BindM}, \ruleref{E-Decl}, and
  \ruleref{E-Endorse}.
  For the first two, we can first step using \ruleref{B-Expand} one or more
  times before applying the original rule.
  For the second two, we can apply \ruleref{B-DeclL}, \ruleref{B-DeclH},
  \ruleref{B-EndorseL}, or \ruleref{B-EndorseH} one or more times again before
  applying the original rule.

  In all cases we see that if the projected term steps and equivalent step is
  possible in the original expression, though possibly after applying one or
  more other steps first.
  This proves the desired result.
\end{proof}

\subsection{Security results}

We now prove the various security results stated throughout the paper.

\begin{lemma}[Release on downgrade]
  \label{lem:down-release}
  Let $\H$ be a high set and $\low = \L \setminus \H$.
  Given a program $e$ such that $\TVal{Γ,x\ty τ_1;\pc}{e}{τ_2}$ with
  $\hightype{\H}{τ_1}$, for all $v_1, v_2$ with $\TValGpc{v_i}{τ_1}$, let
  \[ \evalctx{\subst{e}{x}{\lab*{v_i}}}{v_i} \stepstoctx \evalctx{v'_i}{t_i}. \]
  If $n_1$ and $n_2$ are such that $\nseteq*{\trelt{t^i}{n_i}}{\bullet}$ and
  $\traceeq*{\trpref{t^1}{n_1 - 1}}{\trpref{t^2}{n_2 - 1}}$, then either
  $\trelt{t^i}{n_i} = \downemit*{π}{w_i}$ with $ℓ' ∈ \H$ and $ℓ \notin \H$ for both
  $i = 1,2$, or $\seteq*{\trelt{t^1}{n_1}}{\trelt{t^2}{n_2}}$.
\end{lemma}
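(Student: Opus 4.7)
The plan is to set up a lockstep correspondence between the two executions using Lemma~\ref{lem:sound}, then dichotomize each non-$\bullet$ event by where its firing redex sits relative to any enclosing brackets. Intuitively, every event emitted strictly inside a bracket is high and therefore equivalent to $\bullet$, every event emitted outside all brackets is identical in the two executions, and the only way high information can influence a visible event is via the bracket-removing rules \ruleref{B-DeclL} and \ruleref{B-EndorseL}.

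First I would establish, by induction on reduction length, the invariant $\bget{}{e_1} = \bget{}{e_2}$ and $\bget{}{t^1} = \bget{}{t^2}$ at every corresponding step; this holds initially because $\bget{}{\subst{e}{x}{\lab*{v_i}}} = \subst{e}{x}{v_i}$ does not depend on $i$ (the two $v_i$ occur only inside brackets), and it is preserved step-by-step by Lemma~\ref{lem:sound}. Next I would classify each reduction that emits a non-$\bullet$ event. If the firing redex lies outside every bracket, then the bracket-projected expression determines the event, so the two executions emit identical events. If the redex lies strictly inside some bracket, then \ruleref{Bracket} (preserved by Theorem~\ref{thm:subred}) forces typing at some $\pc' \in \H$, and the $\stflowjudge*{\pc}{\ell}$ premise of \ruleref{UnitM}, \ruleref{Decl}, or \ruleref{Endorse} then forces $\ell \in \H$ by upward-closure of $\H$, and hence the emitted event is $\equiv \bullet$, contradicting the hypothesis. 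The remaining case is firing \ruleref{B-DeclL} or \ruleref{B-EndorseL} on an outermost bracketed value: the side condition of the rule gives $\ell \notin \H$ directly, while inverting \ruleref{Bracket} together with the protection rules on the bracketed value gives $\ell' \in \H$. This is exactly the first disjunct of the lemma's conclusion, and the downgraded values $w_1, w_2$ need not agree.

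Combining the invariant $\bget{}{t^1} = \bget{}{t^2}$ with the prefix-equivalence hypothesis, the non-$\bullet$ events at positions $n_1$ and $n_2$ correspond under the natural bijection on projected traces, so by the classification they must be either identical (second disjunct) or matched bracket-removing downgrades of the stated form (first disjunct). The main obstacle I expect is handling \ruleref{B-Expand}: this rule can fire independently in the two executions, so at intermediate steps the raw expressions may look slightly different even though their bracket-projections agree. I would address this either by strengthening the invariant to hold up to commutation of bracket-propagation steps, or equivalently by eagerly normalizing all available \ruleref{B-Expand} applications before comparing configurations at any step where a non-$\bullet$ event is about to be emitted.
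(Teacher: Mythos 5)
Your proposal rests on an invariant that is false in this setting: you claim $\bget{}{e_1} = \bget{}{e_2}$ holds initially because ``$\bget{}{\subst{e}{x}{\lab*{v_i}}}$ does not depend on $i$.'' But the bracket projection in this paper merely erases the bracket \emph{markers} while keeping their contents, so $\bget{}{\subst{e}{x}{\lab*{v_i}}} = \subst{e}{x}{v_i}$, which plainly depends on $i$. You appear to be treating the brackets as Pottier--Simonet-style two-execution brackets, whereas the paper explicitly uses them only to \emph{track} high data within a single execution; the two runs substitute genuinely different values and need not proceed in lockstep at all (reduction is permitted under $\lab*{E}$ contexts, and the computations inside brackets can differ arbitrarily). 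Consequently the lockstep correspondence and the identity $\bget{}{t^1} = \bget{}{t^2}$ that your whole classification hangs on are unavailable.

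The second, deeper gap is that your case ``redex outside every bracket $\Rightarrow$ identical events'' breaks down as soon as a relevant downgrade occurs \emph{within the prefixes}. The hypothesis only gives $\low$-equivalence of the prefixes, so an earlier \ruleref{B-DeclL} or \ruleref{B-EndorseL} step may have released values that agree only up to nested high $\eta$-subterms; those differing subterms then live \emph{outside} all brackets and can influence later visible events, so ``outside the brackets'' no longer implies ``identical in both runs.'' The paper's proof is organized precisely around this difficulty: it inducts on the number of relevant downgrade events in the prefixes, and in the inductive step it re-wraps the differing high $\eta$-subterms of the first released value in fresh brackets (which still type-check by \ruleref{Bracket}), restoring the invariant that all differences are bracket-enclosed and reducing the count by one. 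Your proposal has no counterpart to this induction or to the re-bracketing construction, and without it the argument only covers executions whose prefixes contain no relevant downgrades at all. The base-case reasoning you give (events inside brackets are forced high by the $\stflowjudge*{\pc}{\ell}$ premises and upward closure, bracket-removal steps yield exactly the first disjunct) does match the paper's base case, but that is only part of the lemma.
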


\begin{proof}
  We refer to elements of the form $\downemit*{π}{w}$ with $ℓ' ∈ \H$ and $ℓ
  \notin \H$ as _relevant downgrade_ elements.
  This is a proof by induction on the number of relevant downgrade elements in
  $\trpref{t^1}{n_1 - 1}$ and $\trpref{t^2}{n_2 - 1}$.
  Note that while the prefixes can contain any number of relevant downgrade
  elements, the prefixes are $\low$-equivalent, so the downgraded values must
  also be $\low$-equivalent.
  In particular, there must be the same number in each trace prefix.

  As the base case, assume there are no such values in the trace prefixes.
  We first claim that there can be no application of \ruleref{B-DeclL} or
  \ruleref{B-EndorseL} prior to the current step.

  By the typing rules \ruleref{Decl} and \ruleref{Endorse}, that would require
  the value inside the bracket to have type $\says{ℓ'}{τ}$ for some type $τ$.
  The protection clause on \ruleref{Bracket} and the protection rules in
  Figure~\ref{fig:full-protect} would thus require there to be some $H ∈ \H$ such
  that $\stflowjudge*{H}{ℓ'}$, which in turn means $ℓ' ∈ \H$.
  However, application of \ruleref{B-DeclL} or \ruleref{B-EndorseL} requires
  $ℓ \notin \H$.
  Therefore if either rule is applied, the next value emitted to the trace must
  be $\downemit*{π}{w}$ where $ℓ' ∈ \H$ and $ℓ \notin \H$.
  While $\trelt{t^i}{n_i}$ may be of that form, we assumed that no prior values are.

  Next we claim that any differences between the two prefixes must be emitted
  from within brackets.
  The initial expression in each case differs only in the substituted value,
  which is inside brackets.
  If we syntactically examine the expression at every step of evaluation, we
  notice that no semantic rules allow anything inside brackets to affect
  anything outside brackets except through the \ruleref{B-DeclL} and
  \ruleref{B-EndorseL} rules, which remove brackets.
  Since those rules are never applied, all differences must be contained within
  brackets.
  This means that all differences within the prefixes must have been emitted
  from inside brackets.

  Finally, we claim that if a trace element $c$ is emitted from within a
  bracket, then $\seteq*{c}{\bullet}$.
  The only three rules that emit trace elements are \ruleref{E-UnitM},
  \ruleref{E-Decl}, and \ruleref{E-Endorse}.
  By Theorem~\ref{thm:subred}, the expression stepping must type check, and
  each of the corresponding typing rules (\ruleref{UnitM}, \ruleref{Decl}, and
  \ruleref{Endorse}) contain the condition $\stflowjudge*{\pc}{ℓ}$.
  By \ruleref{Bracket}, if the expression is inside a bracket then $\pc ∈ \H$
  and since $\H$ is upward closed, this means $ℓ ∈ \H$.
  Thus by \ruleref{Eq-UnitM} or \ruleref{Eq-Down}, $\seteq*{c}{\bullet}$.

  Coupled with the above logic about applications of \ruleref{B-DeclL} and
  \ruleref{B-EndorseL} and the fact that $\nseteq*{\trelt{t^i}{n_i}}{\bullet}$, either
  $\seteq*{\trelt{t^1}{n_1}}{\trelt{t^2}{n_2}}$, or both are the result of downgrades that
  resulted in applications of either \ruleref{B-DeclL} or \ruleref{B-EndorseL}.
  In the latter case both are of the form $\downemit*{π}{w}$ where $ℓ' ∈ \H$
  and $ℓ \notin \H$, as desired.

  Now we assume that there is at least one such expression in the trace
  prefixes, but those prefixes are still $\low$-equivalent.
  Take the first such trace element.
  This element must appear in both trace prefixes.
  Let $e_1$ and $e_2$ be the top-level expressions that stepped in each trace
  to emit the downgrade element.
  The two expressions can differ inside brackets, and can differ inside the
  downgraded value (that was previously inside brackets).
  Because the downgraded values are $\low$-equivalent, any such differences
  must be contained inside terms of the form $\vreturn{ℓ}{w}$ where $ℓ ∈ \H$.
  We can replace any such terms by $\lab*{\vreturnp{ℓ}{w}}$ in the expression
  and it will still type-check since $ℓ ∈ \H$ and $w$ type-checks in any $\pc$.
  This results in a new pair of expressions that are $\low$-equivalent except
  inside brackets.
  By Lemmas~\ref{lem:sound} and \ref{lem:complete} these expression generate
  traces that are equivalent to the original up to brackets and bullets.
  By construction, these new expressions generate trace prefixes with one less
  $\downemit*{π}{w}$ element than the original expressions, so by the inductive
  hypothesis, the result holds.

  Since the same brackets were necessarily added in both expressions, the new
  expressions will thus generate equivalent traces if and only if the old
  expressions did.
  Since the trace prefixes prior to this modification were equivalent, if the
  new expressions generate equivalent traces, the original expressions must
  also generate equivalent traces.
  Thus the result holds for the original traces as well.
\end{proof}

\begin{retheorem}{thm:no-downgrade-ni}[Noninterference of non-downgrading programs]
  Let $\H$ be a high set and let $\low = \L \setminus \H$.
  Given an expression $e$ such that $\TVal{Γ,x\ty τ_1;\pc}{e}{τ_2}$ where
  $\hightype{\H}{τ_1}$, for all $v_1, v_2$ with $\TValGpc{v_i}{τ_1}$, if
  \[ \evalctx{\subst{e}{x}{v_i}}{v_i} \stepstoctx \evalctx{v'_i}{t^i} \]
  then either there is some $\downemit*{π}{w} ∈ t^i$ where $ℓ' ∈ \H$ and
  $ℓ \notin \H$, or $\traceeq*{t^1}{t^2}$.
\end{retheorem}

\begin{proof}
  Since $\TValGpc{v_i}{τ_1}$ and $\hightype{\H}{τ_1}$, \ruleref{Bracket} says
  $\TValGpc{\lab*{v_i}}{τ_1}$.
  Lemmas~\ref{lem:sound} and \ref{lem:complete} tell us that the result holds
  as stated above if and only if it holds when substituting $\lab*{v_i}$
  instead of $v_i$.
  From there we can apply Lemma~\ref{lem:down-release}.
  If there are no $\downemit*{π}{w}$ events in either trace with $ℓ' ∈ \H$ and
  $ℓ \notin \H$, then each $\low$-visible term in each trace must be equivalent
  to each other by induction on the number of elements (empty traces are
  equivalent).
\end{proof}

\begin{retheorem}{thm:high-pc-ni}[Noninterference of high-$\pc$ programs]
  Let $\A$ be an attacker inducing high sets $\U$ and $\secret$.
  Let $\H$ be one of those high sets and $\low = \L \setminus \H$.
  Given an expression $e$ such that $\TVal{Γ,x\ty τ_1;\pc}{e}{τ_2}$ where
  $\hightype{\H}{τ_1}$, for all $v_1, v_2$ with $\TValGpc{v_i}{τ_1}$, if
  $\evalctx{\subst{e}{x}{v_i}}{v_i} \stepstoctx \evalctx{v'_i}{t^i}$
  and $\pc ∈ \U ∪ \secret$, then $\traceeq*{t^1}{t^2}$.
\end{retheorem}

\begin{proof}
  We claim that neither trace contains any $\downemit*{π}{w}$ elements where
  $ℓ' ∈ \H$ and $ℓ \notin \H$, thus reducing this to
  Theorem~\ref{thm:no-downgrade-ni}.
  There are three cases to consider here: $\pc ∈ \H$, $\pc ∈ \U$ and
  $\H = \secret$, and $\pc ∈ \secret$ and $\H = \U$.

  \begin{case}[$\pc ∈ \H$]
    Both \ruleref{Decl} and \ruleref{Endorse} require $\stflowjudge*{\pc}{ℓ}$,
    so by upward-closure of $\H$, $ℓ ∈ \H$.
  \end{case}

  \begin{case}[$\pc ∈ \U$ and $\H = \secret$]
    \ruleref{Decl} contains the condition
    $\stflowjudge*{ℓ'^{→}}{ℓ^{→} ⊔ \view{(ℓ' ⊔ \pc)^{←}}}$.
    Converting into the authority lattice, this gives us
    $\stafjudge*{ℓ^{→} ∧ \view{(ℓ' ∨ \pc)^{←}}}{ℓ'^{→}}$.
    Since $\U = \A^{←}$, Condition~\ref{li:atk:disj-with-a} of
    Definition~\ref{def:attacker} means $(ℓ' ∨ \pc)^{←} ∈ \A^{←}$ and thus
    Condition~\ref{li:atk:voice-and-view} means that $\view{(ℓ' ∨ \pc)^{←}} ∈ \A^{→}$,
    so by Condition~\ref{li:atk:a-closed-conj} of Definition~\ref{def:attacker}, if
    $ℓ^{→} ∈ \A^{→}$, then by transitivity $ℓ'^{→} ∈ \A^{→}$.
    Since $\H = \secret = \L \setminus (\A^{→})$, this means $ℓ \notin \H$ only
    if $ℓ' \notin \H$.
  \end{case}

  \begin{case}[$\pc ∈ \secret$ and $\H = \U$]
    This argument is nearly identical to the previous case.
    \ruleref{Endorse} means $\stflowjudge*{ℓ'^{←}}{ℓ^{←} ⊔ \voice{(ℓ' ⊔ \pc)^{→}}}$
    which, in the authority lattice, means
    $\stafjudge*{ℓ'^{←}}{ℓ^{←} ∨ \voice{(ℓ' ∧ \pc)^{→}}}$.
    Since $\pc ∈ \secret = \L \setminus (\A^{→})$,
    Condition~\ref{li:atk:conj-with-not-a} of Definition~\ref{def:attacker}
    ensures $(ℓ' ∧ \pc)^{→} ∈ \secret$.
    Thus Condition~\ref{li:atk:voice-and-view} means $\voice{(ℓ' ∧ \pc)^{→}} \notin \A^{←} = \H$,
    so Condition~\ref{li:atk:not-a-closed-disj} and transitivity mean that
    $ℓ^{←} \notin \H$ only if $ℓ'^{←} \notin \H$.
  \end{case}
\end{proof}

\begin{retheorem}{thm:secret-untrust-ni}[Noninterference of secret–untrusted data]
  Let $\A$ be an attacker inducing high sets $\U$ and $\secret$.
  Let $\H = \U ∩ \secret$ and $\low = \L \setminus \H$.
  Given an expression $e$ such that $\TVal{Γ,x\ty τ_1;\pc}{e}{τ_2}$ where
  $\hightype{\H}{τ_1}$, for all $v_1, v_2$ with $\TValGpc{v_i}{τ_1}$, if
  $\evalctx{\subst{e}{x}{v_i}}{v_i} \stepstoctx \evalctx{v'_i}{t^i}$
  then $\traceeq*{t^1}{t^2}$.
\end{retheorem}

\begin{proof}
  First we note that $\H$ is a high set as the intersection of two upward
  closed sets is also upward closed.
  We claim that neither trace contains any $\downemit*{π}{w}$ elements where
  $ℓ' ∈ \H$ and $ℓ \notin \H$, thus reducing this to
  Theorem~\ref{thm:no-downgrade-ni}.
  We will cover the two cases of $π$ separately despite their similarities.

  \begin{case}[$π = {→}$]
    The only way to emit $\downemit*{→}{w}$ is through \ruleref{E-Decl}.
    By Theorem~\ref{thm:subred}, each intermediate expression type-checks under
    the same initial context.
    This is a sub-expression, meaning there is some $Γ'$, $\pc'$, and $τ$ such that
    \[ \TVal{Γ';\pc'}{\cdowngrade{\vreturn{ℓ'}{v}}{ℓ}}{\says{ℓ}{τ}}. \]
    Therefore \ruleref{Decl} tells us that
    $\stflowjudge*{ℓ'^{→}}{ℓ^{→} ⊔ \view{(ℓ' ⊔ \pc')^{←}}}$.
    Every typing rule either moves $\pc$ up the lattice (by joining it with
    another label) or leaves it unchanged, so $\stflowjudge*{\pc}{\pc'}$ and
    therefore $\stflowjudge*{\view{\pc'^{←}}}{\view{\pc^{←}}}$, meaning
    $\stflowjudge*{ℓ'^{→}}{ℓ^{→} ⊔ \view{(ℓ' ⊔ \pc)^{←}}}$.
    Converting to the authority lattice gives us
    $\stafjudge*{ℓ^{→} ∧ \view{(ℓ' ∨ \pc)^{←}}}{ℓ'^{→}}$.

    If $ℓ' ∈ \H$, then $ℓ'^{←} ∈ \U = \A^{←}$, so by the same argument as in
    the proof of Theorem~\ref{thm:high-pc-ni}, Definition~\ref{def:attacker}
    gives us that $\view{(ℓ' ∨ \pc)^{←}} ∈ \A^{→}$.
    Thus by Condition~\ref{li:atk:a-closed-conj} of
    Definition~\ref{def:attacker} and transitivity, if $ℓ^{→} ∈ \A^{→}$ it must
    also be the case that $ℓ'^{→} ∈ \A^{→}$.
    However, $ℓ' ∈ \H$, so $ℓ'^{→} ∈ \secret = \L \setminus (\A^{→})$.
    Therefore $ℓ^{→} ∈ \secret$.
    Since $ℓ' ∈ \H$ and $ℓ'^{←} = ℓ^{←}$, this means $ℓ ∈ \H$.
  \end{case}

  \begin{case}[$π = {←}$]
    Dual to the above case, $\downemit*{←}{w}$ must be emitted from
    \ruleref{E-Endorse}, so Theorem~\ref{thm:subred} now implies that there is
    some $Γ'$, $\pc'$, and $τ$ such that
    \[ \TVal{Γ';\pc'}{\idowngrade{\vreturn{ℓ'}{v}}{ℓ}}{τ}. \]
    Since $\stflowjudge*{\pc}{\pc'}$ by the same logic as above,
    \ruleref{Endorse} tells us
    $\stflowjudge*{ℓ'^{←}}{ℓ^{←} ⊔ \voice{(ℓ' ⊔ \pc)^{→}}}$.
    If $ℓ' ∈ \H$, then $ℓ^{←} ⊔ \voice{(ℓ' ⊔ \pc)^{→}} ∈ \H$.
    Since $ℓ'^{→} ∈ \H^{→} = \L \setminus (\A^{→})$, we know that
    $\voice{(ℓ' ⊔ \pc)^{→}} \notin \A^{←}$ by the same argument as in the proof
    of Theorem~\ref{thm:high-pc-ni}.
    Since $ℓ^{←} ⊔ \voice{(ℓ' ⊔ \pc)^{→}} = ℓ^{←} ∨ \voice{(ℓ' ∧ \pc)^{→}}$,
    Condition~\ref{li:atk:not-a-closed-disj} of Definition~\ref{def:attacker}
    requires that $ℓ^{←} ∈ \A^{←} = \U$.
    Since $ℓ^{→} = ℓ'^{→} ∈ \secret$, we see that $ℓ ∈ \H$.
  \end{case}

  Thus we see that for any trace element $\downemit*{π}{w}$, if $ℓ' ∈ \H$, then
  $ℓ ∈ \H$, so by Theorem~\ref{thm:no-downgrade-ni} the result holds.
\end{proof}

\begin{retheorem}{thm:nmifc}[Nonmalleable information flow]
  For any program $e$ such that $\TVal{Γ,x\ty τ_x,y\ty τ_y;\pc}{e}{τ}$, $e$ enforces NMIF.
\end{retheorem}

\begin{proof}
  We provide here only a proof of case~\ref{li:nmifc:rd} of Definition~\ref{def:nmifc}.
  All statements in case~\ref{li:nmifc:te} are exactly dual so a precisely dual
  argument holds.

  Let $\A$ be an attacker inducing high sets $\U$ and $\secret$ and let
  $\trusted = \L \setminus \U$, $\public = \L \setminus \secret$, and
  $\low = \trusted ∩ \public$.
  To prove case~\ref{li:nmifc:rd} we prove a contrapositive of the stated
  implication.
  Specifically, if $\ntraceeq{\public}{\trpref{t^{12}}{n_{12}}}{\trpref{t^{22}}{n_{22}}}$ but
  $\rel*{←}{w_1}$, then $\ntraceeq{\public}{\trpref{t^{11}}{n_{11}}}{\trpref{t^{21}}{n_{21}}}$.

  First we note that the theorem is uninteresting unless
  $\hightype{\secret}{τ_x}$ and $\hightype{\U}{τ_y}$.
  Since $v_i$ and $w_j$ are both present in trace $t^{ij}$, if
  $\nseteq{\public}{v_1}{v_2}$ or $\nseteq{\trusted}{w_1}{w_2}$, the theorem is
  trivially true.
  In the former case, varying the first input clearly results in non-equivalent
  traces for both attacks.
  In the latter case, the result holds for $n_{ij} = 1$ and the precondition is
  clearly false otherwise.
  Now consider the case where $\seteq{\public}{v_1}{v_2}$ and
  $\seteq{\trusted}{w_1}{w_2}$.
  If $\hightype{\secret}{τ_x}$ and $\hightype{\U}{τ_y}$, then these are
  trivially true---the interesting case we will handle below.
  Otherwise the equivalences require the contents of the value to be identical
  (except for nested secret/untrusted values).
  Any identical values clearly cannot result in distinguishable traces.
  We can handle nested secret/untrusted values by fixing the public/trusted
  parts of the inputs and viewing those values as the inputs instead, thus
  reducing to the case where $\hightype{\secret}{τ_x}$ and
  $\hightype{\U}{τ_y}$.\footnote{This technically reduces to the many-input
  version of the theorem that we prove is equivalent in
  Appendix~\ref{sec:nmif-generalize}.}
  The rest of the proof assumes we are in this case.

  By assumption $\ntraceeq{\public}{\trpref{t^{12}}{n_{12}}}{\trpref{t^{22}}{n_{22}}}$,
  there is some point at which the traces become distinguishable.
  This can happen for one of two reasons: one trace is a prefix of the other
  (up to extra $\bullet$-equivalent terms), or there are two non-$\bullet$
  terms that are non-equivalent.

  In the first case, without loss of generality assume $\trpref{t^{12}}{n_{12}}$ is a
  prefix of $\trpref{t^{22}}{n_{22}}$.
  We claim that  $\trpref{t^{11}}{n_{11}}$ is a prefix of $\trpref{t^{21}}{n_{21}}$.
  Lemma~\ref{lem:down-release} ensures that any difference must come from a
  downgrade, which must be a declassification because \ruleref{Endorse}
  prohibits changing the confidentiality and $\public$ allows labels of any
  integrity.
  By the argument in Theorem~\ref{thm:secret-untrust-ni} this declassification
  event must be on trusted data, thus producing a public–trusted output.
  Thus the condition that
  $\traceeq{\trusted}{\trpref{t^{i1}}{n_{i1} - 1}}{\trpref{t^{i2}}{n_{i2} - 1}}$ ensures that
  any declassifications appearing in $\trpref{t^{21}}{n_{21} - 1}$ appear identically
  in $\trpref{t^{11}}{n_{11} - 1}$, and similarly for $\trpref{t^{22}}{n_{22} - 1}$ and
  $\trpref{t^{12}}{n_{12} - 1}$.
  Since $\nseteq*{\trelt{t^{ij}}{n_{ij}}}{\bullet}$, any public–untrusted events
  appearing in $\trpref{t^{2j}}{n_{2j}}$ must also appear in $\trpref{t^{1j}}{n_{1j}}$, and
  thus $\trpref{t^{11}}{n_{11}}$ must be a prefix of $\trpref{t^{21}}{n_{21}}$.

  For the second case---two non-$\bullet$ terms are non-equivalent---let
  $n'_{i2}$ be the indices of the first such terms in each trace.
  That is, $\nseteq{\public}{\trelt{t^{i2}}{n'_{i2}}}{\bullet}$ and
  $\nseteq{\public}{\trelt{t^{12}}{n'_{12}}}{\trelt{t^{22}}{n'_{22}}}$, but
  $\traceeq{\public}{\trpref{t^{12}}{n'_{12} - 1}}{\trpref{t^{22}}{n'_{22} - 1}}$.
  Again Lemma~\ref{lem:down-release} ensures that the first non-equivalence must
  be the result of a declassification that exists in both traces.

  By construction $n'_{ij} ≤ n_{ij}$ and we have that
  $\traceeq{\public}{\trpref{t^{12}}{n'_{12}-1}}{\trpref{t^{22}}{n'_{22}-1}}$ by the definition
  of $n'_{ij}$ and $\traceeq{\trusted}{\trpref{t^{i1}}{n_{i1}-1}}{\trpref{t^{i2}}{n_{i2}-1}}$
  by assumption.
  Consequently, for all $i, j, k, l ∈ \{1,2\}$, we have
  \[ \traceeq*{\trpref{t^{ij}}{n'_{ij}-1}}{\trpref{t^{kl}}{n'_{kl}-1}}. \]
  In particular, this is true for $t^{11}$ and $t^{21}$.
  We also know that for all $i,j ∈ \{1,2\}$ $\trelt{t^{ij}}{n'_{ij}} =
  \downemit*{→}{w_{ij}}$ for some value $w_{ij}$.

  We know that $\nseteq{\public}{w_{12}}{w_{22}}$, and we now claim that
  $\nseteq{\public}{w_{11}}{w_{21}}$.
  To emit this value, \ruleref{E-Decl} must have been applied, meaning there
  must have been a term $\vreturn{ℓ'}{w_{ij}}$ in the preceding expression.
  Since such expressions are not in the source language---they can only be
  created by applications of \ruleref{E-UnitM}, \ruleref{E-Decl}, and
  \ruleref{E-Endorse}---this expression must appear previously in each trace.
  We know that $ℓ ∈ \public ∩ \trusted$ and \ruleref{Decl} requires that
  $ℓ'^{←} = ℓ^{←}$, so therefore $ℓ' ∈ \trusted$.
  Since the prefixes prior to this event are trusted-equivalent when varying
  only attacks, $\seteq{\trusted}{w_{i1}}{w_{i2}}$ for $i = 1,2$.
  By assumption $\nseteq{\public}{w_{12}}{w_{22}}$ and this is the first
  difference in those traces.
  This means that if $\seteq{\public}{w_{11}}{w_{21}}$ then $w_{12}$ and
  $w_{22}$ must have differed only in untrusted public values nested inside the
  original declassification (i.e., $\weakseteq{\A}{→}{w_{12}}{w_{22}}$).
  Therefore if $\seteq{\public}{w_{11}}{w_{21}}$, then $v_1$, $v_2$, and $w_2$
  are exactly the inputs needed in Definition~\ref{def:irrel-input} to
  demonstrate that $w_1$ is an irrelevant attack.
  We assumed that $\rel*{←}{w_1}$, so $\nseteq{\public}{w_{11}}{w_{21}}$ and
  therefore $\ntraceeq{\public}{\trpref{t^{11}}{n_{11}}}{\trpref{t^{21}}{n_{21}}}$, proving our
  desired result.
\end{proof}

\end{document}